\documentclass[acmsmall,screen]{acmart}
\AtBeginDocument{%
  }

\setcopyright{none}
\copyrightyear{2018}
\acmYear{2018}
\acmDOI{XXXXXXX.XXXXXXX}
\usepackage{booktabs}   
\usepackage{graphicx}
\usepackage{subcaption} 
\usepackage{caption} 
\usepackage{comment}
\usepackage{hyperref}
\usepackage{bussproofs}
\usepackage{float}
\usepackage{tikz}
\usepackage{xspace}
\usepackage{xcolor}
\usepackage{pgfplots}
\pgfplotsset{compat=1.18}
\usepackage{geometry}

\usetikzlibrary{automata, positioning, arrows,calc, positioning, shapes, arrows.meta}
\usepackage{amsthm}
\usepackage[longend,ruled,linesnumbered]{algorithm2e}
\newtheorem{definition}{Definition}
\usepackage{amsmath}

\usepackage{amssymb}

\newif\ifshowcomments
\showcommentstrue   

\newcommand{\zak}[1]{\ifshowcomments{\color{red}Zak: #1}\fi}
\newcommand{\han}[1]{\ifshowcomments{\color{orange}Han: #1}\fi}
\newcommand{\dpw}[1]{\ifshowcomments{\color{blue}Dave: #1}\fi}

\newcommand{\denote}[1]{\ensuremath{\llbracket\texttt{$#1$}\rrbracket}}

\newcommand{\dpn}[1]{\ensuremath{\mathsf{PN}\llbracket\texttt{$#1$}\rrbracket}}
\newcommand{\dq}[1]{\ensuremath{\mathsf{Q}\llbracket\texttt{$#1$}\rrbracket}}

\newcommand{\dpr}[1]{\ensuremath{\mathsf{RN}\llbracket\texttt{$#1$}\rrbracket}}
\newcommand{\drn}[1]{\dpr{#1}}
\newcommand{\dw}[1]{\ensuremath{\mathsf{WN}\llbracket\texttt{$#1$}\rrbracket}}
\newcommand{\dwn}[1]{\ensuremath{\mathsf{WN}\llbracket\texttt{$#1$}\rrbracket}}
\newcommand{\dpkr}[1]{\ensuremath{\mathsf{PkR}\llbracket\texttt{$#1$}\rrbracket}}
\newcommand{\dpred}[1]{\ensuremath{\mathsf{Pred}\llbracket\texttt{$#1$}\rrbracket}}
\newcommand{\power}[1]{\ensuremath{\mathcal{P}(#1)}}

\newcommand{\OMIT}[1]{}
\newcommand{\defeq}{\triangleq}

\newcommand{\eqby}[3]{\mathrel{\overset{#1}{\underset{#2}{#3}}}}

\newcommand{\PR}{\ensuremath{PkR}} 

\newcommand{\pred}{\ensuremath{\textit{pred}}}

\newcommand{\pn}{\ensuremath{PN}}

\newcommand{\pk}{\ensuremath{pk}}

\newcommand{\trace}{\ensuremath{tr}}
\newcommand{\trcat}{\ensuremath{~}}

\newcommand{\pand}{\ensuremath{\cdot}}
\newcommand{\por}{\ensuremath{+}}
\newcommand{\pnot}{\ensuremath{\neg}}
\newcommand{\ptrue}{\ensuremath{1}}
\newcommand{\pfalse}{\ensuremath{0}}

\newcommand{\field}{\ensuremath{f}}
\newcommand{\const}{\ensuremath{c}}
\newcommand{\val}{\ensuremath{\mathit{val}}}
\newcommand{\srcip}{\ensuremath{\texttt{src.ip}}}
\newcommand{\dstip}{\ensuremath{\texttt{dst.ip}}}

\newcommand{\locf}{\ensuremath{\texttt{loc}}}

\newcommand{\uppkr}[2]{\ensuremath{#1 \leftarrow #2}}

\newcommand{\lift}[1]{\ensuremath{\overline{#1}}}
\newcommand{\havoc}{\ensuremath{\textit{havoc}}}

\newcommand{\idpkr}{\ensuremath{\lift{1}}}
\newcommand{\empkr}{\ensuremath{\lift{0}}}
\newcommand{\comppkr}{\ensuremath{\circ}}
\newcommand{\crosspkr}{\ensuremath{\times}}
\newcommand{\andpkr}{\ensuremath{\cap}}
\newcommand{\orpkr}{\ensuremath{\cup}}
\newcommand{\notpkr}{\ensuremath{\neg}}

\newcommand{\kand}{\ensuremath{\circ}}
\newcommand{\kor}{\ensuremath{+}}
\newcommand{\kcap}{\ensuremath{\cap}}
\newcommand{\kdiff}{\ensuremath{\backslash}}
\newcommand{\ks}[1]{\ensuremath{{#1}^*}}
\newcommand{\kdup}{\ensuremath{\mathit{dup}}}
\newcommand{\alltraces}{\ensuremath{\textit{alltraces}}}

\newcommand{\rapp}{\triangleright}

\newcommand{\rfilter}[1]{\ensuremath{\mathit{Filter}(#1)}}
\newcommand{\idname}{\ensuremath{\mathit{Id}}}
\newcommand{\rid}[1]{\ensuremath{\idname(#1)}}
\newcommand{\mapname}{\ensuremath{\mathit{Map}}}
\newcommand{\rmap}[2]{\ensuremath{\mapname(#1,#2)}}
\newcommand{\rdelete}[1]{\ensuremath{\mathit{Delete}(#1)}}
\newcommand{\rinsert}[1]{\ensuremath{\mathit{Insert}(#1)}}

\newcommand{\ror}{\ensuremath{+}}
\newcommand{\rconcat}{\ensuremath{\cdot}}
\newcommand{\rs}[1]{\ensuremath{{#1}^*}}

\newcommand{\wand}{\ensuremath{\circ}}
\newcommand{\wor}{\ensuremath{+}}
\newcommand{\wtimes}{\ensuremath{\otimes}}
\newcommand{\wres}[2]{\ensuremath{#1\restriction #2}}

\newcommand{\qs}[2]{\ensuremath{\mathit{Select}(#1,#2)}}

\newcommand{\ipred}[1]{\ensuremath{\mathbb{I}_{Pred}(\texttt{$#1$})}}
\newcommand{\ipkr}[1]{\ensuremath{\mathbb{I}_{PkR}(\texttt{$#1$})}}
\newcommand{\ipn}[1]{\ensuremath{\mathbb{I}_{PN}(\texttt{$#1$})}}
\newcommand{\irn}[1]{\ensuremath{\mathbb{I}_{RN}(\texttt{$#1$})}}
\newcommand{\iwn}[1]{\ensuremath{\mathbb{I}_{WN}(\texttt{$#1$})}}

\begin{document}

\title[Network Analysis with Parametric NetKAT]{Network Analysis with Parametric NetKAT}         


\author{Han Xu}
\email{hx3501@princeton.edu}          
\orcid{0000-0002-2548-6866}

\affiliation{
  \institution{Princeton University}            
  \country{United States}                    
}

\author{Zachary Kincaid}
\email{zkincaid@cs.princeton.edu}          
\orcid{0000-0002-7294-9165}

\affiliation{
  \institution{Princeton University}            
  \country{United States}                    
}

\OMIT{
\author{Ratul Mahajan}
\email{ratul@cs.washington.edu}          
\orcid{0009-0005-8005-6948}

\affiliation{
  \institution{University of Washington}            
  \country{United States}                    
}
}

\author{David Walker}
\email{dpw@princeton.edu}          
\orcid{0000-0003-3681-149X}

\affiliation{
  \institution{Princeton University}            
  \country{United States}                    
}

\begin{abstract}
Network engineers often need to perform network diagnosis and inference tasks, which frequently require answers to \emph{enumeration questions} such as ``\emph{Which packets} from the Internet arrive at host \(C\)?'' or ``\emph{Which single-link failures} disconnect my network?''
Parametric NetKAT is a new domain-specific language that combines elements of NetKAT, Relational NetKAT, and Weighted NetKAT into a single system and extends them with parameters, allowing users to pose such enumeration questions directly over network models.
This paper presents the design and semantics of Parametric NetKAT and illustrates its utility through a series of examples.
It shows how to compile Parametric NetKAT into NetKAT automata, develops new algorithms for efficiently collecting satisfying valuations, and proves the correctness of these procedures.
Finally, it evaluates the performance of Parametric NetKAT on a collection of benchmarks drawn from industrial sources.
\end{abstract}




\maketitle
\section{Introduction}

Networks are the arteries of the modern online world, keeping people and businesses connected to essential services.
Unfortunately, they are also complex: difficult to manage, hard to modify reliably, and challenging to diagnose when failures occur.
Moreover, when networks go down, the economic and societal consequences can be enormous.
For example, in 2022, an outage at Rogers Canada~\cite{rogers-outage} left 12 million people without connectivity, prevented businesses nationwide from accepting debit transactions, disrupted access to 911 emergency services, delayed radiation therapy for cancer patients, and affected a wide range of government  activities.

One way to reduce the risk of such events is to deploy network-verification technologies such as NetKAT~\cite{netkat}.
NetKAT is both a domain-specific language for modeling networks and a specification language for expressing network properties.
To use a NetKAT system, an engineer invokes a modeling program that parses raw network configurations and produces a NetKAT model of the network's packet-forwarding behavior.
Once this semantic model has been constructed, a variety of yes-or-no verification questions can be posed:

\begin{enumerate}
\item Are all packets from the internet destined to port 80 blocked before they arrive at host C?
\item Do packets sent from A with destination IP address 10.0.0.0 arrive at host B?
\item Do all packets originating outside the network travel through firewalls (FW1 or FW2) prior to arriving at hosts A, B, or C? 
\end{enumerate}

Such questions are phrased as equations or inequations between NetKAT expressions (including the expression generated automatically via the modeling program). The expressions are then compiled into automata, and automata-theoretic decision procedures provide the answers.  Recently, several productive extensions to the paradigm have been proposed, including extensions that add probabilities~\cite{prob-netkat}, which may be used to assess properties such as probabilistic reliability, or more generally weights~\cite{acevedo26weight}, which allow analysis of quantities such latency.  In addition, NetKAT relations~\cite{han26relational} allow compact specification and verification of network changes.
Overall, NetKAT’s combination of programmability and expressive specification language, rigorous denotational semantics, and strong empirical results from a number of implementations~\cite{foster15coal,mcnetkat,Moeller24Katch,han25artifact} make it an attractive framework for network verification despite the presence of many alternatives~\cite{hsa,veriflow,netplumber,deltanet}---more discussion of related work may be found in Section~\ref{sec:relatedwork}.

While NetKAT has focused on \emph{yes-no verification questions}, these are not the only questions network engineers want to answer.  Engineers often need to engage in network \emph{diagnosis} or \emph{inference} tasks, which require answering \textit{enumeration questions}---i.e., find the set of all solutions to some problem of interest.
Such questions include:
\begin{enumerate}
\item Which packets from the internet destined to port 80 arrive at host C?
\item Which filtering rules in the Firewall are responsible for blocking traffic from A to B?
\item Which single-link failures disconnect my network?
\item Which packets follow paths that exceed the maximum allowed length?
\end{enumerate}

In this paper, we introduce \emph{Parametric NetKAT}, a new domain-specific language for answering such enumeration questions.  Parametric NetKAT builds on NetKAT, Relational NetKAT, and Weighted NetKAT, and extends these sublanguages with \emph{parameters}.  The Parametric NetKAT solver returns
all valuations of the parameters that satisfy a user's constraint system.
While expanding the set of questions that may be answered, Parametric NetKAT retains the many benefits familiar to other NetKAT languages:  Rich network modeling capabilities, compositional and modular design, compact specifications, a rigorous semantics, and efficient algorithms. Moreover, by integrating
three sublanguages (NetKAT, Relational NetKAT, Weighted NetKAT), we bring under one umbrella a wider range of
modeling and querying infrastructure than ever before.
For example, while "classic" NetKAT serves as the base network modeling language, Relational NetKAT may be used to select or transform elements of the modeled network, and also to insert parameters at points of interest. Such facilities may be combined with elements of Weighted NetKAT to measure and constrain quantitative properties such as cost, latency, reliability, or path length.  


\OMIT{
As with ordinary NetKAT, Parametric NetKAT relies on a modeling program to convert network configurations of interest into a NetKAT expression that represents their semantics. With that expression as a building block, an engineer may write queries to extract information
of interest.  Typically, such queries involve using
elements of   By combining 
NetKAT, Relational NetKAT and Weighted NetKAT in
a single overarching language extended with parameters,
we can use any or all of the features as needed.
\dpw{tried to motivate the combination...}}

In principle, enumeration problems can be solved by existing (Weighted, Relational) NetKAT decision procedures by simply issuing a verification query for each parameter valuation.  However, when the set of parameter valuations is large, this approach is infeasible.  We develop a suite of algorithms for solving parameter enumeration problems that have the same complexity as a single verification query (albeit for NetKAT expressions operating on a larger packet space).   By representing the packet space (and sets of parameter valuations) symbolically using binary decision diagrams, we obtain practical algorithms for Parametric NetKAT that scale to industrially relevant network sizes and parameter spaces.

\OMIT{
\zak{I don't think that the average POPL reader will understand this conversion step.  I suspect that most will just assume that people write NetKAT directly.  I had a conversation at PLDI with one of the weighted netkat authors and I asked about their pipeline for creating weighted netkat expressions, and they didn't know what I was talking about -- they just assume that the weighted netkat expression is the input.  (If humans don't write it, what's the point of it being a "language" in the first place?)}\dpw{Edited above and also earlier when NetKAT was described.  Let me know if that works.}
\zak{I think maybe this next paragraph is trying to motivate why we're interested in solving boolean combinations of equations rather than a single equation?  If that's the point I think we can be more abstract here and clarify in section 3 (the reader does not yet appreciate how we pose problems using equations, so cannot appreciate the limitations of single equations).  Parametric NetKAT equations can naturally be composed to form more complex queries---asking, for parameter valuations that satisfy an arbitrary boolean combination of equations.
}\dpw{deleted that paragraph}
}
\OMIT{
For example,
when a network engineer observes that certain unexpected packets are able to reach a particular destination, in violation of a security protocol, the engineer can ask "what do those packets look like prior to entering the network firewall?"  The answer to such a question provides the parameters needed to construct a new firewall rule.
More complex queries have multiple parts.  For example, when attempting a network change, a engineer might ask "(1) What forwarding rule change suffices to reroute traffic from A to C while ensuring that (2) flows from B to D \emph{are unchanged}?"  A significant concern when changing complex networks is unintended \emph{collatoral damage}~\cite{rela}, which occurs when a change intending to modify flow F1 erroneously impacts flows F2, F3, and F4.
Multi-equation constraints allow us to specify all such concerns at once.  } 

\OMIT{
\zak{I think the point we want to make here:
In principle, enumeration problems can be solved by existing NetKAT decision procedures, by simply issuing a verification query for each parameter valuation.  However, when set of parameter valuations is large, this approach is infeasible.  We develop a suite of algorithms for solving parameter enumeration problems that have the same complexity as a single verification query (albeit for NetKAT expressions operating on a larger packet space).   By representing the packet space (and sets of parameter valuations) symbolically using binary decision diagrams, we obtain practical algorithms for Parametric NetKAT that scale to industrially relevant network sizes and parameter spaces.
}}

\OMIT{
Another use case for Parametric NetKAT is in analyzing many related "yes-no" verification questions all at once.  An important instance of this idea is in analyzing k-fault tolerance properties alongside other considerations.  For example, one can ask if A continues to be reachable from B under any arbitrary one (or two or three) link failure scenarios.  Of course, that is equivalent to asking whether 
reachability is preserved when link 1 is down?  When link 2 is down? ... When link k is down? etc.  However, we demonstrate the speedup 
Parametric NetKAT affords for answering such questions over naive execution of one such question after another in ordinary NetKAT.

\dpw{Probably should say something about the algorithms developed before the summary?}
}

To summarize, the key contributions of the work include:

\begin{itemize}
\item Design of parametric extensions to the NetKAT family of languages and their semantics. Integration of features from NetKAT, Relational NetKAT and Weighted NetKAT into one system.

\item Development of a programming/specification methodology that uses of Relational NetKAT for parameter insertion and leads to compact specifications independent of network size.  An illustration of the utility of parametric extensions through a range of practical examples.
\item Design, implementation, and proof of correctness of algorithms for answering parametric NetKAT queries, including a compilation scheme from Parametric NetKAT to (non-parametric) NetKAT automata and a suite of parameter enumeration algorithms, capable of computing the set of parameter valuations under which (a boolean combination of) emptiness, equivalence, and quantitative verification queries hold.

\item Evaluation of the system on a collection of benchmarks drawn from other sources ranging from (a) an analysis of changes made in an industrial networks~\cite{Xieyang24Rela}, (b) the Topology Zoo~\cite{TopologyZoo} and (c) case studies from the Amazon’s Batfish tool~\cite{batfish}.
\end{itemize}  

\section{Background:  NetKAT and Friends}\label{sec:bag}

\paragraph*{NetKAT}
NetKAT~\cite{netkat} is a domain-specific language for describing the behavior of network data planes.
Our presentation of NetKAT follows the more recent and more expressive formulation developed in work on Relational NetKAT~\cite{han26relational}.
Each NetKAT expression denotes a set of \emph{packet traces} (\trace),
where a packet trace is a sequence of two or more \emph{located packets} (\pk).  A located packet is a record that includes
fields for standard packet headers such as source and destination IP address (\srcip, \dstip) as well as
the packet's current location in the network (\locf).
For example, a trace $\pk_1 \ \pk_2 \ \cdots \ \pk_n$ where
$\pk_1.\locf = A$, $\pk_1.\dstip = 10.0.0.0$, and
$\pk_n.\locf = C$, describes the fact that a packet 
with destination IP address $10.0.0.0$, starting at location
$A$ travels some number of hops to
location $C$.  Packet headers can be modified
by the network along the way, so for instance $\pk_n.\dstip$ might
be $10.0.0.1$.  

Two packet traces may be \emph{concatenated} when the last packet of one trace equals the first packet of the next trace; the equal packets are dropped from the result of concatenation.  For instance, concatenating ($pk_1 \trcat pk_2 \trcat pk_3$) with ($pk_4 \trcat pk_5 \trcat pk_6$) results in ($pk_1 \trcat pk_2 \trcat pk_5 \trcat pk_6$) when
$pk_3 = pk_4$ and is undefined otherwise.  Hence, concatenating two 2-element traces leaves us with a 2-element trace. 
The \emph{concatenation} of two trace sets
$(S_1 \kand S_2)$ is defined as follows.
\begin{align*}
S_1 \kand S_2  = \{ & pk_{11}  \trcat  pk_{12}  \trcat  \cdots  \trcat  pk_{1(n-1)}  \trcat  pk_{22}  \trcat  \cdots  \trcat  pk_{2m} \mid \\
& pk_{11}  \trcat  \cdots  \trcat  pk_{1(n-1)}  \trcat  pk_{1n} \in S_1 , 
pk_{21}  \trcat  pk_{22}\cdots  \trcat  pk_{2m} \in S_2, 
pk_{1n} = pk_{21} \}
\end{align*}

NetKAT packet predicates ($\pred$) identify subsets of packets flowing through a network.  They include
simple tests of a packet field against a \emph{constant}
$(\field = \const)$ as well as any boolean combination of
such tests (conjunction is $\pand$; disjunction is $\por$; negation is $\pnot$; true is $\ptrue$; false is $\pfalse$).  We also use $(\Sigma_{a\in A} \,\pred(a))$ for an n-ary disjunction, and likewise
$(\Pi_{a\in A} \,\pred(a))$ for an n-ary conjunction.

Packet relations $PKR$ denote sets of packet pairs
(alternately, 2-element traces).  The relation
$\uppkr{f}{\const}$ denotes pairs $(\pk_1, \pk_2)$
such that $\pk_2$ is $\pk_1$ with field $f$ changed
to the constant $\const$.   When $f$ is the location field, assignment to $f$ represents movement of the packet from one device to another.  Other relations include
subsets of the identity relation ($\overline{\pred}$ -- every packet satisfying $\pred$ is paired with itself), cartesian product ($\pred_1 \crosspkr \pred_2$), composition of relations ($\PR_1 \comppkr \PR_2$),
union ($\PR_1 \orpkr \PR_2$), intersection ($\PR_1 \andpkr \PR_2$), and complement ($\notpkr$) of relations.

Finally, NetKAT expressions $K$ denote sets of traces.  They include packet relations, which generate 2-element traces, as well as the regular operators
concatenation ($\kand$), union ($\kor$), and star ($\ks{K}$).  Finally, \kdup~is the set of all 3-element traces of the form $\pk \ \pk\ \pk$.
Concatenation of two 2-element traces
results in a 2-element, but concatenation of a
2-element trace with a 3-element trace leaves
a 3-element trace.  Hence, $\kdup$~effectively
extends the length of a trace by one.
The following expression, which denotes a
set of 2-element traces, illustrates some of these
features.  

\begin{align*}
    (\lift{\dstip = 1.0.0.0} \ \kand \  &\uppkr{\locf}{A}) \  \kor  & \texttt{\% (1)}\\
    (\lift{\dstip = 1.0.0.1} \ \kand \  & \uppkr{\srcip}{2.0.0.0} \kand \uppkr{\locf}{B}) \  & \texttt{\% (2)} 
\end{align*}

\noindent
For any initial packet (starting
at any location) in the trace, 
 if its destination IP is 1.0.0.0, then
the second packet in the trace is the same as the first except its location field is $A$, and
(2) if its destination IP is 1.0.0.1, then the second packet's
source IP is 2.0.0.0 and its location is B.
%
In examples, we often use the following shorthand, which are easy
to define:  (1) $\havoc$ relates any two packets, (2) $\alltraces$ is the
set of all packet traces, and (3) $\alltraces(\pred)$ is the set of all
traces for which every packet satisfies $\pred$.

\paragraph*{Relational NetKAT}  Relational NetKAT~\cite{han26relational} extends NetKAT with
relations between sets of traces.  Such relations
may be viewed as network transformers.  In past work,
Relational NetKAT was used to specify intended \emph{changes} to
networks.  To compute the image of $K$ under the relation $R$ we write $K \rapp R$.  
Relations $R$ are built from primitives and
application of regular operators for concatenation ($\rconcat$),\footnote{For technical reasons,
concatenation on trace relations eliminates one
element, not two, as in concatenation of traces.
We use a different symbol for concatenation on
relations to highlight that difference. Relations
on one-element traces are admitted.}
 union ($\ror$),
and star ($\rs{R}$).
The key primitives are: 
\begin{itemize}
    \item $\rmap{\PR}{K}$, which relates each trace $\trace$ in $K$ to a
    new trace $\trace'$ generated from $\trace$ by applying the packet relation $\PR$ to all its elements;
    \item $\rfilter{\PR} \rconcat R$, which is like the relation $R$ except
    that elements $(\trace_1, \trace_2)$ of $R$ are
    discarded when the first packets in $\trace_1$ and $\trace_2$ do not satisfy
    the relation $\PR$;
    \item $\rinsert{K}$, which relates any two-element trace to all traces in $K$;     
    \item $\rdelete{K}$, which relates all traces in $K$ to any two-element trace; and
    \item $\rid{K}$, the identity relation, which relates all traces in $K$ to themselves. (The expression $\rid{K}$ can equivalently be written as $\rmap{\overline{1}}{K}$.
\end{itemize}

For example, to specify all paths through device $B$ should be
rerouted through device $C$ instead, we define
\(\textit{change\_path} = \rid{\alltraces} \rconcat \rdelete{\uppkr{\locf}{B} } \rconcat \rinsert{\uppkr{\locf}{C}} \rconcat \rid{\alltraces}.
\)
\noindent
Further, to change such paths while also ensuring that all paths not through $B$ are preserved, we define
$
\textit{full\_change} = \textit{change\_path} \ror \rid{\alltraces(\pnot (\locf = B))}
$.
Such a relation may be applied to any existing network to generate a changed
network:  $\textit{changed} = \textit{existing} \rapp \textit{full\_change}$.

\paragraph*{Weighted NetKAT}  Weighted NetKAT~\cite{acevedo26weight} extends NetKAT with a semi-ring of weights $w \in \mathcal{W}$, operations
for multiplication ($w \cdot w'$) and addition ($w \wor w'$), a
multiplicative identity $1$ and an additive identity $0$.   
A \emph{weighted trace} is a pair of a
trace and its weight $(\trace, w)$.  The denotation of a
weighted NetKAT expressions $W$, written $\dw{W}$,
is a set of weighted traces.  When we concatenate two
weighted traces, we combine their weights with multiplication.  When we take the union of two sets of weighted traces, we add the weights of traces that belong to both sets.  To modify weights
of a weighted NetKAT expression directly, one may use
the forms $W \wtimes w'$ or $w' \wtimes W$. When a weighted trace 
$(\trace, w)$ belongs to $\dw{W}$, 
$(\trace, w \cdot w')$ belongs to $\dw{W \wtimes w'}$, and
likewise, $(\trace, w' \cdot w)$ belongs to $\dw{w' \wtimes W}$. 

As an example, suppose that we are interested in the minimum latency between devices in a network.
We would use the tropical semiring of rational numbers (extended with $+\infty$ for unreachability) with $\cdot$
interpreted as rational addition, and $+$ as minimum.  
Hence, when concatenating traces, we add latencies (the latency of A followed by B is the latency of A plus the latency of B), and when unioning traces, we choose the shortest path, adopting the minimum of the latencies from either.  Hence the expression, 
\begin{align*}
(\lift{\locf = A} \kand \uppkr{\locf}{B} \wtimes 3 \kand \kdup \kand \uppkr{\locf}{C} \wtimes 5)\ & + 
(\lift{\locf = A} \kand \uppkr{\locf}{B} \wtimes 1 \kand \kdup \kand \uppkr{\locf}{C} \wtimes 1)
\end{align*}
\noindent
will denote paths through locations $A B C$ with latency 2.

\OMIT{
The weighted traces $(\trace, w)$ associate each trace with its latency. When we concatenate two
weighted traces $(\trace_1, w_1)$ and $(\trace_2, w_2)$,
the result is the concatenation of traces and the sum of the latencies $(\trace_1 \trcat \trace_2, w_1 \cdot w_2)$
we interpret semiring $\cdot$ operation
\dpw{I did not add $\infty$ to the semiring ... aren't our
traces finite?  if so, $\infy$ would never be associated with a finite trace?  Is it needed?}\han{I think mathematical, the infinity is defined via a notion of limit, rather than a concrete number. Though infinity can not be assigned to any finite trace, but from of point of prospective of limit, it just suggest we can find traces of arbitrary large length. So I think it should be added.}

An ordinary NetKAT expression $N$ can
be interpreted as a weighted NetKAT expression $W$ by assigning
all traces in $N$ the unit weight $1$ from the semiring.
Operations such as concatenation and union over weighted NetKAT
expressions operate as usual over the traces, but use the semiring operations for multiplication and addition respectively to synthesize weights for new traces in the generated set.  Likewise, the Kleene star iterates trace sets, using multiplication over weights when concatenating and addition when 
unioning traces and their weights.  Two additional Weighted NetKAT expressions modify weights on traces directly:   When a weighted trace 
$(\trace, w)$ belongs to $\denote{W}$, the
$(\trace, w \wand w')$ belongs to $\denote{W \wtimes w'}$, and
likewise, $(\trace, w' \wand w)$ belongs to $\denote{w' \wtimes W}$.

As an example, suppose that we are interested in the minimum latency between devices in a network.
We may use the semiring of rational numbers along with $\infty$ (representing unreachability), where $\cdot$ is interpreted as addition, and $+$ as minimum.

As an example, in a semiring of natural numbers (the weighted "cost" of traversing a path), where $\wand$ is interpreted as addition
and the multiplicative identity is hence natural number 0, $\lift{\locf = A} \kand \uppkr{\locf}{B} \wtimes 3 \kand \kdup \kand \uppkr{\locf}{C} \wtimes 5$ will denote paths through locations $A B C$ with cost 8.}

\section{Parametric NetKAT by Example}

Parametric NetKAT extends the NetKAT family of languages by allowing parameters to
appear in the place of constants and by providing algorithms for finding the set of valuations for parameters that validate constraints.  For example, whereas NetKAT includes concrete tests 
such as $\dstip = 10.0.0.0$, Parametric NetKAT admits \emph{symbolic tests}
$\dstip = x$, and whereas NetKAT includes concrete field updates, such as
$\uppkr{\locf}{C}$, Parametric NetKAT admits \emph{symbolic updates} $\uppkr{\locf}{x}$.
Finally, to constrain parameters, Parametric NetKAT allows (boolean combinations of) conditions of the form $x = c$.  
%


\paragraph*{Methodology} To use Parametric NetKAT effectively, we suggest the following methodology.

\begin{enumerate}
    \item Parse context- and vendor-specific network data plane formats using existing NetKAT tools \cite{han25artifact} and produce a NetKAT expression $N$ that faithfully represents network semantics.
    For some applications, one must compare one data plane to a second data plane (or one data plane component to several other data plane components).  In such cases, one might parse and generate two (or more) network expressions $N_1$, $N_2$, \emph{etc.}
    \item Transform the expression $N$, focusing on the network subparts of interest, and inserting parameters where needed, using 
    a relation $RN$.  In other words, craft $RN$ and an expression of the form 
    $N \rapp RN$ to find the image of $N$ under relation $RN$.
    The relation $RN$ manipulates the network $N$ semantically, which is convenient: one does not have to understand the syntax of low-level
    device configurations, or the NetKAT encoding process chosen, to analyze network semantics.  The denotational semantics of Relational NetKAT defines the effect of the transformation precisely.
    \item Define constraints/queries ($Q$) over transformed expressions.  If $\pn$, $\pn_1$, $\pn_2$, etc are parametric
    NetKAT expressions (such as $N \rapp RN$) the primitive constraints include:
    \begin{itemize}
        \item $\pn = \emptyset$: Find the parameter valuations that make the Parametric NetKAT expression denote the empty set of traces,
        \item $\pn_1 = \pn_2$: Find the parameter valuations that make the
        denotations of $\pn_1$ and $\pn_2$ the same, and
        \item $\qs{f}{WN}$:  Find the parameter valuations
        such that the sum of the weights from all traces in $WN$ satisfies 
        the predicate $f : W \rightarrow Bool$.
    \end{itemize}
    \item Complete the query by combining parameter valuations from multiple constraints using standard operators for union ($Q_1 \cup Q_2$),
    intersection ($Q_1 \cap Q_2$) and complement ($\neg Q_1$).
\end{enumerate}
\noindent
The result of this process is a set of parameter valuations, represented compactly as a BDD, which users may further examine or materialize as needed.

The rest of this section presents a series of examples that craft queries for a variety of useful network diagnosis tasks.  Important take-aways from the examples include the range of different kinds of questions that may be posed, the compactness, modularity, and simplicity of the queries, and the interactions between features of the language. 

\paragraph{Example:  Which Packets?}  A long-standing network analysis question is simply \emph{which packets} can flow from point A to point B, through a way-point, or along other prescribed paths.  Moreover, one may ask what those packets look like at various points along the way, despite modifications by intermediate devices such as NATs. 

As an example, consider
a network with a firewall $FW$ designed to
control access to device $B$ from device $A$.   
We may want to ask for the set of packets that flow from $A$ to $B$, 
and more specifically, what the destination IPs of those packets are when they reach the firewall $FW$ somewhere in the middle of the path (we may want to use that information to add additional firewall rules if our access control policy is misimplemented).   We can construct such queries compositionally, in steps, by crafting a relation that transforms the given network, discarding portions irrelevant to our question, and using parameters to extract information of interest.  The first step in this process is to filter out traces that do not start at $A$ and end at $B$:
\[
R_{A,B} = \rfilter{\lift{\locf=A}} \rconcat \rid{alltraces} \rconcat \rfilter{\lift{\locf=B}}
\]
\noindent
Given the filtered set of traces, we could ask for the values of the destination IP addresses of packets that arrive at $FW$ along any path.  These values will be assigned to $x$.
\[
R_{dstIP} = \rid{alltraces} \rconcat \rfilter{\lift{\dstip=x\pand loc=FW}} \rconcat \rid{alltraces}
\]
\noindent
To complete the query, we apply the relations to the network encoding $N$ and ask for any values
of $x$ that generate a valid path:  $Q1 = (N \rapp R_{A,B} \rapp R_{dstIP} \not= \emptyset)$.
If we would like to gather more information, perhaps the source IP at $A$ (stored in $y$),
that is easy to do as well:
\[
\begin{array}{rcl}
R_{srcIP} & = & \rfilter{\lift{\srcip=y\pand loc=A}} \rconcat \rid{alltraces} \\
Q_2 & = & (N \rapp R_{A,B} \rapp R_{dstIP} \rapp R_{srcIP} \not= \emptyset)
\end{array}
\]
More generally, the combination of relational filtering and parameter placement
allows us to select any set of traces expressible in NetKAT, and then to extract any packet header field at any point in those paths.  If we extract multiple
headers, say headers $x_1, \ldots, x_n$, at possibly many points along the path, the system returns a set of tuples $(x_1, \ldots, x_n)$ --- each tuple is a valuation that makes a path possible. 

\paragraph*{Example: Multipath Differencing}\label{para:multi}
Many large networks have symmetric designs~\cite{vl2}.  Such symmetries simplify network construction, maintenance, and expansion.
When precise specifications for networks are unavailable, as is often the case, systems such as Campion~\cite{campion} and Batfish~\cite{batfish} have shown that one can uncover bugs by comparing two network components that should be 
similar to one another, and analyzing differences.

Inspired by Batfish~\cite{batfish}, we explore the 
\emph{multipath differencing problem}:  Given two paths from $A$ to $B$,
$\mathit{path}_1$ and $\mathit{path}_2$, we ask for the IP addresses
of packets treated differently along the paths.\footnote{In NetKAT, an expression that denotes
a set of traces, say $\{\textit{trace}_1,\textit{trace}_2\}$, where the
traces begin with the same packet, has often been 
interpreted as multicast.  Here, we interpret it as a model of the "available paths" in a multipath routing system~\cite{equal-cost-multipath-rfc}.  A given packet will only flow along one of the paths.  If a path fails, packets that would have be forwarded over the failed path are instead forwarded across one of the other available paths thanks to fast failover mechanisms~\cite{fast-reroute-rfc} and control plane follow-up.}  For example, we would like to
return IP $x$, when a packet with IP $x$ is dropped along 
$\mathit{path}_1$ but is forwarded along $\mathit{path}_2$.  
Such differences can reflect misconfigured firewalls or other errors along the way.  

To be more concrete, we consider all traces following the paths $ACB$ and $ADB$ respectively, defined as follows.  In our encoding, 
the cross product ($\locf= A \times  \locf=C$)
represents a hop from $A$ to $C$ with
packet headers varying in arbitrary ways
(in any way they might vary in the network).
The $\kdup$ expression "saves" the
packet at $C$ and the product
($1 \times \locf=B$) adds a final hop to $B$.
\[
\begin{array}{rcl}
\mathit{path_1} &=& (\locf= A \times  \locf=C) \kand \kdup \kand (1 \times \locf=B) \\
\mathit{path_2} &=& (\locf=A \times  \locf=D) \kand \kdup \kand (1 \times \locf=B)
\end{array}
\]
\noindent
Next, since we are intentionally
comparing two different paths, we wish to distinguish \emph{outcomes} along those paths rather than details of the trace en route.  To focus on outcomes, we construct a relation that collapses traces to just their endpoints.
In the following construction, $\rdelete{\alltraces}$,
eliminates an arbitrary-length input trace, while $\rinsert{havoc}$ inserts 
an arbitray 2-step trace.  To make sure the input and output traces match up
at their end points, as desired, we filter using the identity relation.
\[
\mathit{collapse}
=
\rfilter{\lift{1}}\, \rconcat
\rdelete{\alltraces}\, \rconcat
\rinsert{havoc}\, \rconcat
\rfilter{\lift{1}}
\]
\noindent
Finally, we construct a query that considers sets of IP addresses in groups according to their initial destination IP ($x$), constrains the paths followed to $\mathit{path}_1$ and $\mathit{path}_2$, collapses the traces to their endpoints, and checks for inequality.  The valuations of $x$ (a set of destination IP addresses) that exhibit differences along the two paths are potential implementation bugs that should be rectified. 
\[
\begin{array}{rcl}
    \mathit{fixIP}(i) & = & \rfilter{\lift{dst.ip=z}} \rconcat \rid{\mathit{path}_i} \\
    Q_3 & = & (N \rapp \mathit{fixIP}(1) \rapp \mathit{collapse}) \not= (N \rapp \mathit{fixIP}(2) \rapp \mathit{collapse})
\end{array}
\]

\OMIT{
Then we can compare two path selections by checking whether
\[
K \triangleright \rfilter{\lift{dst.ip=x}}
\sum_i \rid{(x=i)e_{path_i}}
\triangleright R_{\mathit{critical}}
\]
is equivalent to
\[
K \triangleright \rfilter{\lift{dst.ip=x}}
\sum_i \rid{(y=i)e_{path_i}}
\triangleright R_{\mathit{critical}}.
\]

In this way, we can determine whether the behaviors induced by the choices \(x\) and \(y\) are bisimilar.
}

\OMIT{
\paragraph*{Example:  Network Abstraction.}  When analyzing a network, one
may do so at many levels of abstraction.  For example, Hoyan~\cite{hoyan},
an important component of Alibaba's network change validation pipeline, allows
users to record location information at the router interface level,~\footnote{Each router will has interfaces (connections to other devices).}
at the device level, or at the device group level.  Kang~\cite{one-big-switch} have argued for programming software-defined networks abstractly as "one big switch"
and several others have argued for using symmetries and abstraction in network analysis~\cite{plotkin+:network-symmetry,bonsai,propane-at,shapeshifter}. 
Parametric NetKAT, borrowing features from Relational NetKAT, allows users to craft
their own custom abstractions to facilitate such analysis.  For example, 
to mimic the functionality of Hoyan, a user can express
the fact that routers $A_1, ..., A_4$ are part of the $A$ group and likewise 
$B_1, ..., B_4$ for group $B$ with a simple transformation on traces, replacing
the $A_i$ and $B_i$ that appear in each trace with $A$ and $B$ respectively,
while leaving all other features of every trace unmodified:
\[
\begin{array}{rclll}
R_{abs} & = &  \mapname( &\lift{(\locf = A_1 \por \cdots \por \locf = A_4)} \rconcat \uppkr{\locf}{A}) \, + &\\
        &&  &(\lift{(\locf = B_1 \por \cdots \por \locf = B_4)} \rconcat \uppkr{\locf}{B}) \, + &\\
        && &\lift{\neg (\locf = A_1 \por \cdots \por \locf = A_4 \por \locf = B_1 \por \cdots \por \locf = B_4)}, & \alltraces) 
\end{array}
\]
\noindent
One can now reuse code from the query in Example 1 over a network with devices
$A_i$ and $B_i$ by inserting the abstraction layer: $Q_3 = (N \rapp R_{abs} \rapp R_{AB} \rapp R_{dstIP} \rapp R_{srcIP} \not= \emptyset)$.
If we want to remember the specific devices that
initiate the flows, and correlate those with
the IPs that appear at the firewall, we can easily do that too by inserting another recording layer:
\[
\begin{array}{rcl}
R_{init} & = & \rfilter{\lift{loc=z}} \rconcat \rid{alltraces} \\
Q_4 & = & (N \rapp R_{init} \rapp R_{abs} \rapp R_{AB} \rapp R_{dstIP} \rapp R_{srcIP} \not= \emptyset)
\end{array}
\]
Abstractions are not limited to location: Groups of IP addresses, such as external vs internal IPs, groups of ports, or other packet
features, may be aggregated as suggested in work on network abstract interpretation~\cite{shapeshifter}.

\zak{I would delete this example.  (1) It confuses the message of the paper -- this example isn't about how parameters are useful, it's about how relations are useful. (2) In my understanding, the point of this abstraction in Hoyan is to make monitoring/verification computationally less expensive.  Using Relational NetKAT pays an even \textit{higher} price than the fine-grained abstraction---the whole point of abstraction is that it's not injective, and injectivity is the thing that relational netkat exploits to be efficient.}
\dpw{Omitted example for now.}
\dpw{it is true that Hoyan is doing it for performance.  But we could show a related example that makes use of abstraction for code reuse.  There is a bit of code reuse here.  In a data center, there are k isomorphic pods with each pod only differing in the names of the devices. I could imagine crafting an analysis A for a generic pod, and then mapping all k real pods onto that generic analysis.  This would be injective.  It would essentially be doing a substitution of names for names.    I do like the way the relations allow for reuse of code in queries.  Perhaps that is orthogonal to the most important technical new thing here, but it seems like a useful element of the broader programming strategy that we are describing here, which involves writing a query as a transformer on networks.
Anyway, I don't feel so strongly about this example and we could drop it.}
}

\paragraph*{Example:  Device Fault Tolerance}\label{para:device-fault}    To determine whether $A$ can reach $B$, despite a single failure at node $x$, we can ask
$Q_4 = (N \rapp R_{A,B} \rapp \mathit{available} = \emptyset)$ where
$\mathit{available}$ is $\rmap{\lift{\locf \not= x}}{\alltraces}$.
Recall that the packet relation $\lift{\locf \not= x}$ is the identity relation restricted to packets where $\locf \neq x$. 
Thus $\mathit{available}$ has the effect of discarding traces that pass through location $x$, while leaving traces that do not pass through $x$ unchanged.
Hence, the query collects values of $x$
for which there does not exist a path from $A$ to $B$, indicating that the network cannot tolerate failures at those nodes.  An example like this can be coded in plain (unparametric) NetKAT through a series of $N$ queries,
where $N$ is the number of nodes in the network, but (as we show in Section~\ref{sec:eval}) it is more efficient to do them all simultaneously in Parametric NetKAT.

To extend to any two failures,
$x$ and $y$, we use $\mathit{available}_2 = \rmap{\lift{\locf \not= x \pand \locf \not= y}}{\alltraces}$ in place
of $\mathit{available}$.
And if we are interested in reachability beyond fixed pairs of nodes $A$ and $B$, we can extract sources ($y$), sinks ($z$), and failures ($x$) that disconnect them using the following query.
\[
\begin{array}{rcl}
R_{yz}  & = & \rfilter{\lift{(\Sigma_{c\in \mathit{sources}} \, y = c) \pand \locf=y}} \rconcat \rid{\alltraces}) \rconcat 
\rfilter{\lift{(\Sigma_{c\in \mathit{sinks}} \, z = c) \pand \locf=z}}\\
Q_5 & = & (N \rapp R_{yz} \rapp \mathit{available} = \emptyset)
\end{array}
\]
\noindent 
Analyzing link fault tolerance rather than device fault tolerance is also possible---see Section \ref{sec:eval} for further information.

\OMIT{
\paragraph*{Example:  Link Fault Tolerance}\label{para:link-fault}
Another related question is whether a network can
tolerate some small, fixed number $N$ of link failures rather than device failures.
To encode such a question for $N=1$,
suppose we number the links in our network $0 \ldots E$
(let \textit{edges} be that set of numbers).
For each link $i$, \textit{src}(i) is the starting
location and $\textit{dst}(i)$ is the ending location of the link.  The question may be answered by introducing
a parameter $\textit{fail}$, which takes on a value
in \textit{edges} to indicate the failed link.  
The query $Q_6$, constructed below, finds
the values of $i$ that render $B$ unreachable from $A$.
\[
\begin{array}{rcl}
\textit{link}(i) & = & \locf = \textit{src}(i) \crosspkr \locf = \textit{dst}(i) \\
\textit{available\_link}(i) & = & \rfilter{\textit{fail} \not= i}  \rconcat \idname(link(i)) \\
\textit{available\_links} & = & \Sigma_{i\in edges} \textit{available\_link}(i) \\
\textit{available\_paths} & = & \ks{available\_links} \\
Q_6 & = & (N \rapp \textit{available\_paths} \rapp R_{A,B} = \emptyset)
\end{array}
\]

\zak{Do we need \textit{two} fault tolerance examples?  Is there a point that we can't make with one example?}
}

\paragraph{Example:  Fault Localization} 
A fault tolerance analysis tells us ahead-of-time that our network can tolerate certain failures.  Sometimes, however, network engineers will discover that their network \emph{has already failed} despite best efforts to engineer resilience.  In such a situation, the engineer needs to determine the root cause of the failure in order to fix it---this is the \emph{fault localization} problem,
which is one of a broader class of problems, known as \emph{network tomography} problems~\cite{tomography}, that involve inferring properties internal 
to a network from external observations.

More specifically, suppose we notice packets being dropped but we are not sure where.  By sending packets
along paths from sources to destinations, we can generate two
sets of observations: $\mathit{Reach}$ and $\mathit{Unreach}$ such that
for all $(a1,a2)$ in $\mathit{Reach}$, we know packets from $a1$ can reach $a2$, and
for all $(a1,a2)$ in $\mathit{Unreach}$, we know packets from $a1$ cannot currently reach $a2$. Our goal is to determine which single device failure $x$ might explain these observations.
Our encoding of the available paths borrows $\textit{available}$ from the fault tolerance example above.
This time, however, we generate a collection of many equations, one equation
for each observation, and we seek the list of possible device failures $x$ that explain all the observations at once. 
$R_{a1,a2}$ is all possible traces from $a1$ to $a2$ as usual.
\[
\begin{array}{rcl}
G_1 & = & \bigcap_{(a1,a2)\in \mathit{Reach}} (N  \rapp R_{a1,a2} \rapp \textit{available} \not= \emptyset) \\
G_2 & = & \bigcap_{(a1,a2)\in \mathit{Unreach}} (N \rapp R_{a1,a2}  \rapp \textit{available} = \emptyset) \\
Q_7 & = & G_1 \cap G_2
\end{array}
\]
\noindent
If there is no solution to the problem, then one could try modified queries checking for any 2 failed devices, or perhaps for failed links instead, to uncover conditions consistent with the observations. 

\paragraph*{Example: Multi-objective Synthesis}  A key difficulty in network management is the complexity of it all---one must satisfy different objectives for a multitude of different flows.\footnote{A flow is a set of related packets that follow the same paths through the network.}  A particular worry is that a
fix for one flow can unintentionally cause \emph{collateral damage}~\cite{Xieyang24Rela}, disrupting transmission (or access control) for some other flow that was being processed properly. 

Imagine $A$ transmits undesirable traffic to $B$ through the firewall $FW$ and we would like to learn $(\srcip, \dstip)$ pairs to block.  However, at the same time, $D$ is transmitting traffic to $E$ and that traffic
may also run through the firewall.  Our goal then is two-fold:
\begin{enumerate}
    \item[$G_1$:] Learn the $(\srcip, \dstip)$ pairs of packets
    that appear at $FW$ while in transit $A$-$B$ (so that blocking these pairs at $FW$ will drop all such traffic).
    \item[$G_2$:] Ensure the paths used by $D-E$ traffic are unaffected, whatever those paths may be.
\end{enumerate}
To achieve goal $G_1$, we can use a query similar to query $Q_2$ constructed earlier.  Below, we
ask for the $(x,y)$ pairs that allow us to block all traces from $A$ to $B$.
\[
\begin{array}{rcl}
R_{src,dst} & = & 
\rid{alltraces} \rconcat 
\rfilter{
\lift{\locf = FW \pand \srcip=x \pand \dstip=y}} \rconcat \rid{alltraces} \\
G_1 & = & (N \rapp R_{A,B} \rapp R_{src,dst} \not= \emptyset)
\end{array}
\]
To ensure the traffic from \(D\) to \(E\) is preserved, despite addition of a firewall rule at \(FW\), we construct the following constraints.
\[
\begin{array}{rcl}
R_{D,E} & = & \rfilter{\lift{\locf=D}} \rconcat \rid{alltraces} \rconcat \rfilter{\lift{\locf=E}} \\
R_{change} & = & 
\rid{alltraces}\rconcat
      \rfilter{\lift{\locf=FW \pand (\srcip \neq x \por \dstip \neq y)}}\rconcat
      \rid{alltraces}\,  + \\
      & & \rs{\rid{\locf \not= FW}} \\
\textit{old}_{DE} & = & N \rapp R_{D,E} \\
\textit{new}_{DE} & = & N \rapp R_{change} \rapp R_{D,E} \\    
G_2 & = & (\textit{old}_{DE} = \textit{new}_{DE})
\end{array}
\]
We satisfy both equations simulateously by
asking for the intersection of the solutions
to our equations:
$Q_8 = G_1 \cap G_2$.

\paragraph*{Example: Quantitative Reasoning}\label{para:q-reason}
Weighted traces can be used in conjunction with parameteres to enumerate valuations that exhibit certain quantitative properties.
One simple kind of quantitative reasoning deploys
an artic semi-ring to compute longest paths and constraints to detect 
overly long paths.  In this application, weights are naturals (along with $\infty$), concatentation of weights is addition, and the sum of weights is maximum.  The following
Weighted NetKAT expression, when treated as a relation, will compute 
longest paths, counting each hop in the path as one.
\[
\begin{array}{lcr}
W_{length} & = & (1\otimes havoc~dup)^*(1\otimes havoc) 
\end{array}
\]
When combined
with a thresholding function such as
$f = \lambda w. w > 4$, we
can detect paths with a length greater than 4, which may violate
network latency requirements.
More specifically, the following query identifies the initial location
and destination IP address of any flow that can follow a path longer than
length 4.
\[
\begin{array}{rcl}
R_{init} & = & \rfilter{\lift{\locf=x \pand \dstip=y}} \rconcat \rid{\alltraces} \\
Q_9 & = & \qs{f}{\wres{W_{length}}{(N \rapp R_{init})} }
\end{array}
\]
\OMIT{
if $N$ is our network, $R_{yz}$ and
$R_{DFT}$, are the relations from
the device fault tolerance analysis, which considers 
paths from locations $y$ to $z$ and possible device failures then
$\qs{f_{threshold}}{\wres{W_{length}}{(N \rapp R_{yz} \rapp R_{DFT})} }$ finds the set of failures that generate
shortest paths longer than 4.  
Prior work~\cite{acevedo26weight} illustrates that quantitative constraints
are not limited to path length, but can characterize financial costs, latency,
failure probability, and a variety of other features depending on the semiring used---Parametric NetKAT can exploit all such constraints during synthesis
and diagnosis as well.
\zak{This is the set of device failures that make the max path length small by \textit{disconnecting} long paths.  I think what we would want is device failures that lead to using long backup paths, but that requires using min to aggregate weights  }
\dpw{Good points ... I fixed the text ... but not sure if we are doing what we want....}
}

\OMIT{
    \section{Examples}
\han{Explain NetKAT before entering examples. Simple example use NetKAT, but did not work well with NetKAT. Add an example to illustrate why we need relational NetKAt
}

\han{As we may state that Relational NetKAT}
\subsection{Network Synthesis}

Suppose we are constructing a network, but we do not yet know how to configure the routing rule at a particular device \(A\).
We may use
\[
loc=A \cdot dst.ip=x \cdot loc \leftarrow y
\]
to denote such an undetermined routing rule.
Combined with the other routing rules \(K\), we can write
\[
K' = (K + loc=A \cdot dst.ip=x \cdot loc \leftarrow y)^*.
\]
If we want to ensure a certain reachability property, for example reachability from device \(A_1\) to device \(A_2\), we can check
\[
(loc=A_1)\,K'\,(loc=A_2)\neq \emptyset
\]
to obtain all plausible values of \(x\) and \(y\).

\han{One equation first. Staging.}

\begin{figure}[t]
\centering
\begin{tikzpicture}[>=stealth, thick]
    \node[circle, draw, minimum size=9mm] (A1) at (0,1.8) {\(A_1\)};
    \node[circle, draw, minimum size=9mm] (A2) at (2.2,1.8) {\(A_2\)};
    \node[circle, draw, minimum size=9mm] (A3) at (4.4,1.8) {\(A_3\)};
    \node[circle, draw, minimum size=9mm] (A4) at (0,0) {\(A_4\)};
    \node[circle, draw, minimum size=9mm] (A5) at (2.2,0) {\(A_5\)};

    \draw (A1) -- (A2);
    \draw (A2) -- (A3);
    \draw (A1) -- (A4);
    \draw (A2) -- (A5);
    \draw (A3) -- (A5);
    \draw (A4) -- (A5);

    \node[above] at ($(A1)!0.5!(A2)$) {\(l_1\)};
    \node[above] at ($(A2)!0.5!(A3)$) {\(l_2\)};
    \node[left]  at ($(A1)!0.5!(A4)$) {\(l_3\)};
    \node[right] at ($(A2)!0.5!(A5)$) {\(l_4\)};
    \node[right] at ($(A3)!0.5!(A5)$) {\(l_5\)};
    \node[below] at ($(A4)!0.5!(A5)$) {\(l_6\)};
\end{tikzpicture}
\caption{A network with five devices and six links.}
\end{figure}

\subsection{Fault Localization}
\han{a1 -> b1, 
Use Reach/Unreach instead of $\Delta_1$ and $\Delta_2$
}
We use the same network as in the fault-tolerance example, except that this time we do not know which link has failed.
Suppose that for the node pairs in the set \(\Delta_1\), reachability still holds, while for the node pairs in the set \(\Delta_2\), reachability no longer holds.

\begin{itemize}
    \item \textbf{Batfish:} Not directly supported.

    \item \textbf{NetKAT and its extensions:} Not directly supported, unless one deletes links one by one and checks the resulting reachability each time.

    \item \textbf{Parametric NetKAT:} Again, we use
    \[
    K' = K \triangleright
    \left(
    \sum_{\substack{(a_1,a_2)\text{ are the}\\\text{endpoints of link }l_i}}
    \rfilter{\lift{(fail\neq i \land loc=a_1)}}\rid{havoc}\rfilter{\lift{(loc=a_2)}}
    \right)^*
    \]
    to denote the network after parameter insertion.

    Given the reachability information that we observe, we only need to check that
    \[
    \bigwedge_{(a_1,a_2)\in \Delta_1} (loc=a_1)\,K'\,(loc=a_2)\neq \emptyset
    \;\land\;
    \bigwedge_{(a_1,a_2)\in \Delta_2} (loc=a_1)\,K'\,(loc=a_2)= \emptyset.
    \]
\end{itemize}

\begin{figure}[t]
\centering
\begin{tikzpicture}[>=stealth, thick]
    \node[circle, draw, minimum size=9mm] (A) at (0,2) {\(A\)};
    \node[circle, draw, minimum size=9mm] (B) at (0,0) {\(B\)};
    \node[circle, draw, minimum size=9mm] (D) at (4,2) {\(D\)};
    \node[circle, draw, minimum size=9mm] (E) at (4,0) {\(E\)};
    \node[circle, draw, minimum size=9mm] (C) at (2,1) {\(C\)};


    \draw[->] (B) -- (C);
    \draw[->] (C) -- (D);

    \draw[->, dashed] (A) -- (C);
    \draw[->, dashed] (C) -- (E);
\end{tikzpicture}
\caption{Traffic cleaning example: solid arrows denote traffic to preserve, and dashed arrows denote traffic to block.}
\end{figure}
\subsection{Multipath.}

We may want to verify that several paths in a network are equivalent.
Suppose we have five candidate paths, denoted \(Path_1,\dots,Path_5\), and we want to check whether their forwarding behavior is equivalent for traffic between some designated devices \(A_1,A_2\).

We use
\[
e_{path_i}
=
(loc=path_i.1\times loc=path_i.2)\,dup\,(1\times loc=path_i.3 )\,dup\,\cdots\,dup\,(1\times loc=path_i.m_i)
\]
to denote the expression corresponding to the \(i\)-th path.

In addition, we define
\[
R_{\mathit{critical}}
=
\rfilter{loc=A_1}\,
\rdelete{\texttt{alltrace}}\,
\rinsert{1}\,
\rfilter{loc=A_2}.
\]

Then we can compare two path selections by checking whether
\[
K \triangleright \rfilter{\lift{dst.ip=z}}
\sum_i \rid{(x=i)e_{path_i}}
\triangleright R_{\mathit{critical}}
\]
is equivalent to
\[
K \triangleright \rfilter{\lift{dst.ip=z}}
\sum_i \rid{(y=i)e_{path_i}}
\triangleright R_{\mathit{critical}}.
\]

In this way, we can determine whether the behaviors induced by the choices \(x\) and \(y\) are bisimilar.

}

\paragraph*{Limitations}  While Parametric NetKAT expands the kinds of questions the NetKAT family of languages can answer, there remain useful network diagnosis questions that are beyond its reach.  For example, a Parametric NetKAT query can only generate valuations with a fixed number of elements.   It cannot extract a set of traces, for instance, where the traces may contain arbitrarily many packets.  Implementing such a query may be possible in the future, perhaps by using a NetKAT automaton as a representation of the set, but it is beyond the scope of this paper.  Likewise, the current system does not allow for symbolic weights and the extraction of a set of weights that satisfy some property.  Such an extension may be useful for implementing certain kinds of quantitative tomography problems.
\section{Syntax and Semantics}
In this section, we present the formal syntax and denotational semantics of our language.

\paragraph{Types of Expressions.}
As in NetKAT \cite{netkat,foster15coal}, we model a network as a packet-processing system, and view its behavior as the set of all packet traces that can arise within the system.  We enrich this model with parameters, which represent the choices that validate given constraints.

Formally, let \(\mathsf{Par}\) be a finite set of parameters and \(\mathsf{Fld}\) a finite set of packet fields. We 
assume that each field value and each parameter value ranges over a finite set \(\mathsf{Val}\subseteq \mathbb{N}\).
We then define
\[
V = \mathsf{Val}^{\mathsf{Par}},
\qquad
\mathit{Pk} = \mathsf{Val}^{\mathsf{Fld}},
\qquad
\mathit{Tr}(Pk)=\{\,pk_1\cdots pk_n \mid n\ge 2,\; pk_i\in \mathit{Pk}\,\}.
\]
Thus, a valuation \(v\in V\) assigns to each parameter \(x\in\mathsf{Par}\) a natural number \(v.x\in\mathsf{Val}\), and a packet \(pk\in\mathit{Pk}\) assigns to each field \(f\in\mathsf{Fld}\) a natural number drawn from a finite set of natural numbers \(\mathsf{Val}\subseteq \mathbb{N}\).
Equivalently, a packet may be written as a record of the form
\(
\{f_1=c_1;\;f_2=c_2;\;\dots;\;f_n=c_n\},
\)
where each \(c_i\in\mathsf{Val}\).
We interpret comparisons between variables and fields values in the standard way over \(\mathbb{N}\), and build Boolean expressions from these atomic predicates using \(0\), \(1\), \(\neg\), \(\land\), and \(\lor\).
Finally, \(\mathit{Tr}(Pk)\) denotes the set of finite sequences of packet of length at least \(2\); we write such traces as \(pk_1\cdots pk_n\), using juxtaposition for concatenation.

\paragraph{Weights and Semirings.}
Weighted NetKAT \cite{acevedo26weight} extends NetKAT with weights drawn from an \(\omega\)-continuous semiring \((S,+,\cdot,0,1)\).
An arbitrary semiring is not sufficient for our purposes, since the weighted semantics may involve countable sums.
Accordingly, the weight domain must carry an order \(\preceq\) such that \((S,\preceq)\) forms an \(\omega\)-complete partial order, \(0\) is the least element, the operations \(+\) and \(\cdot\) are \(\omega\)-continuous, and \(S\) admits countable sums.
For a detailed discussion of these assumptions, we refer the reader to the prior work \cite{acevedo26weight}.

In addition, we equip the semiring with a star operator \((-)^*\), and require that
\(
e^* = \sum\limits_{i\in\mathbb{N}} e^i
\). 
This operator is needed computationally: it enables our algorithms and implementation to perform the required closure operations efficiently.

\paragraph{Parametric NetKAT Syntax}
Parametric NetKAT extends NetKAT, Relational NetKAT and Weighted NetKAT
by allowing for the use of variables $x$ in place of constants ($c$).  
These variables can take on any valuation needed to satisfy specified constraints.  More specifically,
rather than limiting NetKAT expressions to testing fields against constants
(\(f=c\)), Parameterized NetKAT now allows fields to be tested against
variables (\(f=x\)), and rather than merely assigning constants to fields
(\(f\leftarrow c\)), Parameterized NetKAT now allows fields to be assigned 
variables (\(f\leftarrow x\)).  Finally, to constrain variables independently of
how they are used in tests or assignments, we allow equalities $x = c$ to appear in expressions.    

\OMIT{
Figure~\ref{fig:enkat-syntax} presents the syntax of Parameterized NetKAT, with grey boxes highlighting significant differences from past work~\cite{netkat,han26relational,acevedo26weight}.

Having introduced packets, valuations, and weights, we can now present the syntax of Parametric NetKAT, together with its extensions to the relational and weighted settings.
Most constructs in Figure~\ref{fig:enkat-syntax} follow prior work \cite{netkat,han26relational,acevedo26weight}, except for the two new forms
\[
f=x
\qquad\text{and}\qquad
f\leftarrow x,
\]
which are parameterized counterparts of the ordinary constant-based constructs \(f=c\) and \(f\leftarrow c\).
Intuitively, \(f=x\) tests whether the field \(f\) is equal to the value of the variable \(x\), while \(f\leftarrow x\) updates the field \(f\) to that value.}

In the Weighted NetKAT sublanguage, we deviate from past work in two ways. 
A new form (\(WN \wtimes w\)) multiplies a weighted expression by a weight on the right-hand side, symmetrically to (\(w \wtimes WN\)).
This construct does not appear in Weighted NetKAT \cite{acevedo26weight}, but it is useful for semirings whose multiplication is not commutative.
The second new form is (\(\wres{WN}{PN}\)), which restricts a Weighted NetKAT expression by assigning weight \(0\) to every trace \(\tau\) not appearing in \(PN\), while leaving the weights of traces in \(PN\) 
unchanged.  Alternatively, one may view 
(\(\wres{WN}{PN}\)) as applying the weights associated with traces in
$WN$ to the unweighted expression $PN$---this latter viewpoint reflects the most common way we use this form.  More specifically, our network parsing infrastructure generates an unweighted NetKAT expression $N$.  Then a programmer may write their own application-specific weighting function $WN$, 
within the weighted NetKAT sublanguage, and apply $WN$ to $N$ using restriction (\(\wres{WN}{N}\)). This design increases the
programmability and modularity of the system.

Finally, we introduce a \textit{query language} for parametric NetKAT expressions that allows us to combine multiple constraints generated by all three sublanguages.  A query is interpreted as a subset of $\power{V}$, representing the set of all valuations under which a given formula holds.
There are three classes of atomic queries: emptiness checking \(PN=\emptyset\), which generates the set of valuations \(v\) under which \(PN\) is empty;
equivalence checking \(PN_1=PN_2\), which generates the set of valuations \(v\) under which \(PN_1\) and \(PN_2\) are equivalent; and
weighted queries \(\qs{f}{WN}\), which take a Boolean-valued function \(f: W \to \textsf{Bool}\) together with a weighted expression \(WN\), and generates the set of valuations $v$ such that the aggregate weight of all traces of \(WN\) satisfies \(f\).
Figure~\ref{fig:enkat-syntax} presents
the complete syntax of the system, highlighting extensions 
of past work in grey.

Aside from these changes, the syntax of 
the NetKAT, Parameterized NetKAT and Weighted NetKAT are largely unchanged from past work, though the three sublanguages have now been combined into
one and complex queries generated from boolean combinations of emptiness checking, equivalence checking, and constrained weight aggregation now available for the first time. 

\begin{figure}[t]
\vspace{-5pt}
\[
\begin{array}{lll}
\multicolumn{3}{l}{\textbf{Syntax:}} \\
\val & ::= & \const~|~\colorbox{gray!30}{$x$} \\
Pred &::=& \pfalse~|~\ptrue~|~ \colorbox{gray!30}{$x=c$}~|~\colorbox{gray!30}{$f=\val$}~|~Pred_1 \por Pred_2~|~Pred_1 \pand Pred_2~|~\pnot Pred\\
PkR &::=& \empkr~|~\idpkr~|~\colorbox{gray!30}{\uppkr{f}{\val}}|
Pred\times Pred~|PkR_1\comppkr PkR_2~|PkR_1 \orpkr PkR_2~|PkR_1\andpkr PkR_2~|\neg PkR\\
PN & ::= & PkR~|~PN\triangleright RN~|~ PN_1+PN_2 ~|~PN_1\kand PN_2~|~PN_1\kcap PN_2~|~PN_1\kdiff PN_2~|~\ks{PN}~|~dup\\
RN &::= &\rfilter{PkR} ~|~ \rmap{PkR}{PN} ~|~ \rdelete{PN} ~|~ \rinsert{PN} ~|~ RN_1 \rconcat RN_2 ~|~ RN_1 \ror RN_2 ~|~ \rs{RN} \\
WN & ::= & PN~|~WN\wtimes w~|~w\wtimes WN~|~\colorbox{gray!30}{$\wres{WN}{PN}$}~| WN_1+WN_2 ~|~WN_1\kand WN_2~|~\ks{WN}\\
\colorbox{gray!30}{$Q$} & ::= & 
\colorbox{gray!30}{$PN=\emptyset~|~PN_1=PN_2~|~\qs{f}{WN}~|~Q_1\cap Q_2 ~|~ Q_1 \cup Q_2 ~|~\neg Q~$}
\end{array}
\]
\[\begin{array}{lll@{\qquad\qquad}lll@{\qquad}}
\multicolumn{6}{l}{\textbf{Common Abbreviations:}} \\
\havoc & = & \ptrue \crosspkr \ptrue
& \alltraces(\pred) & = & \ks{(\pred \times 1 \kand \kdup )} \kand\pred \times \pred  \\
\lift{\pred} & = & \idpkr \andpkr (\pred \crosspkr \pred) 
& \alltraces & = & \alltraces(\ptrue) \hfill\\
\end{array}
\]
\vspace{-13pt}
\caption{Parametric NetKAT Syntax.  Extensions relative to past work highlighted in grey.}
    \vspace{-15pt}
\label{fig:enkat-syntax}
\end{figure}

\begin{figure}[t]
\centering
\vspace{-5pt}
\begin{subfigure}[t]{0.95\linewidth}
\centering
\[
\begin{aligned}
\dpred{\pfalse}(v) &= \varnothing \\
\dpred{\ptrue}(v)  &= \mathit{Pk} \\
\dpred{f = x}(v)
  &= \{\, pk \in \mathit{Pk} \mid pk.f = v.x \,\} \\
\dpred{f = c}(v)
  &= \{\, pk \in \mathit{Pk} \mid pk.f = c \,\} \\
\dpred{x = c}(v)
  &= \{\, pk \in \mathit{Pk} \mid v.x = c \,\} \\
\dpred{\pnot Pred}(v)
  &= \mathit{Pk} \setminus \dpred{Pred}(v) \\
\dpred{Pred_1 \por Pred_2}(v)
  &= \dpred{Pred_1}(v) \cup \dpred{Pred_2}(v) \\
\dpred{Pred_1 \pand Pred_2}(v)
  &= \dpred{Pred_1}(v) \cap \dpred{Pred_2}(v)
\end{aligned}
\]
\caption{Denotational semantics of predicates.}
\label{fig:pred-semantics}
\end{subfigure}

\begin{subfigure}[t]{0.95\linewidth}
\centering
\[
\begin{aligned}
\dpkr{\empkr}(v) &= \varnothing \\
\dpkr{\idpkr}(v)
  &= \{\, (pk,pk) \mid pk \in \mathit{Pk} \,\} \\
\dpkr{\uppkr{f}{x}}(v)
  &= \{\, (pk_1,pk_2) \mid pk_2 = pk_1[f \leftarrow v.x] \,\} \\
\dpkr{\uppkr{f}{c}}(v)
  &= \{\, (pk_1,pk_2) \mid pk_2 = pk_1[f \leftarrow c] \,\} \\
\dpkr{Pred_1 \times Pred_2}(v)
  &= \dpred{Pred_1}(v) \times \dpred{Pred_2}(v) \\
\dpkr{PkR_1 \comppkr PkR_2}(v)
  &= \left\{
     \begin{aligned}
     (pk_1,pk_3)\ \mid \ &\exists pk_2 \in \mathit{Pk}, \\
     &(pk_1,pk_2) \in \dpkr{PkR_1}(v), \\
     &(pk_2,pk_3) \in \dpkr{PkR_2}(v)
     \end{aligned}
     \right\} \\
\dpkr{PkR_1 \orpkr PkR_2}(v)
  &= \dpkr{PkR_1}(v) \cup \dpkr{PkR_2}(v) \\
\dpkr{PkR_1 \andpkr PkR_2}(v)
  &= \dpkr{PkR_1}(v) \cap \dpkr{PkR_2}(v) \\
\dpkr{\neg PkR}(v)
  &= (\mathit{Pk} \times \mathit{Pk}) \setminus \dpkr{PkR}(v)
\end{aligned}
\]
\caption{Denotational semantics of packet relations.}
\label{fig:pkr-semantics}
\end{subfigure}
\vspace{-10pt}
\caption{Denotational semantics of predicates and packet relations.}\vspace{-10pt}
\label{fig:pred-pkr-semantics}
\end{figure}

\begin{figure}[t]
\vspace{-5pt}
\begin{align*}
\dpn{PkR}(v)
  &= \{\, pk_1 pk_2 \mid (pk_1, pk_2) \in \dpkr{PkR}(v) \,\} \\
\dpn{PN\triangleright RN}(v)
  &= \{\, \tau_2\mid \tau_1\in\dpn{PN}(v), (\tau_1,\tau_2)\in\dpr{RN}(v),|\tau_2|\ge 2  \,\} \\
\dpn{PN_1 + PN_2}(v)
  &= \dpn{PN_1}(v) \cup \dpn{PN_2}(v) \\
\dpn{PN_1 \kcap PN_2}(v)
  &= \dpn{PN_1}(v) \cap \dpn{PN_2}(v) \\
\dpn{PN_1 \kdiff PN_2}(v)
  &= \dpn{PN_1}(v) \setminus \dpn{PN_2}(v) \\
\dpn{PN_1 \kand PN_2}(v)
  &= \{\, \tau \mid \exists \tau_1 \in \dpn{PN_1}(v),\ \exists \tau_2 \in \dpn{PN_2}(v),\ \tau = \tau_1 \kand \tau_2 \,\} \\
\dpn{\ks{PN}}(v)
  &= \bigcup_{n \ge 0} \dpn{PN^n}(v),
  \qquad
  \text{where }
  PN^0 = \idpkr,\;
  PN^{n+1} = PN^n \kand PN \\
\dpn{dup}(v)
  &= \{\, pkpkpk \mid pk \in \mathit{Pk} \,\}\\\\
    \dpr{\rfilter{PkR}}(v) & = \{ (pk_1, pk_2) \mid (pk_1, pk_2) \in \dpkr{PkR}(v) \} \\
    \dpr{\rmap{PkR}{PN}}(v)& = \{ (pk_1 \trcat \cdots \trcat pk_n, pk_1' \trcat \cdots \trcat pk_n') \mid 
    pk_1 \trcat \cdots \trcat pk_n \in \dpn{PN}(v), \\
    & \hspace{5.6cm} \forall i \in [1,n]. (pk_i, pk_i') \in \dpkr{PkR}(v) \} \\
    \dpr{\rdelete{PN}}(v) & = \{ (pk_1 \trcat \cdots \trcat pk_n, pk) \mid pk \in Pk, pk_1 \trcat \cdots \trcat pk_n \in \dpn{PN}(v) \} \\
    \dpr{\rinsert{PN}}(v) & = \{ (pk, pk_1 \trcat \cdots \trcat pk_n) \mid pk \in Pk,  pk_1 \trcat \cdots \trcat pk_n \in \dpn{PN}(v) \} \\
    \dpr{RN_1+RN_2}(v)& = \dpr{RN_1}(v) \cup \dpr{RN_2}(v) \\
    \dpr{RN_1\rconcat RN_2}(v)& = \left\{
        \begin{aligned}
        &(pk_1 \trcat \cdots \trcat pk_{n1} \trcat \cdots \trcat pk_{n2},\; pk_1' \trcat \cdots \trcat pk_{m1}' \trcat \cdots \trcat pk_{m2}') \mid \\
        &\quad (pk_1 \trcat \cdots \trcat pk_{n1},\; pk_1' \trcat \cdots \trcat pk_{m1}') \in \dpr{RN_1}(v), \\
        &\quad (pk_{n1} \trcat \cdots \trcat pk_{n2},\; pk_{m1}' \trcat \cdots \trcat pk_{m2}') \in \dpr{RN_2}(v)
        \end{aligned}
    \right\} \\
    \dpr{RN^*}(v)& = \bigcup_{n \geq 0} \dpr{RN^n}(v), \quad \text{where } R^0 = \rfilter{\havoc},\quad RN^{n+1} = RN^n \rconcat RN
    \\\\
        \dw{PN}(v) & = \{(\tau,1) \mid \tau\in \dpn{PN}(v)\}\cup\{(\tau,0) \mid \tau\notin \dpn{PN}(v)\} \\
    \dw{\wres{WN}{PN}}(v) & = \{(\tau,w) \mid \tau\in \dpn{PN}(v)\land (\tau,w)\in \dw{WN}(v)\}\cup\{(\tau,0) \mid \tau\notin \dpn{PN}(v)\} \\
    \dw{w\wtimes WN}(v) & = \{(\tau,w\cdot w') \mid (\tau,w')\in \dw{WN}(v)\} \\
    \dw{WN\wtimes w}(v) & = \{(\tau,w'\cdot w) \mid (\tau,w')\in \dw{WN}(v)\} \\
    \dw{WN_1\wor WN_2}(v)& = \{\, (\tau,w_1+w_2)\mid (\tau_1,w_1) \in \dw{WN_1}(v),\  (\tau,w_2) \in \dw{WN_2}(v)\} \\
    \dw{WN_1\wand WN_2}(v)& = \{\, (\tau,\sum\limits_{(\tau_1,w_1) \in \dw{WN_1}(v),\  (\tau_2,w_2) \in \dw{WN_2}(v), \tau = \tau_1 \kand \tau_2} w_1\cdot w_2)\} \\
    \dw{WN^*}(v)& = \{(\tau,\sum\limits_{(\tau,w)\in \dw{W^n}(v)}w)\}, \quad \text{where } WN^0 = \lift{1},\quad WN^{n+1} = WN^n \wand WN
\end{align*}
    \caption{Semantics of NetKAT, Relational NetKAT, and Weighted NetKAT.
    }\vspace{-10pt}
    \label{fig:selected-sem}
\end{figure}

\begin{figure}[t]
\vspace{-5pt}
\centering
\begin{align*}
\dq{PN=\emptyset} &= \{\,v \mid \dpn{PN}(v)=\emptyset\,\}, \\
\dq{PN_1=PN_2} &= \{\,v \mid \dpn{PN_1}(v)=\dpn{PN_2}(v)\,\}, \\
\dq{\qs{f}{WN}} &= \left\{\,v \mid f\!\left(\sum_{(\tau,w)\in \dwn{WN}} w\right)\right\}, \\
\dq{Q_1\cap Q_2} &= \dq{Q_1}\cap\dq{Q_2}, \\
\dq{Q_1\cup Q_2} &= \dq{Q_1}\cup\dq{Q_2}, \\
\dq{\neg Q_1} &= V\setminus \dq{Q_1}.
\end{align*}
\caption{Denotational semantics of queries.}\vspace{-10pt}
\label{fig:query-semantics}
\end{figure}

\paragraph{Denotational Semantics.}
A Parametric NetKAT expression denotes a function from 
valuations \(v\in V\) to network behaviors, while a Parametric Query $Q$ denotes a set of valuations---those valuations that satisfy the given constraints.  The types of each denotation function follow.
\[
\dpred{\cdot} : Pred \to V \to \power{\mathit{Pk}},
\qquad
\dpkr{\cdot} : PkR \to V \to \power{\mathit{Pk} \times \mathit{Pk}},
\]
\[
\dpn{\cdot} : PN \to V \to \power{\mathit{Tr}(Pk)},
\qquad
\drn{\cdot} : RN \to V \to \power{\mathit{Tr}(Pk)\times \mathit{Tr}(Pk)},
\]
\[
\dwn{\cdot} : WN \to V \to \power{\mathit{Tr}(Pk)\times W)}
\qquad
\dq{\cdot}: Q \to \power{V}
.
\]
The definitions of these functions appear in Figures~\ref{fig:pred-pkr-semantics} and~\ref{fig:selected-sem}.  The semantics is a conservative extension of NetKAT:  If a subexpression 
does not mention a parameter then it behaves uniformly across all valuations,
and its semantics coincides with the traditional semantics of NetKAT.
For example,
\(
\dpred{f=c}(v)=\{\, pk \in \mathit{Pk} \mid pk.f=c \,\}.
\)
In contrast, the semantics of (\(f=x\)) depends on the valuation \(v\): a packet satisfies this test exactly when its field \(f\) matches the value assigned to \(x\) by \(v\).
Hence,
\(
\dpred{f=x}(v)=\{\, pk \in \mathit{Pk} \mid pk.f=v.x \,\}.
\)

The semantics of \(Q\) is straightforward once the semantics of \(PN\), \(RN\), and \(WN\) have been defined. 
For the basic queries \(PN=\emptyset\), \(PN_1=PN_2\), and \(\qs{f}{WN}\), the semantics of \(Q\) simply collects all valuations satisfying the corresponding constraint.
The composite queries \(Q_1\cap Q_2\), \(Q_1\cup Q_2\), and \(\neg Q\) are then interpreted by taking the intersection, union, and complement of these sets of valuations, respectively.

\section{Automata and Symbolic Analysis}\label{sec:aut}

In this section, we provide algorithms for calculating the denotational 
semantics of any query $Q$.  
The key observation is that parameters can be treated as additional packet fields---a similar observation as was used to implement symbolic control plane analysis in Expresso~\cite{expresso} and NV~\cite{nv}. 
In other words, a Parametric NetKAT program over packet space \(Pk\) and valuation space \(V\) can be simulated by an ordinary NetKAT-style semantics over the enlarged packet space \(Pk\times V\).
Under this simulation, we can compile Parametric NetKAT and its extensions into their non-parameterized automata models, thereby obtaining both a compatibility towards existing NetKAT features and a compilation method.
We present the automata we use in Section~\ref{subsec:aut} and the embedding theorem in Section~\ref{subsec:emb}.
Once this simulation framework is in place, we develop new algorithms for emptiness checking, equivalence queries, and weighted queries over the translated automata.

\subsection{Automata}\label{subsec:aut}
Our work involves three kinds of automata. The definitions of
NetKAT automata and Relational NetKAT automata are drawn directly from work by Xu~\cite{han26relational}.  Likewise, definitions for weighted NetKAT automata follow from work by Acevedo~\cite{acevedo26weight}, with minor notational changes. Definitions of NetKAT and weighted NetKAT automata are presented in the following;  Relational NetKAT automata 
are relegated to the Appendix for space reasons.


\begin{definition}
A \textbf{NetKAT automaton} is a tuple $M=(S,S_0,S_f,\Delta)$,
where \(S\) is a finite set of states, \(S_0\subseteq S\) is the set of initial states, \(S_f\subseteq S\) is the set of accepting states, and
$\Delta : S\times S \to 2^{Pk\times Pk}$
is a transition relation.
\end{definition}
The key difference between ordinary automata and NetKAT automata lies in
the transition relation.
In an ordinary automaton, a transition depends only on the current input symbol (the type of the transition function is typically $S\times S \to 2^{\Sigma}$), whereas in a NetKAT automaton, a transition depends on both the current input packet and the packet produced at the previous step.  Said another way,
the packet processed during the next step is related to the packet processed by the current step.  The output of the transition function (a relation between current and past packets) exhibits that difference. 

The semantics of a NetKAT automaton \(M\) is defined inductively over packet traces with the help of  a labeled transition system of the following form.
\(
(s_0,x_0)\stackrel{\tau}{\longrightarrow}(s_n,x_n)
\)
Such a transition states that in starting state \(s_0\) with input packet \(x_0\), the automaton can process the trace \(\tau\) and reach state \(s_n\) with output packet \(x_n\).  Legal transitions are defined as follows.
\begin{itemize}
    \item \textbf{Base case:} $(s_0, x_0) \stackrel{\epsilon}{\longrightarrow} (s_0, x_0)$
    \item \textbf{Inductive case:} If $(x_0, x_1) \in \Delta(s_0, s_1)$ and $(s_1, x_1) \stackrel{\tau}{\longrightarrow} (s_n, x_n)$, then
    $
    (s_0, x_0) \stackrel{x_1 \tau}{\longrightarrow} (s_n, x_n).
    $
\end{itemize}
\noindent
We define \emph{language accepted by \(M \)} to be $L(M)$, which is the set of traces that begin with an initial state and terminate in an accepting state.  Formally:
\(
L(M) \defeq \{ pk_0\tau \mid \exists s_0\in S_0,  s_f \in S_f, pk_0 \in \mathit{Pk}. (s_0,pk_0) \stackrel{\tau}{\longrightarrow} (s_f, pk_n) \} 
\)

\paragraph{Weighted NetKAT Automata.}
In the original Weighted NetKAT work \cite{acevedo26weight}, the authors propose a weighted automaton model in which the transition function maps each state in \(S\) to a monadic structure.
Here, we adopt an equivalent but simpler presentation, using an ordinary weighted transition relation of the form
\(
S\times S \to 2^{\mathit{Pk}\times \mathit{Pk}\times W},
\)
so as to better align with our presentations of NetKAT automata and Relational NetKAT automata.
This weighted automaton model is essentially the same as that of prior work \cite{acevedo26weight}, differing only in notation.

\begin{definition}
A \textbf{Weighted NetKAT automaton} is a tuple
\(
M = (S,I,F,\Delta),
\)
where \(S\) is a finite set of states, \(I : S \to W\) assigns an initial weight to each state, \(F : S \to W\) assigns a final weight to each state, and
\(
\Delta : S \times S \to 2^{\mathit{Pk}\times \mathit{Pk}\times W}
\)
is a transition relation.
\end{definition}

The transition semantics of \(M\) is again defined inductively over packet traces. In this case, a labeled transition
\(
(s_0,x_0)\eqby{\tau}{w}{\longrightarrow}(s_n,x_n)
\)
means that, starting from state \(s_0\) with input packet \(x_0\), the automaton can process the trace \(\tau\), reach state \(s_n\), produce output packet \(x_n\), and accumulate weight \(w\). Note that the weight is obtained by summing over all possible intermediate transitions.

\begin{itemize}
    \item \textbf{Base case:}
    \(
    (s_0,x_0)\eqby{\epsilon}{1}{\longrightarrow}(s_0,x_0).
    \)

    \item \textbf{Inductive case:}
    If
    \(
    (x_0,x_1,w_1)\in \Delta(s_0,s_1)
    \) and \(
    (s_1,x_1)\eqby{\tau}{w_2}{\longrightarrow}(s_n,x_n),
    \)
    then
    \(
    (s_0,x_0)\eqby{x_1\tau}{w_1w_2}{\longrightarrow}(s_n,x_n).
    \)
\end{itemize}

The language accepted by \(M\) consists of all input traces together with their accumulated weights, starting from an initial state and ending in an accepting state:
\[
L(M)
=
\{
(x_0\tau,w)
\;|\;
w =
\sum_{(s_0,x_0)\eqby{\tau}{w'}{\longrightarrow}(s_n,x_n)}
I(s_0)\,w'\,F(s_n)
\}.
\]

\subsection{Embedding}\label{subsec:emb}

\OMIT{
Many NetKAT extensions go well beyond the constructs of original NetKAT \cite{netkat}.
For example, the constructs \(PN_1 \cap PN_2\) and \(PN_1 \backslash PN_2\) were introduced in \cite{Moeller24Katch}, \(PkR\) and \(PN \triangleright RN\) were introduced in \cite{han26relational}, and Weighted NetKAT was studied in \cite{acevedo26weight}.
Some of these constructs are highly non-standard and technically involved.
This raises a natural question: can we integrate parameterization into the NetKAT ecosystem in a principled and lightweight way?
}

Implementing and optimizing new regular languages and their automata is challenging.
Therefore, rather than developing entirely new machinery for Parametric NetKAT, we aim to reuse the existing NetKAT machinery whenever possible.

In our semantics, every parametrized language is interpreted functionally: it maps each parameter valuation in \(V\) to a corresponding non-parametric semantics.
For example, a predicate expression \(Pred\) has semantics of type
\(
V \to \power{Pk},
\)
which describes the packet semantics under each concrete valuation \(v\).
Although this functional view is natural for expressing the role of parameters, it is less convenient for compilation and algorithmic purposes.
In particular, the valuation space \(V\) may be very large---for example, it may contain as many as \(2^{32}\) values when parameterizing an IP address---while our goal is often to identify all valuations \(v\) satisfying a given constraint.
Instead of applying valuations one by one, we seek a symbolic treatment of parameters.
A natural idea, then, is to transform the functional semantics
\(
V \to \power{Pk}
\)
into an equivalent set-based semantics
\(
\power{Pk\times V},
\)
which records each valuation \(v\) together with its associated non-parametric behavior.
For instance, the predicates \(f=x\) and \(f=c\), which are originally interpreted as
\[
\lambda v.\ \{\, pk \mid pk.f = v.x \,\}
\qquad\text{and}\qquad
\lambda v.\ \{\, pk \mid pk.f = c \,\},
\]
are translated into the set-based semantics
\[
\{\, (pk,v) \mid pk.f = v.x \,\}
\qquad\text{and}\qquad
\{\, (pk,v) \mid pk.f = c \,\}.
\]
\noindent
Similarly, the semantics of \(PN\), \(RN\), and \(WN\) can be transformed into the set-based semantics
\[
\power{Tr(Pk\times V)},
\qquad
\power{Tr(Pk\times V)\times Tr(Pk\times V)},
\qquad\text{and}\qquad
\power{Tr(Pk\times V)\times W},
\]
respectively, as we will formalize later in this section.

The next observation is that this translated set-based semantics has exactly the same type as ordinary non-parametric NetKAT over the extended packet space \(Pk\times V\).
Specifically, in the non-parametric setting, NetKAT, Relational NetKAT, and Weighted NetKAT have semantics of type
\[
\power{Tr(Pk)},
\qquad
\power{Tr(Pk)\times Tr(Pk)},
\qquad\text{and}\qquad
\power{Tr(Pk)\times W},
\]
respectively.
Moreover, the way we access the value of a parameter variable \(x\) under a valuation \(v\), namely \(v.x\), is exactly analogous to the way we access the value of a packet field \(f\) from a packet \(pk\), namely \(pk.f\).
This suggests a natural way to reuse the NetKAT ecosystem: encode parameter variables as additional field names, and encode valuations as part of the extended packet space.

A naive way to proceed would be to translate Parametric NetKAT directly into ordinary NetKAT syntax.
However, while such a translation is theoretically possible, it leads to an exponential blowup.
For example, consider the Parametric NetKAT construct \(f=x\) over the field space \(Fld=\{f\}\).
A naive translation into ordinary NetKAT over the extended field space \(Fld=\{f,x\}\) would be
\(
\sum_{v\in V} (f=v)\cdot(x=v).
\)
The underlying reason is that ordinary NetKAT supports comparisons and assignments only between packet fields and constants, but not directly between packet fields and other packet fields.
Thus, a direct syntactic translation would require explicit enumeration of all valuations, which is clearly undesirable.

The key observation is that this explosion can be avoided if we compile not to NetKAT syntax, but directly to NetKAT automata.
Unlike the surface language, NetKAT automata support arbitrary packet relations of type \(2^{Pk\times Pk}\) as transitions, and therefore also support relations of type \(2^{(Pk\times V)\times(Pk\times V)}\) over the extended packet space.
Consequently, we do not need exponentially large syntactic encodings for constructs such as \(f=x\) and \(f\leftarrow x\); instead, we can compile them directly using their translated set semantics in the automaton.
Once this encoding is in place, the existing automata constructions for the various NetKAT extensions can be reused without modification.

In the remainder of this section, we first formalize the transformation from functional semantics to set-based semantics.
We then show how to systematically reuse existing NetKAT automata constructions to obtain a correct compilation procedure.
At this point, although we can compile the translated set semantics of Parametric NetKAT, it is no longer presented in its original functional form.
Accordingly, the algorithms for the translated automata must also be revised, both to account for the new semantics and to answer queries that require more than yes-or-no answers.
This will be the subject of Section~\ref{subsec:alg}.

\paragraph{Set-based Interpretation.}
To show that parameters can be treated as extra packet fields concretely, we define a semantic translation from the parametric NetKAT over packet space $Pk$ and parameter space $V$ to a semantic over $Pk\times V$. We call this translation the \emph{Set-based interpretation}.
Previously, we defined the semantic functions
\(
\dpred{\cdot}, \dpkr{\cdot}, \dpn{\cdot}, \drn{\cdot}, \text{and}\ \dwn{\cdot},
\)
which process objects in \(Pred\), \(PkR\), \(PN\), \(RN\), and \(WN\), respectively, and return a function from valuations \(v \in V\) to the corresponding denotation under \(v\).
We now reinterpret these valuation-indexed semantics as ordinary semantics over the product packet space.
\[
\begin{aligned}
\ipred{\cdot} &: (V \to \power{Pk}) \to \power{Pk \times V} \\
\ipred{f}
  &= \{\, (pk,v) \mid pk \in f(v) \,\},
\\[0.8ex]
\ipkr{\cdot} &: (V \to  \power{Pk \times Pk}) \to \mathcal P((Pk \times V)\times(Pk \times V)) \\
\ipkr{f}
  &= \{\, ((pk_1,v),(pk_2,v)) \mid (pk_1,pk_2) \in f(v) \,\},
\\[0.8ex]
\ipn{\cdot} &: (V \to \mathcal P(Tr(Pk))) \to \mathcal P(Tr(Pk \times V)) \\
\ipn{f}
  &= \{\, ((pk_1,v)(pk_2,v)\cdots(pk_n,v)) \mid pk_1pk_2\cdots pk_n \in f(v) \,\},
\\[0.8ex]
\irn{\cdot} &: (V \to \mathcal P(Tr(Pk)\times Tr(Pk))) \to \mathcal P(Tr(Pk \times V)\times Tr(Pk \times V)) \\
\irn{f}
  &= \left\{
     \bigl((pk_{11},v)(pk_{12},v)\cdots(pk_{1n_1},v),\;
           (pk_{21},v)(pk_{22},v)\cdots(pk_{2n_2},v)\bigr)
     \;\middle|\; \right. \\
  &\qquad \left.
     (pk_{11}pk_{12}\cdots pk_{1n_1},\;
      pk_{21}pk_{22}\cdots pk_{2n_2}) \in f(v)
     \right\},
\\[0.8ex]
\iwn{\cdot} &: (V \to \mathcal P(Tr(Pk)\times W)) \to \mathcal P(Tr(Pk \times V)\times W) \\
\iwn{f}
  &= \{\, ((pk_1,v)(pk_2,v)\cdots(pk_n,v),w) \mid  (pk_1pk_2\cdots pk_n,w) \in f(v) \,\}.
\end{aligned}\]
We promote the packet space to \(Pk\times V\) by attaching the same valuation \(v\) uniformly to every packet in a trace.
For example, if the semantics of a Parametric NetKAT expression \(PN\) under valuation \(v\) contains the trace \(pk_1pk_2\cdots pk_n\), then the corresponding set-based semantics contains the trace
\(
(pk_1,v)(pk_2,v)\cdots(pk_n,v).
\) Although this extension of the packet space is simple, it faithfully reflects the original functional semantics: once the valuation \(v\) is fixed, it remains unchanged throughout the entire trace.
This property will also be crucial for algorithm design, since it allows us to treat the parameter valuation as fixed along a run, as we will see later in Section~\ref{subsec:alg}.

Once the set-based semantics is in place, we can then show, essentially immediately, that this interpretation is homomorphic with respect to the original denotational semantics.
In turn, this homomorphism guarantees that the existing correctness results for semantic-based automata constructions in NetKAT continue to apply in the parametric setting.

\begin{theorem}[Homomorphism]\label{thm:homo}
The Set-based interpretation is homomorphic with respect to the semantic constructors of Parametric NetKAT.

\paragraph{Parametric NetKAT}
For all valuations \(v\in V\), the following hold:
\begin{align*}
\ipn{\dpn{PkR}}
  &= \{\, (pk_1,v)(pk_2,v) \mid ((pk_1,v),(pk_2,v)) \in \ipkr{\dpkr{PkR}} \,\}, \\
\ipn{\dpn{PN \triangleright RN}}
  &= \{\, \tau_2 \mid \tau_1 \in \ipn{\dpn{PN}},\ (\tau_1,\tau_2)\in \irn{\dpr{RN}},\ |\tau_2|\ge 2 \,\}, \\
\ipn{\dpn{PN_1 + PN_2}}
  &= \ipn{\dpn{PN_1}} \cup \ipn{\dpn{PN_2}}, \\
\ipn{\dpn{PN_1 \kcap PN_2}}
  &= \ipn{\dpn{PN_1}} \cap \ipn{\dpn{PN_2}}, \\
\ipn{\dpn{PN_1 \kdiff PN_2}}
  &= \ipn{\dpn{PN_1}} \setminus \ipn{\dpn{PN_2}}, \\
\ipn{\dpn{PN_1 \kand PN_2}}
  &= \{\, \tau \mid \exists \tau_1 \in \ipn{\dpn{PN_1}},\ \exists \tau_2 \in \ipn{\dpn{PN_2}},\ \tau = \tau_1 \kand \tau_2 \,\}, \\
\ipn{\dpn{\ks{PN}}}
  &= \bigcup_{n\ge 0} \ipn{\dpn{PN^n}},
  \qquad \text{where } PN^0 = \idpkr,\quad PN^{n+1}=PN^n \kand PN, \\
\ipn{\dpn{dup}}
  &= \{\, (pk,v)(pk,v)(pk,v) \mid pk\in \mathit{Pk},\ v\in V \,\}.
\end{align*}

\end{theorem}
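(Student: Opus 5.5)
The plan is to verify each equation by unfolding the definitions of the set-based interpretation and of the denotational semantics in Figure~\ref{fig:selected-sem}, case by case. The single structural fact doing the work is a \emph{uniformity lemma}, proved first. Every trace in $\ipn{F}$ has the form $(pk_1,v)\cdots(pk_n,v)$ with one valuation $v$ repeated along it. Conversely, a uniform trace $(pk_1,v)\cdots(pk_n,v)$ lies in $\ipn{F}$ iff $pk_1\cdots pk_n\in F(v)$. The analogous statements hold for $\ipkr{\cdot}$ and $\irn{\cdot}$, where both components carry the same $v$. With this lemma, each membership question on the left reduces to a membership question at a fixed $v$, which is exactly what the functional semantics answers.

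The cases $PkR$, $+$, $\kcap$, $\kdiff$ and $dup$ are immediate. For $+$ and $\kcap$, $\ipn{\cdot}$ commutes with pointwise union and intersection, since it is a disjoint union over $v$ of injective images. For $\kdiff$ the same disjointness argument applies; here it matters that traces tagged with different valuations never coincide, so the difference is computed fiberwise. The $PkR$ case just repackages a pair as a two-element trace, and $dup$ ranges over all $(pk,v)$.

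The only cases that need care are those involving composition: concatenation, star, and application $PN\triangleright RN$. For concatenation, take $\tau=\tau_1\kand\tau_2$ with $\tau_i\in\ipn{\dpn{PN_i}}$. The defining side condition says the last packet of $\tau_1$ equals the first packet of $\tau_2$. Since these are pairs $(pk,v_1)$ and $(pk',v_2)$, equality forces $v_1=v_2$. So both pieces carry the same valuation, their projections lie in $\dpn{PN_1}(v)$ and $\dpn{PN_2}(v)$, and the glued projection is in $\dpn{PN_1\kand PN_2}(v)$. The reverse inclusion is the same computation run backwards. Application is handled identically. The pair $(\tau_1,\tau_2)\in\irn{\dpr{RN}}$ already shares one $v$, and $\tau_1\in\ipn{\dpn{PN}}$ pins that $v$ down. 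The length condition $|\tau_2|\ge 2$ is preserved because tagging does not change length.

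For star, I will first show by induction on $n$ that $\ipn{\dpn{PN^n}}$ equals the $n$-fold set-based concatenation, using the concatenation case at each step. For the base case $PN^0=\idpkr$, the $PkR$ clause gives all $(pk,v)(pk,v)$. Since $\ipn{\cdot}$ commutes with arbitrary unions, the equation for $\ks{PN}$ then follows. The main obstacle is mostly bookkeeping. The displayed equations are meant as identities of sets over $Pk\times V$, so the phrase ``for all $v$'' should be read as the uniform tagging. Every glue point has to be checked to force a common valuation, and this is exactly where the concatenation definition's equality of endpoint packets is used.
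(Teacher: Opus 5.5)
Your proposal is correct and is essentially the paper's proof: the appendix checks each clause by directly unfolding the set-based interpretation against the denotational semantics, and your uniformity lemma makes explicit two facts that unfolding uses silently, namely that valuation-tagging is injective and that the glue points in concatenation and application force a common valuation. The induction on $n$ in the star case is harmless but unnecessary, since the stated right-hand side is already $\bigcup_{n\ge 0}\ipn{\dpn{PN^n}}$ and follows at once from the set-based interpretation commuting with pointwise unions.
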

\begin{proof}
By direct unfolding of the definitions.
We defer the full proof to the Appendix.
\end{proof}

Similarly, one can prove corresponding homomorphism theorems for Relational NetKAT and Weighted NetKAT; we defer these to the Appendix.
An immediate consequence of these homomorphism results is that we can reuse the automata-construction techniques developed in prior work
\cite{Moeller24Katch,acevedo26weight,han26relational,prob-netkat}.

\begin{theorem}[Correctness of Compilation]
\label{thm:correct-compile}
For every \(PN\), \(RN\), and \(WN\), we can construct a NetKAT automaton \(M\), a transducer \(T\), and a Weighted NetKAT automaton \(WM\) such that
\[
L(M)=\ipn{\dpn{PN}},
\qquad
L(T)=\irn{\dpr{RN}},
\qquad
L(WM)=\iwn{\dw{WN}}.
\]
\end{theorem}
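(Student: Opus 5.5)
We argue by simultaneous structural induction on \(PN\), \(RN\), and \(WN\), treating the set-based interpretation as an ordinary NetKAT semantics over the enlarged packet space \(Pk\times V\). The base cases are the only place where parameters appear explicitly. For a packet relation \(PkR\) we take the two-state NetKAT automaton with a single transition labelled by the relation \(\ipkr{\dpkr{PkR}}\subseteq (Pk\times V)\times(Pk\times V)\). This relation is computed directly from the set-based semantics: for instance \(\{((pk,v),(pk[f\leftarrow v.x],v))\}\) for \(\uppkr{f}{x}\), and \(\{((pk_1,v),(pk_2,v)) \mid pk_1.f=v.x,\ pk_2.f=v.x\}\) for \(\lift{f=x}\). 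Every such relation preserves the \(V\)-component. By the first clause of Theorem~\ref{thm:homo}, its language is exactly \(\ipn{\dpn{PkR}}\). The case \(dup\) is handled the same way, using the standard three-packet automaton over \(Pk\times V\). Base cases of \(RN\) (\(\rfilter{PkR}\)) and of \(WN\) (an unweighted \(PN\) lifted with weight \(1\)) are treated analogously.

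For the inductive cases, we invoke the existing automata constructions from prior work as black boxes, instantiated at packet space \(Pk\times V\). These are union, concatenation, star, intersection and difference for NetKAT automata~\cite{Moeller24Katch}, the transducer constructions and the application \(PN\triangleright RN\) for Relational NetKAT~\cite{han26relational}, and sum, product, star and scalar multiplication for Weighted NetKAT~\cite{acevedo26weight}. Each of these constructions is correct with respect to the \emph{non-parametric} semantic operator on \(Pk\times V\): for example, \(L(M_1\kand M_2)=L(M_1)\kand L(M_2)\) as trace sets over \(Pk\times V\). Theorem~\ref{thm:homo} and its relational and weighted analogues in the Appendix state that \(\ipn{\cdot}\), \(\irn{\cdot}\) and \(\iwn{\cdot}\) commute with these very operators. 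Chaining the induction hypothesis with the homomorphism equation for each constructor therefore closes every case.

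There are two new weighted forms. The first, \(WN\wtimes w\), is handled by right-multiplying the final weights \(F\), symmetrically to how \(w\wtimes WN\) scales the initial weights \(I\). The second is the restriction \(\wres{WN}{PN}\). We take the product of the weighted automaton for \(WN\) with the NetKAT automaton for \(PN\). The product synchronises on transitions whose packet pairs lie in both relations and keeps the weight from \(WN\). Its final weight is \(F(s)\) when the \(PN\)-component is accepting and \(0\) otherwise. We then check that this product realises \(\iwn{\dw{\wres{WN}{PN}}}\).

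The main obstacle I expect is precisely this product construction, and in particular its interaction with nondeterminism in the \(PN\) automaton. If a trace has several accepting runs in \(M_{PN}\), the product would count the \(WN\) weight several times. To avoid this, I would first determinize \(M_{PN}\) using the subset construction, which is available for NetKAT automata in prior work. That way each trace has at most one run in the \(PN\) component.

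A secondary check is that all constructions stay within the image of the interpretation maps, that is, that traces never mix valuations. This holds because every base relation preserves the \(V\)-component and the constructions only compose and intersect relations. Complement-like operations such as \(\kdiff\) are taken relative to \(\ipn{\cdot}\)-images, which are themselves valuation-uniform. It therefore remains to note that the Appendix homomorphism results cover difference and \(\triangleright\) with the length side condition \(|\tau_2|\ge 2\).
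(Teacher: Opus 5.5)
Your route is the paper's: a two-state automaton for $PkR$, black-box reuse of prior constructions justified by Theorem~\ref{thm:homo}, rescaling for $WN\wtimes w$, and a product for $\wres{WN}{PN}$. However, the step that drives the induction fails for $\rdelete{PN}$ and $\rinsert{PN}$. You claim the reused constructions ``only compose and intersect relations'' and so never mix valuations; that is false here. Over packet space $Pk\times V$, the non-parametric $\mathit{Delete}$ relates a trace to an \emph{arbitrary} packet of $Pk\times V$: in the transducer model nothing ties the initial right-tape packet $y$ to the left tape. So the black-box transducer also accepts $\bigl((pk_1,v)\cdots(pk_n,v),\,(pk,v')\bigr)$ with $v'\neq v$. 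The appendix clause for $\irn{\dpr{\rdelete{PN}}}$ pins the fresh packet to $(pk,v)$. It is therefore not an instance of ``$\irn{\cdot}$ commutes with the non-parametric operator'', and your chaining step breaks. The same holds for $\rinsert{PN}$. It also holds for the unit $\rfilter{\havoc}$ of $RN^*$, if that unit is instantiated as all of $(Pk\times V)\times(Pk\times V)$ rather than through $\ipkr{\cdot}$.

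This is not cosmetic. Take $c\in\mathsf{Val}$ with $|\mathsf{Val}|\ge 2$. The black-box automaton for $\lift{x=c}\triangleright(\rdelete{\alltraces}\rconcat\rinsert{\havoc})$ accepts $(pk_1,v')(pk_2,v')$ for \emph{every} $v'$. Yet the expression is nonempty only at valuations with $v'.x=c$, so Algorithm~\ref{alg:emptiness} would answer the emptiness query wrongly. The repair is cheap: guard these constructors with a valuation-equality check. For example, compile $\rdelete{PN}$ as the semantically equal $\rfilter{\havoc}\rconcat\rdelete{PN}$, with $\havoc$ compiled through $\ipkr{\cdot}$ to $\{((pk,v),(pk',v))\}$, and do the same for $\mathit{Insert}$. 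Alternatively, add an initial $\Delta_E$ check that both tapes carry the same $v$. The paper's proof makes the same unqualified appeal to Theorem~\ref{thm:homo}, so this is a point to make explicit rather than a divergence from it.

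Separately, your determinization in the restriction case improves on the paper, which only says to take the product and zero out rejected traces. With a nondeterministic $PN$-component, the product adds the $WN$-weight once per accepting run, which is wrong in non-idempotent semirings. The subset construction fixes this, and it is valid here because a trace determines the packet in every configuration. Apply it consistently, though. Your base case $WN=PN$ ``lifted with weight $1$'' has exactly the same defect, since $\dw{PN}$ assigns weight $1$, not the number of accepting runs. That automaton must also be determinized (or at least made unambiguous) before lifting.
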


\begin{proof}
By Theorem~\ref{thm:homo}, every automata construction already proved correct for ordinary NetKAT, Relational NetKAT, and Weighted NetKAT also applies to their parametric interpretations.
For example, let \(PN_1\) and \(PN_2\) be Parametric NetKAT expressions where we want to compile the expression \(PN_1+PN_2\), and suppose we have NetKAT automata \(M_1\) and \(M_2\) such that
\[
L(M_1)=\ipn{\dpn{PN_1}}
\qquad\text{and}\qquad
L(M_2)=\ipn{\dpn{PN_2}}.
\]
Prior work gives a construction of a NetKAT automaton \(M_3\) such that
\(
L(M_3)=L(M_1)\cup L(M_2).
\)
Therefore,
\(
L(M_3)=\ipn{\dpn{PN_1}}\cup \ipn{\dpn{PN_2}}
      =\ipn{\dpn{PN_1+PN_2}},
\)
which automatically gives us the desired automaton.

Thus, it remains only to handle the constructs that are new in our setting.
For the atomic construct \(PkR\) which contains the new parameterized expression such as \(f=x\) and \(f\leftarrow x\), we can construct an automaton with two states and a single transition to represent its semantics; we defer the construction to the appendix.
For the construct \(WN \wtimes w\), compilation is immediate by symmetry with the existing construction for \(w \wtimes WN\).
For the construct \(\wres{WN}{PN}\), compilation is also straightforward: we take the product of the automaton for \(PN\) with the Weighted NetKAT automaton for \(WN\), and assign weight \(0\) to traces rejected by \(PN\).
Therefore, every construct of \(PN\), \(RN\), and \(WN\) admits a correct automata compilation, yielding the result.
\end{proof}

\subsection{Algorithms}\label{subsec:alg}
\begin{algorithm}[t]
\caption{Inferring valuations for which a translated NetKAT automaton is non-empty}
\label{alg:emptiness}
\textbf{Input:} A NetKAT automaton \(M=(S,S_0,S_f,\Delta)\) over packet space \(Pk\times V\), such that \(L(M)=\ipn{\dpn{PN}}\).

\textbf{Output:} The set of parameter valuations
\[
\{\, v \mid \dpn{PN}(v)\neq\emptyset \,\}.
\]

\begin{enumerate}
    \item Initialize \(Reach : S \to 2^{Pk\times V}\) and \colorbox{gray!30}{\(V_{\mathit{sol}}\subseteq V\)} by
    \[
    Reach(s)=
    \begin{cases}
    Pk\times V & \text{if } s\in S_0,\\
    \emptyset & \text{otherwise,}
    \end{cases}
    \qquad
    \colorbox{gray!30}{$V_{\mathit{sol}}\gets \emptyset$}.
    \]

    \item \textbf{While} some \(Reach(s)\) changes:
    \begin{enumerate}
        \item For each changed \(s\in S_f\),
        \[
        \colorbox{gray!30}{$V_{\mathit{sol}} \gets V_{\mathit{sol}} \cup \{\, v \mid (pk,v)\in Reach(s) \,\}$}.
        \]

        \item For each changed \(s\in S\), and each \(s'\in S\),
        \[
        \begin{aligned}
        Reach(s') \gets\;& Reach(s') \cup \{(pk_2,v)\mid \colorbox{gray!30}{$v\notin V_{\mathit{sol}}$}\land \\
        &((pk_1,v),(pk_2,v))\in \Delta(s,s') \land (pk_1,v)\in Reach(s)\}.
        \end{aligned}
        \]
    \end{enumerate}

    \item Return \(V_{\mathit{sol}}\).
\end{enumerate}
\end{algorithm}

In this section, we present algorithms for evaluating all queries in our language.
Recall that the semantics of every query is a set of parameter valuations satisfying the corresponding constraint.
For composite queries such as \(Q_1\cap Q_2\), \(Q_1\cup Q_2\), and \(\neg Q\), computation is straightforward: the desired set of valuations is obtained by set intersection, union, and complement, respectively.
Thus, the main technical task is to handle the atomic queries \(PN=\emptyset\), \(PN_1=PN_2\), and \(\qs{f}{WN}\).

At the algorithmic level, there are two essential new procedures in this section: an emptiness-checking algorithm and an aggregation algorithm.
The emptiness-checking algorithm computes all parameter valuations under which the translated NetKAT automaton reaches a final state.
It therefore directly answers the query \(PN=\emptyset\), and it also yields an algorithm for \(PN_1=PN_2\) by reducing equivalence to emptiness via the symmetric difference
\(
PN_1\backslash PN_2 + PN_2\backslash PN_1 = \emptyset.
\)

The aggregation algorithm is designed for weighted queries of the form \(\qs{f}{WN}\).
Such a query asks for all valuations \(v\) such that, under valuation \(v\), the total weight of all traces satisfies the Boolean predicate \(f\).
Thus, the main task is first to compute, for each valuation \(v\), the aggregate weight of all traces associated with \(v\).
This is exactly the purpose of our aggregation algorithm.
Once this aggregate has been computed, the query result is obtained simply by filtering with \(f\).

Fortunately, we do not need to design these procedures from scratch.
Both the emptiness-checking algorithm and the aggregation algorithm can be obtained by adapting existing algorithms for non-parametric automata.
In what follows, we present these constructions and highlight the modifications and optimizations needed to make them work in the parametric setting.

\begin{algorithm}[t]
\caption{Aggregate sum of weights over all traces of a Weighted NetKAT automaton}
\label{alg:aggregate}
\textbf{Input:} A Weighted NetKAT automaton
\(WM=(S,I,F,\Delta)\) over \(Pk\times V\), such that \(L(WM)=\iwn{\dw{WN}}\).

\textbf{Output:} The set associating each valuation with its aggregated weight:
\[
\{\, (v,\sum_{(\tau,w')\in \dw{WN}(v)} w') \mid v\in V \,\}.
\]

\begin{enumerate}
    \item
    As shown in prior work \cite{acevedo26weight}, the operations \(+\), \(\cdot\), and \((-)^*\) on transitions in Weighted NetKAT automaton are well defined, and can be efficiently computed from the corresponding \(+\), \(\cdot\), and \((-)^*\) operations of the underlying weight semiring \(W\).

    Based on these operations, apply the standard automata state-elimination algorithm \cite{Brzozowski1963SignalFG} until all intermediate weighted transitions have been eliminated and only a single final transition remains.

    \item Let
    \[
    \mathit{state\_elimination}((pk_1,v_1),(pk_2,v_2))
    \]
    denote the weight on the final transition from input packet \((pk_1,v_1)\) to output packet \((pk_2,v_2)\).
    Return
    \[
    \colorbox{gray!30}{$
    \left\{\,
    \left(v,\sum_{pk_1,pk_2\in Pk} \mathit{state\_elimination}((pk_1,v),(pk_2,v))\right)
    \;\middle|\;
    v\in V
    \right\}.
    $}
    \]
\end{enumerate}
\end{algorithm}

\paragraph{Emptiness Checking.}
The emptiness-checking algorithm (See Algorithm~\ref{alg:emptiness}) maintains two data structures: (1) \(Reach(s)\), which records the currently reachable packets at each state \(s\); and
(2) \(V_{\mathit{sol}}\), which records the valuations that have already reached a final state.
The algorithm repeatedly propagates reachable packets until a fixed point is reached, and then returns \(V_{\mathit{sol}}\).

Compared with the non-parametric emptiness-checking algorithm for NetKAT automata, the main addition is the solution set \(V_{\mathit{sol}}\), highlighted in the grey boxes.
This set serves two purposes.
First, it records the valuations satisfying the query, thereby providing more information than a simple yes-or-no answer.
Second, it enables an early-exit optimization: once a valuation \(v\) has already been shown to reach a final state, there is no need to continue propagating transitions for that valuation.
This is reflected in Step~(2)(b), where we update \(Reach(s')\) only using valuations \(v\notin V_{\mathit{sol}}\).
The soundness of this optimization follows from Theorem~\ref{thm:correct-compile}, which gives
\(
L(M)=\ipn{\dpn{PN}}.
\)
This translated semantics guarantees that the valuation component \(v\) remains unchanged throughout all transitions.
Hence, once a valuation has been identified at a final state, it can be safely discarded from further exploration at any other state.

With this intuition in place, the correctness statement is straightforward.

\begin{theorem}
Algorithm~\ref{alg:emptiness} (Emptiness Checking) is correct with respect to its output specification.
\end{theorem}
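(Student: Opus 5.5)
The algorithm's output is the set of valuations for which \(\dpn{PN}(v)\neq\emptyset\). The key structural fact is that, by Theorem~\ref{thm:correct-compile}, \(L(M)=\ipn{\dpn{PN}}\), so every accepted trace carries a single valuation throughout. I would first make precise a slightly stronger invariant about the automaton itself. Any transition used on a path from an initial to an accepting state preserves the valuation component; this holds for the constructed automata, since each atomic transition is built from \(\ipkr{\cdot}\) of a packet relation. Alternatively, one may note that Step~(2)(b) only ever propagates pairs \(((pk_1,v),(pk_2,v))\), so \(Reach\) only contains valuation-preserving runs. Combined with \(L(M)=\ipn{\dpn{PN}}\), this gives
\[
v \in \{\, v \mid \dpn{PN}(v)\neq\emptyset\,\}
\iff
\exists s_0\in S_0,\ s_f\in S_f,\ pk_0,pk_n,\ \tau.\ (s_0,(pk_0,v))\stackrel{\tau}{\longrightarrow}(s_f,(pk_n,v))
\]
where the run uses only \(v\)-preserving transitions. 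Call such a \(v\) \emph{accepted}, and its witnessing runs \emph{\(v\)-runs}.

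Soundness (\(V_{\mathit{sol}}\) contains only accepted valuations) is proved by maintaining the invariant: \((pk,v)\in Reach(s)\) implies there is a \(v\)-run from some initial state with some initial packet to \((s,(pk,v))\). This holds at initialization, where the run is empty and \(s\in S_0\), and it is preserved by Step~(2)(b), which extends a run by one \(v\)-preserving transition. Step~(2)(a) then adds \(v\) to \(V_{\mathit{sol}}\) only when \((pk,v)\in Reach(s)\) for some \(s\in S_f\), so \(v\) is accepted. Termination is immediate: \(S\), \(Pk\) and \(V\) are finite and \(Reach\) grows monotonically.

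Completeness is the main obstacle, because the guard \(v\notin V_{\mathit{sol}}\) prunes propagation. I would argue by contradiction. Suppose \(v\) is accepted but \(v\notin V_{\mathit{sol}}\) at termination. Because \(V_{\mathit{sol}}\) only grows, \(v\) was never in \(V_{\mathit{sol}}\), so the guard never blocked any propagation for \(v\). Take a \(v\)-run \((s_0,x_0)\to(s_1,x_1)\to\cdots\to(s_n,x_n)\) with \(s_n\in S_f\), and show by induction on \(i\) that \(x_i\in Reach(s_i)\) at the fixed point. The base case holds since \(Reach(s_0)=Pk\times V\). For the inductive step, \(x_i\) entered \(Reach(s_i)\) at some iteration, which marks \(s_i\) as changed. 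Step~(2)(b) then processes \(s_i\) and, since the guard does not block \(v\), adds \(x_{i+1}\) to \(Reach(s_{i+1})\). Finally, \(x_n\in Reach(s_n)\) with \(s_n\in S_f\): whenever \(x_n\) was added, \(s_n\) changed, so Step~(2)(a) would have placed \(v\) in \(V_{\mathit{sol}}\), a contradiction.

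The one detail needing care is the worklist semantics of ``changed'': every change must be processed by (a) and (b) in some later or current iteration before termination. I would make this explicit by reading the loop as a standard worklist, where a state is re-examined after each change. With soundness and completeness established, the output equals \(\{v\mid \dpn{PN}(v)\neq\emptyset\}\), and this set answers the query \(PN=\emptyset\) by complementation.
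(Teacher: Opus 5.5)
Your proposal is correct and follows the same route as the paper's proof. Soundness comes from the invariant that every element of $Reach(s)$ is reachable from an initial state. Completeness comes from induction along an accepting run, with the pruning guard handled by noting that $v$ would already be in $V_{\mathit{sol}}$. Your contradiction framing of completeness, your explicit worklist reading and termination argument, and your restriction of valuation preservation to accepting runs are somewhat more careful than the paper's direct version. The paper instead asserts that every run preserves $v$.
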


\begin{proof}
See the Appendix.
\end{proof}

\paragraph{Weighted Queries.}
The role of the aggregation algorithm is to compute, for each parameter valuation \(v\), the total weight \(w\) associated with \(v\).
To implement such an algorithm, we use of the classic state-elimination algorithm for automata \cite{Brzozowski1963SignalFG}.

The most common use of state elimination is to convert an automaton into an equivalent regular expression.
For Weighted NetKAT automata, however, we can reinterpret \(+\), \(\cdot\), and \((-)^*\) as the combination, concatenation, and closure operations on weighted transitions, as defined in prior work \cite{acevedo26weight}.
Under this interpretation, the state-elimination algorithm still applies.
Since each weighted transition has type
\(
2^{(Pk\times V)\times (Pk\times V)\times W},
\)
the final transition produced by state elimination has the same type.
It therefore represents, for each input-output packet pair, the weight obtained by summing over all traces connecting that input packet to that output packet.

The remaining step involves post-processing the final transition.
Because it already aggregates weights with respect to the initial and final packets, all that remains is to sum over packets while grouping by the parameter valuation.
By Theorem~\ref{thm:correct-compile}, we have
\(
L(WM)=\iwn{\dw{WN}},
\)
which guarantees that the valuation component \(v\) remains unchanged throughout every transition.
Hence, by summing over all input and output packets carrying the same valuation \(v\), we obtain exactly the desired aggregate weight for \(v\).

\begin{theorem}
Algorithm~\ref{alg:aggregate} (Aggregate-Sum) is correct with respect to its output specification.
\end{theorem}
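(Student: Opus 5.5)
The plan is to split correctness into two parts: an invariant of state elimination, and a valuation-grouping argument for the final step.

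\emph{Part 1: state elimination.} For a Weighted NetKAT automaton \(WM=(S,I,F,\Delta)\) over \(Pk\times V\), define the path-weight function
\[
P((pk_1,v_1),(pk_2,v_2)) \;=\; \sum_{s_0,s_n,\tau}\; I(s_0)\,w\,F(s_n),
\]
where the sum ranges over runs \((s_0,(pk_1,v_1))\eqby{\tau}{w}{\longrightarrow}(s_n,(pk_2,v_2))\). I first add a fresh initial and final state carrying \(I\) and \(F\) as transition weights. I then prove, by induction on the number of eliminated states, that the sum over accepting runs of the weight between each input–output packet pair is invariant. Eliminating a state \(q\) replaces each transition \(s\to s'\) with \(\Delta(s,s')+\Delta(s,q)\,\Delta(q,q)^*\,\Delta(q,s')\). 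The invariant holds because:
\begin{itemize}
\item every run through \(q\) factors uniquely into a segment entering \(q\), a number \(k\) of loops at \(q\), and a segment leaving \(q\);
\item the semiring is \(\omega\)-continuous, so countable sums distribute over \(\cdot\) and can be reassociated;
\item \(e^*=\sum_i e^i\), and the transition-level \(+,\cdot,(-)^*\) correctly implement sum, relational composition with weight multiplication, and closure, as cited from \cite{acevedo26weight}.
\end{itemize}
When only the single transition between the fresh states remains, its weight on \(((pk_1,v_1),(pk_2,v_2))\) equals \(P\).

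\emph{Part 2: connecting \(P\) to the specification.} Using \(L(WM)=\iwn{\dw{WN}}\) from Theorem~\ref{thm:correct-compile}, and the definitions of \(L(WM)\) and \(\iwn{\cdot}\), summing \(P\) over endpoint pairs gives the total weight of all traces of \(L(WM)\) with those endpoints. Every trace in \(\iwn{\dw{WN}}\) carries one constant valuation \(v\). Therefore \(P((pk_1,v_1),(pk_2,v_2))\) contributes only weight \(0\) when \(v_1\neq v_2\), and for fixed \(v\),
\[
\sum_{pk_1,pk_2} P((pk_1,v),(pk_2,v)) \;=\; \sum_{\tau=(pk_1,v)\cdots(pk_n,v)} w_\tau \;=\; \sum_{(\tau,w')\in\dw{WN}(v)} w'.
\]
The last equality uses the bijection \(pk_1\cdots pk_n\mapsto(pk_1,v)\cdots(pk_n,v)\) given by \(\iwn{\cdot}\). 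This is exactly the output returned in the grey box.

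\emph{Main obstacle.} The hardest step is Part 1. There are two difficulties:
\begin{itemize}
\item \emph{Infinite sums.} The sums over runs are countably infinite. Reordering and grouping them into the \(\Delta(q,q)^*\) form needs the \(\omega\)-continuity assumptions: countable sums are order-independent, and distributivity holds over countable sums.
\item \emph{Duplicate triples.} The transition relation is set-valued with weights, so the same triple \((x_0,x_1,\cdot)\) may appear with several weights. To handle this I would first normalize \(\Delta(s,s')\) into a function \(Pk\times Pk\to W\) by summing coinciding entries, and carry out the induction on these matrix-like functions.
\end{itemize}
A secondary check is that \(\dw{\cdot}\) assigns weight \(0\) to traces outside the support. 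Such traces contribute nothing to the aggregate, so the grouping in Part 2 is unaffected.
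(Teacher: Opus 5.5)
Your proposal is correct and follows essentially the same route as the paper. Both arguments prove state-elimination correctness by induction on the eliminated states, using the enter/loop/exit factorization to justify the update $\Delta(s,s')+\Delta(s,q)\,\Delta(q,q)^*\,\Delta(q,s')$, and then group the endpoint-indexed weights by valuation, using the fact that $L(WM)=\iwn{\dw{WN}}$ keeps $v$ constant along every trace. Your extra care fills in details the paper leaves implicit: the fresh initial/final states that absorb $I$ and $F$, the appeal to $\omega$-continuity for reordering countable sums, and the normalization of duplicate weighted triples.
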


\begin{proof}
See the appendix.
\end{proof}
\section{Evaluation} \label{sec:eval}
Having developed the theory of Parametric NetKAT, we now turn to its practical performance.
We first describe the implementation, and then evaluate it on real-world network topologies from Topology Zoo \cite{TopologyZoo}, as well as on large industrial benchmarks drawn from Alibaba's internal network traffic flows \cite{rela-arxiv} and AWS cloud network configurations \cite{batfish}.

\subsection{Implementation}
Our OCaml implementation is based on a substantial reworking of both the Relational NetKAT implementation \cite{han26relational} and the \href{https://ocaml.org/u/2a8a23e46502abca0f4c2ee71a24d904/mlbdd/0.7.3/doc/MLBDD/index.html}{MLBDD} library.
The original Relational NetKAT codebase contains roughly \(3000\) lines of code, while our extensions required a deep rewriting of its core components together with a substantial modification of MLBDD (2000 LoC), promoting its underlying representation from Binary Decision Diagrams (BDDs) to Algebraic Decision Diagrams (ADDs).
In addition, we implemented roughly \(4000\) lines of new features and tests, for a total of approximately \(9000\) lines of code. We now describe the key components of the implementation.

\paragraph{BDD and BDD layout.}
Binary Decision Diagrams (BDDs) \cite{BryantBdd} are a classical symbolic data structure for compactly representing large Boolean objects.
In networking, BDDs and their variants, such as FDDs and NDDs, have been widely used in a variety of network-analysis tools \cite{hsa,han26relational,Moeller24Katch,NDD}.
In our work, we use BDDs as the underlying symbolic representation because they provide greater flexibility in variable ordering, which is crucial for avoiding the exponential blowup introduced by parameterization.

Concretely, although Section~\ref{subsec:emb} shows how to compile a Parametric NetKAT program over \(Pk\) into a NetKAT automaton over \(Pk\times V\) by treating the valuation space \(V\) as additional packet fields, extra care is needed to optimize performance when representing these simulated valuation fields. 

Packet fields are often \(16\)- or \(32\)-bit values.
For example, suppose \(f\) is a \(32\)-bit packet field representing an IPv4 address, and let $f.0,f.1,\dots,f.31$ denote its bit-level encoding.
Now, consider the very common Parametric NetKAT expression \(f=x\) and it's bitwise encoding:
\(
\forall i\in [0,31].\ f.i=x.i.
\)
Representing this common property under an unfavorable BDD variable ordering
becomes prohibitively expensive.  For instance, if the variables are ordered as
\[
f.0,f.1,f.2,\dots,f.31,x.0,x.1,x.2,\dots,x.31,
\]
then the BDD contains \(3\times 2^{32}-1\) nodes, which is far too large to be practical.  In contrast, if we interleave the bits as
\[
f.0,x.0,f.1,x.1,f.2,x.2,\dots,f.31,x.31,
\]
then the resulting BDD has only \(98\) nodes, which is entirely manageable.

After choosing BDDs as the core data structure, we determine the BDD variable ordering according to this interleaving principle.
In previous NetKAT implementations \cite{Smolka15compile,Moeller24Katch,han26relational}, the BDD/FDD layout is sequential: for two fields \(f\) and \(g\), the bits of the two fields are not interleaved, but instead arranged as
\(
f.0,f.1,f.2,\dots,f.31,g.0,g.1,g.2,\dots,g.31.
\)
In our implementation, by contrast, we provide the flexibility to generate arbitrary BDD layouts---sequential, interleaving, or hybrid---via the function \texttt{make\_layout} in \texttt{Eval.ml}.
Users may interleave any collection of fields, keep them sequential as in ordinary NetKAT, or combine the two approaches in a hybrid layout.
On top of this, we introduce an abstraction layer based on the data structure \texttt{field\_layout} and function \texttt{generate\_layout} in \texttt{Eval.ml}.
At this level, users need only specify the dependency of each variable on packet fields---that is, which fields are related to a variable through constructs such as \(f=x\) or \(f\leftarrow x\)---together with the bit-width of each field.
From this information, our infrastructure automatically generates an appropriate BDD variable layout.

\paragraph{ADD and Matrix Closure.}
Algebraic Decision Diagrams (ADDs) \cite{ADD} are a classical extension of BDDs in which the leaf values are drawn from an arbitrary semiring rather than the Boolean domain \(\{\mathit{true},\mathit{false}\}\).
Although several public libraries---such as \href{https://www.cs.uleth.ca/~rice/cudd_docs/}{CUDD}, \href{https://meddly.sourceforge.io/index.html}{Meddly}, \href{https://add-lib.scce.info/}{ADD-Lib}, and \href{https://trolando.github.io/sylvan/}{Sylvan}---implement ADDs (also known as MTBDDs), to the best of our knowledge none provides direct support for fully user-defined semirings.
We therefore extended the \href{https://ocaml.org/u/2a8a23e46502abca0f4c2ee71a24d904/mlbdd/0.7.3/doc/MLBDD/index.html}{MLBDD} library to obtain the first ADD library with direct support for user-defined semiring weights.

After implementing ADDs, the next step is to use them to represent Weighted NetKAT programs.
As discussed in Section~\ref{sec:aut} and in prior work \cite{acevedo26weight}, each transition relation of a weighted automaton can be represented as a weighted matrix indexed by \(Pk\times Pk\).
Prior work \cite{ADD} on ADDs shows that matrix operations such as addition and multiplication can be implemented efficiently with ADD operations.
Thus, the main missing ingredient is matrix closure, that is, the Kleene star operation on weighted matrices.
For this purpose, we adopt the divide-and-conquer matrix closure method of prior work \cite{MatIte} and translate it into an ADD-based algorithm. We put the full details of this translated algorithm in the Appendix.

\paragraph{Weighted NetKAT Implementation.}
As a by-product of our development, we also implemented the first Weighted NetKAT system.
After extending our symbolic backend from BDDs to ADDs, we are able to lift the existing BDD-based NetKAT compilation pipeline to an ADD-based compilation pipeline for Weighted NetKAT.
Most of the compilation pipeline follows the method described in prior work \cite{acevedo26weight}, except that our implementation performs on-the-fly automaton construction using derivative-based techniques \cite{anti96dev}.

\subsection{Benchmarks}

Our evaluation is designed to validate that Parametric NetKAT can solve packet enumeration problems efficiently, even at the scale of large industrial networks.
We evaluate our approach on three benchmark suites drawn from industrial or widely used network artifacts: Topology Zoo \cite{TopologyZoo}, Rela \cite{rela-arxiv}, and Batfish \cite{batfish}.
These benchmarks cover scenarios including fault tolerance, longest paths, network synthesis, and packet inspection queries.

\paragraph{Topology Zoo Tests.}
\begin{figure*}[t]\vspace{-5pt}
\centering

\begin{subfigure}[t]{0.48\textwidth}
\centering
\includegraphics[width=\textwidth]{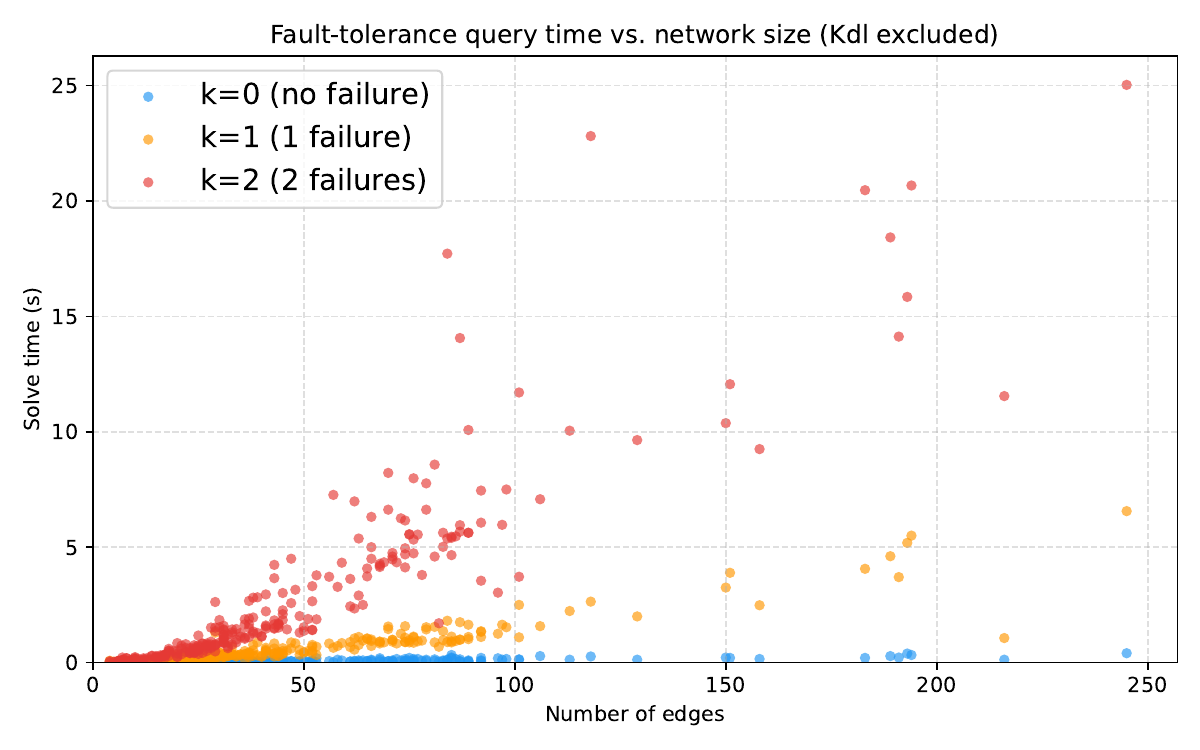}
\caption{Full reachability: running time (seconds) versus network size (excluding \texttt{Kdl.gml}).}
\label{fig:topozoo-full-scatter}
\end{subfigure}
\hfill
\begin{subfigure}[t]{0.48\textwidth}
\centering
\includegraphics[width=\textwidth]{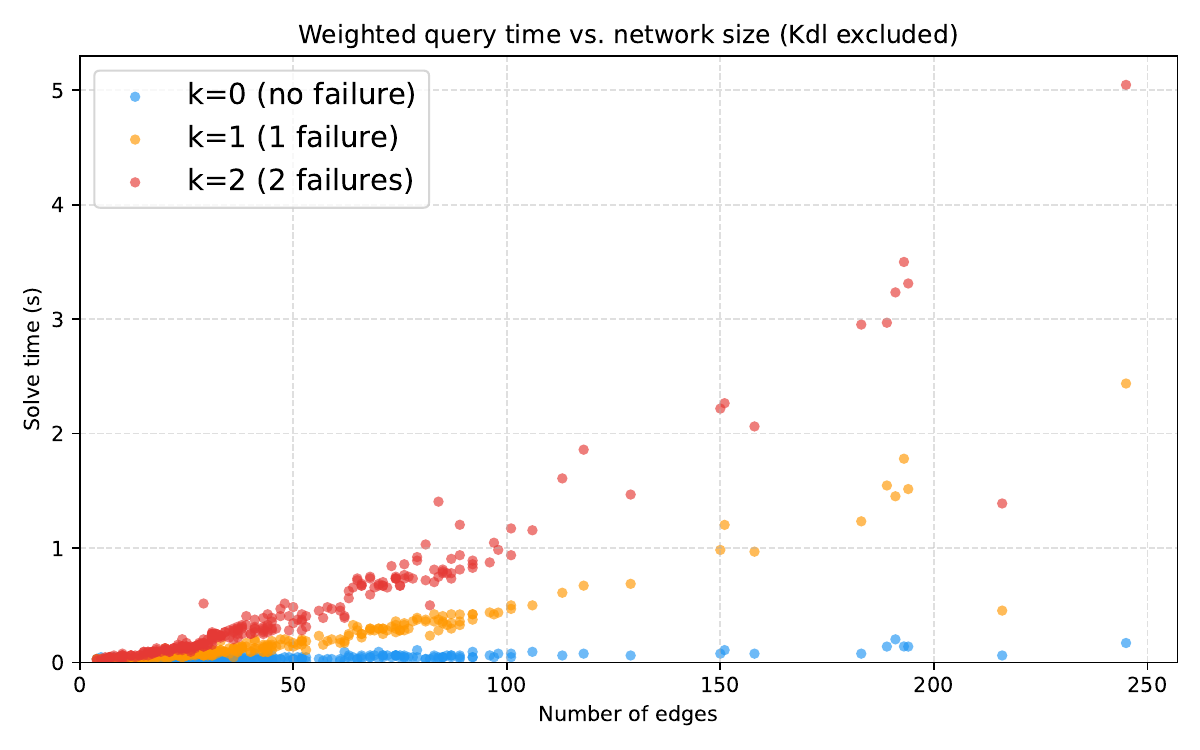}
\caption{Longest path: running time (seconds) versus network size (excluding \texttt{Kdl.gml}).}
\label{fig:topozoo-shortest-scatter}
\end{subfigure}

\vspace{0.8em}

\begin{subtable}[t]{0.95\textwidth}
\centering
\setlength{\arrayrulewidth}{0.8pt}
\begin{tabular}{|r|c|c|c|c|c|c|c|c|}
\hline
& \multicolumn{4}{c|}{Full reachability} & \multicolumn{4}{c|}{Longest path} \\
\cline{2-9}
\multicolumn{1}{|c|}{$k$}
& avg.\ (s) & median (s) & Kdl (s) & speedup
& avg.\ (s) & median (s) & Kdl (s) & speedup \\
\hline
0 & 0.08 & 0.03 & 4.69   & 1.00
  & 0.05 & 0.03 & 3.31   & 1.00 \\
1 & 0.83 & 0.36 & 47.09  & 5.66
  & 0.31 & 0.11 & 28.22  & 11.23 \\
2 & 3.38 & 1.41 & 131.34 & 101.76
  & 0.60 & 0.25 & 38.34  & 266.56 \\
\hline
\end{tabular}
\caption{Summary statistics for Topology Zoo benchmarks under \(k\)-link failures.}\vspace{-5pt}
\label{tab:topozoo-merged}
\end{subtable}
\caption{Topology Zoo evaluation results. 
}\vspace{-10pt}
\label{fig:topozoo-all}
\end{figure*}

The Internet Topology Zoo \cite{TopologyZoo} is an open dataset consisting of real-world telecommunications and data-network maps.
It contains \(261\) network topologies, ranging from small networks with only a few devices to large networks, such as the KDL case, with \(754\) devices and \(899\) links.
This dataset has been widely used to evaluate a variety of network-analysis tools in different scenarios \cite{Moeller24Katch,nv,SMTTopology}.

In our experiments, we use all the topologies in the dataset to test the scalability of our approach.
We evaluate link fault tolerance on two kinds of tasks:
\begin{enumerate}
    \item whether full reachability, that is, reachability between every pair of devices, is preserved under arbitrary \(0\)-, \(1\)-, or \(2\)-link failures; and
    \item whether the longest path between two devices remains below a given threshold under arbitrary \(0\)-, \(1\)-, or \(2\)-link failures.
\end{enumerate}
For task~(1), we introduce parameters on each link to indicate whether that link has failed.
For each link \(i\), let \(\textit{src}(i)\) and \(\textit{dst}(i)\) denote its source and destination locations, respectively.
We use \(\textit{fail}_j\) to denote the \(j\)-th failed link.
A link \(i\) may be traversed only when none of the parameters \(\textit{fail}_j\) is equal to \(i\).
Accordingly, we define the relation for inserting \(k\)-link failures as
\(
\textit{k\_failure}
=
\left(
\sum_{i\in edges}
\rfilter{\lift{\prod_{1\le j\le k} \textit{fail}_j \neq i}}
\idname(\locf = \textit{src}(i) \crosspkr \locf = \textit{dst}(i))
\right)^*.
\)

Once this \(k\)-link-failure insertion is defined, the two tasks are straightforward to express.
For task~(1), we want to check reachability between \emph{arbitrary} pairs of devices in the network.
To do so, we reuse the \texttt{collapse} relation from \nameref{para:multi}, which maps each trace to its initial and final packet pair.
Thus, we check whether
\(
N \rapp \textit{k\_failure} \triangleright \texttt{collapse}
=
N \triangleright \texttt{collapse},
\)
which expresses that the full reachability relation is preserved under every allowed failure scenario.

For task~(2), we choose two random devices \(A\) and \(B\), together with a random integer \(n\) in the range \(0\) to \(|devices|\).
We then evaluate whether the longest path is smaller than \(n\), using the longest-path query introduced in Paragraph~\nameref{para:q-reason}, combined with the \(k\)-link-failure insertion:
\(
\qs{f}{\wres{W_{length}}{(N \rapp R_{A,B} \rapp \textit{k\_failure})}}.
\)

The experimental results are summarized in Figure~\ref{fig:topozoo-all}.
Figures~\ref{fig:topozoo-full-scatter} and~\ref{fig:topozoo-shortest-scatter} plot running time against network size, measured by the number of edges, for all Topology Zoo benchmarks except \texttt{kdl.gml}.
We use the number of edges as the size measure because our fault-tolerance queries parameterize the network per edge.
We exclude \texttt{kdl.gml} from the scatter plots because it is a clear outlier: it contains \(899\) edges, which is more than three times as many as the second-largest network in the dataset.
Its running time is reported separately in Table~\ref{tab:topozoo-merged}.

As the figure shows, the average and median running times are both below \(5\) seconds across all test settings, indicating that the approach is efficient in practice.
Moreover, even on a network with roughly \(900\) edges, checking \(2\)-link fault tolerance takes only about \(2\) minutes, which further demonstrates the scalability of our method.

An additional quantity of interest is the \emph{speedup} column.
This column compares the running time of a single parameterized NetKAT query with the time required to enumerate all possible link-failure scenarios sequentially.
For \(k=0\), we set the speedup to \(1\) by definition.
For \(k=1\) and \(k=2\), we compute the speedup as
\(
\frac{T(k=0)\cdot |edges|}{T(k=1)}
\)
 and 
\(
\frac{T(k=0)\cdot |edges|\cdot (|edges|-1)}{2\cdot T(k=2)},
\)
respectively. This shows that Parametric NetKAT can greatly accelerate packet-enumeration tasks by answering with a single symbolic query, rather than enumerating the cases one by one.

\paragraph{Rela Tests.}

\begin{table*}[t]\vspace{-5pt}
\centering
\setlength{\arrayrulewidth}{0.8pt}
\begin{tabular}{|l|cc|cc|cc|cc|}
\hline
& \multicolumn{2}{c|}{$k=0$} & \multicolumn{2}{c|}{$k=1$} & \multicolumn{2}{c|}{$k=2$} & \multicolumn{2}{c|}{$k=3$} \\
\cline{2-9}
Benchmark
& avg (s) & max (s)
& avg (s) & max (s)
& avg (s) & max (s)
& avg (s) & max (s) \\
\hline
Identity Empty
& 0.09 & 0.20
& 0.12 & 0.27
& 0.15 & 0.39
& 0.17 & 0.41 \\
\hline
Delete Empty
& 0.10 & 0.39
& 0.14 & 0.63
& 0.24 & 1.80
& 0.72 & 8.08 \\
\hline
Reroute Empty
& 0.07 & 0.17
& 0.13 & 1.22
& 0.27 & 6.12
& 0.42 & 20.52 \\
\hline
Identity Weight
& 0.19 & 0.44
& 0.26 & 0.58
& 0.32 & 0.77
& 0.37 & 0.95 \\
\hline
Delete Weight
& 0.13 & 0.55
& 0.18 & 1.58
& 0.28 & 8.36
& 0.91 & 54.67 \\
\hline
Reroute Weight
& 0.12 & 0.30
& 0.17 & 0.59
& 0.28 & 6.67
& 0.44 & 25.55 \\
\hline
Identity Equiv.
& 0.19 & 0.44
& 0.25 & 0.66
& 0.29 & 0.78
& 0.35 & 1.08 \\
\hline
Delete Equiv.
& 0.22 & 0.48
& 0.40 & 1.19
& 1.52 & 8.22
& 10.24 & 133.99 \\
\hline
Reroute Equiv.
& 0.20 & 0.47
& 0.78 & 5.06
& 5.69 & 133.72
& - & $\ge$1000 \\
\hline
\end{tabular}
\caption{Running times for the parametric benchmark suite. All values are in seconds over 500 instances. For Reroute Equiv., the $k=3$ case is omitted because it occasionally timed out after 1000 seconds.}\vspace{-15pt}
\label{tab:param-bench-full}
\end{table*}

The second dataset we use is the Rela benchmark suite \cite{rela-arxiv}.
Rela is derived from Alibaba's internal network and contains \(2460\) devices.
Analyzing such a large network directly would be infeasible, so the benchmark partitions the network into \(21112\) traffic flows, each of which typically traverses \(30\)–\(40\) devices.
Moreover, for each traffic-flow instance, the dataset provides both a pre-update network and a post-update network, which we compile into NetKAT programs \(N_{pre}\) and \(N_{post}\), respectively.

We use this dataset to evaluate \emph{network synthesis} tasks.
Starting from the pre-update network \(N_{pre}\), we apply a parameterized update to synthesize \(N'_{pre}\), and then ask which parameter choices yield a desired network property.
Such properties include making two devices unreachable, ensuring that the longest path between two devices is below a threshold, or transforming the network so that it matches the intended post-update network \(N_{post}\).

In this experiment, we evaluate the scalability of all three algorithms---emptiness, equivalence, and weighted queries---under three update scenarios, each corresponding to a different way of transforming \(N_{pre}\) into \(N'_{pre}\).
In these tests, we parameterize location information for all devices in a traffic flow; encoding the \(30\)–\(40\) devices in a typical flow usually requires \(5\)–\(6\) bits.
In addition, for each algorithm we vary a parameter \(k\), which denotes the number of parameters introduced into the synthesis task.

Concretely, we evaluate the three algorithms on the following synthesis tasks:
\begin{enumerate}
    \item \textbf{Emptiness:} whether some choice of parameters makes two randomly chosen devices \(A\) and \(B\) unreachable, i.e.,
    \(
    N_{pre}' \triangleright R_{A,B} = \emptyset.
    \)

    \item \textbf{Weighted query:} whether some choice of parameters ensures that the longest path between two devices is below a threshold: 
    \(
    \qs{\lambda w.\, w<n}{\wres{W_{length}}{(N_{pre}'\triangleright R_{A,B})}}.
    \)

    \item \textbf{Equivalence:} whether some choice of parameters makes the updated network equivalent to the intended post-update network, i.e.,
    \(
    N_{pre}' = N_{post}.
    \)
\end{enumerate}

We consider the following three update scenarios:
\begin{itemize}
    \item \textbf{Identity baseline.}
    We make no change and set \(N'_{pre}=N_{pre}\).
    In this case, the parameter \(k\) simply counts dummy parameters introduced into the BDD data structure. We use this as a baseline for comparing the running times of the three kinds of queries, as well as for measuring the overhead introduced by allocating new parameter fields.

    \item \textbf{Device Deletion.}
    In \nameref{para:device-fault}, we show that one may wish to check whether the network remains functional under \(k\)-device failures, and to analyze its behavior under such failures.
    Here, we delete \(k\) devices and define
    \(
    N'_{pre} =
    N_{pre}\triangleright
    \rmap{\lift{loc\neq x_1 \cdot \cdots \cdot loc\neq x_k}}{alltraces}.
    \)
  
    \item \textbf{Rerouting update.}
    In this scenario, traffic is redirected from a device \(A_i\) to another device \(x_i\), where \(x_i\) is to be synthesized.
    Let \(R_i\) denote one such rerouting operation, where \(A_i\) is chosen randomly:
    \(
    R_i = \rmap{loc=A_i \cdot loc\leftarrow x_i}{alltraces}.
    \)
    For \(k\) parameters, we then define
    \(
    N'_{pre} = N_{pre}\triangleright R_1 \triangleright \cdots \triangleright R_k,
    \)
    representing a sequence of \(k\) rerouting operations.
\end{itemize}

For each choice of \(k\), each algorithm, and each update scenario, we sample \(500\) instances from the \(21112\) traffic flows.
The results are summarized in Table~\ref{tab:param-bench-full}.
As the table shows, most average running times are below \(1\) second, with the largest average among the successful runs being around \(10\) seconds for Delete Equiv.
Although some settings exhibit occasional outliers, such as Reroute Equiv.\ and Delete Weight, the only case that times out (>\(1000\)s) is Reroute Equiv.\ with \(k=3\).
This can be explained by the fact that, in a traffic flow containing \(30\)–\(40\) devices, a single rerouting may affect \(60\)–\(80\) links.
Applying such reroutings three times can therefore lead to a state explosion on the order of \(60^3\).
Aside from this case, the vast majority of queries finish within \(10\) seconds.

\paragraph{Batfish Tests.}
\begin{figure}[t]\vspace{-5pt}
\centering

\begin{subtable}[t]{0.45\linewidth}
\centering
\setlength{\tabcolsep}{4pt}
\begin{tabular}{|l|c|c|}
\hline
Query & None (s) & Dst.ip (s) \\
\hline
Traceroute 1 & 1.500 & 2.282 \\
Reachability 1 & 3.297 & 2.125 \\
Differential 1 & 2.938 & 7.343 \\
Reachability 2 & 2.046 & 2.734 \\
Differential 2 & 1.047 & 2.422 \\
Traceroute 2 & 0.282 & 1.016 \\
Traceroute 3 & 1.172 & 1.703 \\
Reachability 3 & 2.281 & 2.719 \\
Differential 3 & 2.360 & 4.859 \\
Reachability 4 & 2.266 & 2.672 \\
Differential 4 & 2.203 & 2.500 \\
\hline
Average & 1.945 & 2.943 \\
\hline
\end{tabular}
\caption{Change validation. (8.5k+ json)}\vspace{-5pt}
\label{tab:change-validation}
\end{subtable}
\hfill
\begin{subtable}[t]{0.53\linewidth}
\centering
\setlength{\tabcolsep}{4pt}
\begin{tabular}{|l|c|c|c|}
\hline
Query & None (s) & Loc (s) & Loc+Dst.ip (s) \\
\hline
Traceroute 1 & 51.687 & 69.093 & 114.781 \\
Traceroute 2 & 6.782  & 15.954 & 32.312 \\
Traceroute 3 & 5.922  & 23.546 & 45.204 \\
Traceroute 4 & 5.015  & 18.610 & 38.218 \\
Traceroute 5 & 6.500  & 33.234 & 66.219 \\
Reachability 1 & 19.375 & 19.031 & 34.219 \\
\hline
Average & 15.880 & 29.911 & 55.159 \\
\hline
\end{tabular}
\caption{Hybrid validation. (200k+ json)}\vspace{-5pt}
\label{tab:hybrid-validation}
\end{subtable}

\caption{Validation results with different middle packet inspection on Batfish scenarios.}\vspace{-10pt}
\label{fig:validation-results}
\end{figure}
Batfish \cite{batfish} is a state-of-the-art network analysis tool, and its tutorials provide a number of realistic examples together with their routing configurations.
In particular, we focus on the \href{https://batfish.readthedocs.io/en/latest/notebooks/linked/introduction-to-forwarding-change-validation.html}{\emph{Forwarding Change Validation}} and \href{https://pybatfish.readthedocs.io/en/latest/notebooks/linked/analyzing-public-and-hybrid-cloud-networks.html}{\emph{Hybrid Cloud Network}} benchmarks, which contain approximately \(8.5\)K and \(200\)K lines of routing information, respectively.
In these two tutorials Batfish users are asked to issue \texttt{reachability()} queries, which ask whether certain packets can reach certain locations under given path constraints, and \texttt{traceroute()} queries, which return the results of sending packets down network paths.  These tutorials also use
\texttt{differentialReachability()}, which asks how the set of locations reachable by a packet differs between two networks.
In our experiments, we use Parametric NetKAT to 
ask closely related queries that extract information necessary to solve the tutorials' challenges.

\OMIT{
In our experiments, we aim to reproduce these Batfish queries as closely as possible using Parametric NetKAT.
The main challenge is that Batfish does not merely return yes-or-no answers: it also provides sample traces witnessing the results of the queries.
By introducing parameters, we can reproduce the execution results of these queries as well as inspect packets at intermediate points along the path.}

Let \(N\) be the NetKAT expression obtained by parsing the Batfish routing tables.
The first step is to introduce the parameters of interest.
In the \href{https://batfish.readthedocs.io/en/latest/notebooks/linked/introduction-to-forwarding-change-validation.html}{\emph{Forwarding Change Validation}} benchmark, the goal is to install the correct filters, and the tutorial uses traces to inspect packets at a specific device.
Suppose this device is \(A\).
We reproduce this style of query using
\(
N \triangleright \rid{alltraces}\rconcat\rfilter{\lift{loc=A\pand dst.ip=x}}\rconcat\rid{alltraces}.
\)
In the \href{https://pybatfish.readthedocs.io/en/latest/notebooks/linked/analyzing-public-and-hybrid-cloud-networks.html}{\emph{Hybrid Cloud Network}} benchmark, the user is interested in what happens along the path.
For this purpose, we use
\(
N \triangleright \rid{alltraces}\rconcat\rfilter{\lift{loc=x}}\rconcat\rid{alltraces}
\)
and
\(
N \triangleright \rid{alltraces}\rconcat\rfilter{\lift{loc=x \pand dst.ip=y}}\rconcat\rid{alltraces}
\)
to represent queries over the topology path alone, and over the topology path together with packet IP information along the way.

After introducing the parameters, we obtain a parameterized network \(N'\).
With some additional post-processing to impose path or packet constraints on \(N'\), the remaining queries are easy to express.
The queries \texttt{reachability()} and \texttt{traceroute()} can be reproduced using emptiness queries, while \texttt{differentialReachability()} can be reproduced using equivalence queries.

The results are shown in Figure~\ref{fig:validation-results}.
The different columns indicate the packet-field information inspected during the query: \emph{None} denotes no inspection, \emph{Loc} denotes inspection of the location field, and \emph{Dst.ip} denotes inspection of the destination IP field.
As the figure shows, even on the benchmark with \(200\)K lines of routing information, querying additional along-the-way information increases running time by at most about a factor of two, while the average running time remains below one minute.
These results suggest that, even on industrial benchmarks with very large routing tables, parameterizing the queries introduces only modest overhead compared with the original queries, typically by at most a factor of two for packet inspection.
\section{Related Work} \label{sec:relatedwork}

Parametric NetKAT builds on a decade of research on NetKAT and related languages \cite{netkat,prob-netkat,Moeller24Katch,han26relational,acevedo26weight}.  In doing so, it inherits many of NetKAT's most useful properties:  Flexible network modeling capabilities, a compositional language design, a clear denotational semantics,
and efficient automata-theoretic decision procedures.  Parametric NetKAT advances the state of the art by combining multiple independent sublanguages (NetKAT, Weighted NetKAT, which subsumes Probabilistic NetKAT, and Relational NetKAT), extending the syntax of these sublanguages with parameters, and supplying algorithms that find valuations of those parameters under boolean combinations of constraints.  Whereas past NetKAT systems answered verification questions, Parametric NetKAT answers enumeration questions---it is a new kind of ``AllSAT solver,'' specialized for the networking domain. 

NetKAT is far from the only network verification framework available.  
Other network (dataplane) verification systems include AntEater~\cite{anteater}, Header Space Analysis (HSA)~\cite{hsa}, Veriflow~\cite{veriflow}, 
Atomic Predicates~\cite{atomic}, and
DeltaNet~\cite{deltanet}, among others.  
Whereas NetKAT emphasizes
compositional language design and clear semantics (though not to the exclusion of performance considerations~\cite{Moeller24Katch}), these other efforts 
explored specific implementation strategies and optimizations including the use of SMT~\cite{anteater}, smart data structures and representations~\cite{hsa,veriflow,atomic}, and incremental analysis~\cite{netplumber,deltanet}.

In terms of its high-level objective, Network Optimized Datalog (NOD)~\cite{nod} is perhaps the most closely related system to Parametric NetKAT.  NOD uses datalog to specify properties of network data planes, with datalog variables being used to record information about traces.  The NOD implementation uses Z3's datalog engine, retrofitted with new data structures to accelerate analysis in the
networking domain.  NOD can answer "which packets?" and "multipath differencing" queries.  
However, it was not clear how one would implement general-purpose equivalence checking effectively in NOD, as in our multi-objective synthesis example, nor whether one could accommodate quantitative objectives easily---such examples were not explored in the work on NOD.  Despite similarities in objective, from a technical standpoint, the systems are quite different:  Parametric NetKAT is based on Kleene Algebra with Tests~\cite{kat}, extended with regular relations, weights, and domain-specific operations. Parametric NetKAT has a denotational semantics based on sets of traces, while NOD is a variant of datalog and does not come with a domain-specific formal semantics of its own.  Parametric NetKAT is compiled into nonstandard automata, deploys domain-specific optimizations, and uses variants of emptiness testing and weighted trace computation algorithms to compute valid parameter spaces. 

Batfish~\cite{batfish} is another powerful network analysis system that straddles control plane and data plane analysis. It simulates the network control plane in a customizable environment, producing a data plane, and then allows a variety of data plane analyses to examine the results.
Batfish is particularly flexible as it provides a very rich set of built-in queries for network analysis, and allows users to build more if they are willing to program with the internals in Java. 

Less closely related research includes research on pure network control-plane simulation, verification, debugging, and synthesis, such as work on   MineSweeper~\cite{minesweeper},
Campion~\cite{campion}, NV~\cite{nv}, NetComplete~\cite{netcomplete}, and Expresso~\cite{expresso} among others.  Elements of the network control plane exchange messages to decide which routes to use, whereas the network data plane implements those routes.  These semantic differences often lead to different kinds of models and different algorithms for analyzing those models.  Having said that, both NV and Expresso engage in symbolic analysis of network control planes by extending the route announcement space with symbolic representations of possible valuations, so there are some commonalities in implementation techniques between such systems and Parametric NetKAT, though the specification mechanisms are quite different. 

Our algorithms for parametric model checking draw upon techniques for symbolic model checking \cite{SMC}, particularly the use of binary decision diagrams to represent state spaces symbolically.  Parametric NetKAT is similar in spirit to parametric model checking \cite{PMC}, in which the input is a discrete-time Markov chain parameterized by transition probabilities, and the goal is to synthesize a description of the set of parameter valuations under which the Markov chain satisfies a given property of interest.  However, the techniques are quite different---in parametric NetKAT parameters are discrete, and we use binary decision diagrams to represent sets of parameter valuations.

\section{Conclusions}

Parametric NetKAT is a new domain-specific language for asking
\emph{enumeration questions} about networks.  It combines elements of
NetKAT, Relational NetKAT, and Weighted NetKAT and extends them with parameters.  The Parametric NetKAT solver generates valuations for parameters that satisfy constraints involving set emptiness, equivalence, and/or weights by compiling parametric expressions into non-parametric automata that operate over an extended packet space, and using modified emptiness and weight aggregation algorithms to analyze them.  We demonstrate the utility of Parametric NetKAT by crafting a range of interesting network diagnosis queries including those asking "which packets" travel along a set of paths, "which failures" might be the root cause of observed outages, and "which processing differences" exist along two paths that should process packets similarly, among others.  We evaluate the performance of the system on a range of benchmarks drawn from industrial sources, including Batfish configurations with up to \(200\)K lines and Topology Zoo networks with up to \(700\) devices.
Our results show that the system can answer useful queries on these benchmarks efficiently, typically within tens of seconds, and at worst within roughly \(100\)–\(200\) seconds.

\bibliography{reference,more-references}


\newpage
\appendix
\section{Appendix}
\subsection{Defintions of Section 4}
\paragraph{NetKAT Transducer}

NetKAT transducers model the semantics of relational NetKAT programs \( R \), which follows directly from the prior work \cite{han26relational}.
\begin{definition}
    A \textbf{NetKAT transducer} is a tuple $T = (S,S_0,S_f,\Delta_S,\Delta_L,\Delta_R,\Delta_E)$ where $S$ is a finite set of states, $S_0 \subseteq S$ is a set of start states,
    $S_f \subseteq S$ is a set of final states,
    $\Delta_S : S \times S \rightarrow (\textit{Pk} \times \textit{Pk}) \times (\textit{Pk} \times \textit{Pk})$ is a \textit{synchronous} transition relation,
    $\Delta_L : S \times S \rightarrow \textit{Pk} \times \textit{Pk}$ is an \textit{asynchronous left} transition relation, 
    $\Delta_R : S \times S \rightarrow \textit{Pk} \times \textit{Pk}$ is an \textit{asynchronous right} transition relation, and
    $\Delta_E : S \times S \rightarrow \textit{Pk} \times \textit{Pk}$ is an \textit{epsilon} transition relation.
\end{definition}

Similar to NetKAT automaton $M$, the transition relation of $T$ is defined inductively over pairs of packet traces. A labeled transition of the form:
\[
(s_0, (x_0, y_0)) \stackrel{(\tau_1, \tau_2)}{\longrightarrow} (s_n, (x_{n_1}, y_{n_2}))
\]
indicates that, starting from $(x_0, y_0)$, the transducer produces the output traces $\tau_1$ and $\tau_2$ along a path to state $s_n$ with final packets $(x_{n_1}, y_{n_2})$.

\begin{itemize}
    \item \textbf{Base case:}
    $
    (s_0, (x_0, y_0)) \stackrel{(\epsilon, \epsilon)}{\longrightarrow} (s_0, (x_0, y_0)).
    $
    
    \item \textbf{Both tapes move:} If $((x_0, y_0), (x_1, y_1)) \in \Delta_S(s_0, s_1)$ and 
   $ (s_1, (x_1, y_1)) \stackrel{(\tau_1, \tau_2)}{\longrightarrow} (s_n, (x_{n_1}, y_{n_2}))$,
    then
    $
    (s_0, (x_0, y_0))\xrightarrow[]{(x_1 \tau_1, y_1 \tau_2)}  (s_n, (x_{n_1}, y_{n_2})).
    $
    
    \item \textbf{First tape only:} If $(x_0, x_1) \in \Delta_L(s_0, s_1)$ and 
    $
    (s_1, (x_1, y_0)) \stackrel{(\tau_1, \tau_2)}{\longrightarrow} (s_n, (x_{n_1}, y_{n_2})),
    $
    
    then
    $
    (s_0, (x_0, y_0))\xrightarrow[]{(x_1 \tau_1, \tau_2)}  (s_n, (x_{n_1}, y_{n_2})).
    $
    
    \item \textbf{Second tape only:} If $(y_0, y_1) \in \Delta_R(s_0, s_1)$ and 
    $
    (s_1, (x_0, y_1)) \stackrel{(\tau_1, \tau_2)}{\longrightarrow} (s_n, (x_{n_1}, y_{n_2})),
    $
    
    then
    $
    (s_0, (x_0, y_0))\xrightarrow[]{(\tau_1, y_1 \tau_2)}  (s_n, (x_{n_1}, y_{n_2})).
    $

    \item \textbf{No tape moves:} If $(x_0, y_0) \in \Delta_E(s_0, s_1)$ and 
   $ (s_1, (x_0, y_0)) \stackrel{(\tau_1, \tau_2)}{\longrightarrow} (s_n, (x_{n_1}, y_{n_2}))$,

    then
    $
    (s_0, (x_0, y_0)) \stackrel{(\tau_1, \tau_2)}{\longrightarrow} (s_n, (x_{n_1}, y_{n_2})).
    $
\end{itemize}

The language accepted by the NetKAT transducer $T$ is then defined as the set of trace pairs processed from a start state to an accepting state:
\[
L(T) = \{(x\tau_1, y\tau_2) \mid  s_0 \in S_0, s_f \in S_f, x,x',y,y'\in \mathit{Pk}.(s_0, (x, y)) \stackrel{(\tau_1, \tau_2)}{\longrightarrow} (s_f, (x', y'))\}.
\]

\section{Proof of Section 5}
\subsection{Full Homomorphism Theorem}
\begin{theorem}[Homomorphism]\label{thm:full-homo}
The alternative interpretation is homomorphic with respect to the semantic constructors of Parametric NetKAT, Relational NetKAT, and Weighted NetKAT.

\paragraph{Parametric NetKAT}
For all valuations \(v\in V\), the following hold:
\begin{align*}
\ipn{\dpn{PkR}}
  &= \{\, (pk_1,v)(pk_2,v) \mid ((pk_1,v),(pk_2,v)) \in \ipkr{\dpkr{PkR}} \,\}, \\
\ipn{\dpn{PN \triangleright RN}}
  &= \{\, \tau_2 \mid \tau_1 \in \ipn{\dpn{PN}},\ (\tau_1,\tau_2)\in \irn{\dpr{RN}},\ |\tau_2|\ge 2 \,\}, \\
\ipn{\dpn{PN_1 + PN_2}}
  &= \ipn{\dpn{PN_1}} \cup \ipn{\dpn{PN_2}}, \\
\ipn{\dpn{PN_1 \kcap PN_2}}
  &= \ipn{\dpn{PN_1}} \cap \ipn{\dpn{PN_2}}, \\
\ipn{\dpn{PN_1 \kdiff PN_2}}
  &= \ipn{\dpn{PN_1}} \setminus \ipn{\dpn{PN_2}}, \\
\ipn{\dpn{PN_1 \kand PN_2}}
  &= \{\, \tau \mid \exists \tau_1 \in \ipn{\dpn{PN_1}},\ \exists \tau_2 \in \ipn{\dpn{PN_2}},\ \tau = \tau_1 \kand \tau_2 \,\}, \\
\ipn{\dpn{\ks{PN}}}
  &= \bigcup_{n\ge 0} \ipn{\dpn{PN^n}},
  \qquad \text{where } PN^0 = \idpkr,\quad PN^{n+1}=PN^n \kand PN, \\
\ipn{\dpn{dup}}
  &= \{\, (pk,v)(pk,v)(pk,v) \mid pk\in \mathit{Pk},\ v\in V \,\}.
\end{align*}

\paragraph{Relational NetKAT}
\begin{align*}
\irn{\dpr{\rfilter{PkR}}}
  &= \{\, (\tau_1,\tau_2) \mid (\tau_1,\tau_2)\in \ipkr{\dpkr{PkR}} \,\}, \\
\irn{\dpr{\rmap{PkR}{PN}}}
  &= \{\, \bigl((pk_1,v)\trcat\cdots\trcat(pk_n,v),\;
         (pk_1',v)\trcat\cdots\trcat(pk_n',v)\bigr) \mid \\
  &\qquad (pk_1,v)\trcat\cdots\trcat(pk_n,v)\in \ipn{\dpn{PN}}, \\
  &\qquad \forall i\in[1,n].\ ((pk_i,v),(pk_i',v))\in \ipkr{\dpkr{PkR}}
     \,\}, \\
\irn{\dpr{\rdelete{PN}}}
  &= \{\, \bigl((pk_1,v)\trcat\cdots\trcat(pk_n,v),\; (pk,v)\bigr) \mid \\
  &\qquad pk\in Pk,\;
     (pk_1,v)\trcat\cdots\trcat(pk_n,v)\in \ipn{\dpn{PN}}
     \,\}, \\
\irn{\dpr{\rinsert{PN}}}
  &= \{\, \bigl((pk,v),\; (pk_1,v)\trcat\cdots\trcat(pk_n,v)\bigr) \mid \\
  &\qquad pk\in Pk,\;
     (pk_1,v)\trcat\cdots\trcat(pk_n,v)\in \ipn{\dpn{PN}}
     \,\}, \\
\irn{\dpr{RN_1 + RN_2}}
  &= \irn{\dpr{RN_1}} \cup \irn{\dpr{RN_2}}, \\
\irn{\dpr{RN_1 \rconcat RN_2}}
  &= \{\, (\tau_1 \trcat \tau_2,\; \tau_1' \trcat \tau_2') \mid \\
  &\qquad (\tau_1,\tau_1') \in \irn{\dpr{RN_1}}, \\
  &\qquad (\tau_2,\tau_2') \in \irn{\dpr{RN_2}}
     \,\}, \\
\irn{\dpr{RN^*}}
  &= \bigcup_{n\ge 0} \irn{\dpr{RN^n}}, \\
  &\qquad \text{where } RN^0 = \rfilter{\havoc},
     \quad RN^{n+1}=RN^n \rconcat RN.
\end{align*}
\paragraph{Weighted NetKAT}
\begin{align*}
\iwn{\dw{PN}}
  &= \{\, ((pk_1,v)\cdots(pk_n,v),1) \mid
        (pk_1,v)\cdots(pk_n,v)\in \ipn{\dpn{PN}} \,\} \\
  &\qquad \cup\; \{\, (\tau,0) \mid \tau\notin \ipn{\dpn{PN}} \,\}, \\
\iwn{\dw{PN\triangleright WN}}
  &= \{\, (\tau,w) \mid
        \tau\in \ipn{\dpn{PN}} \land (\tau,w)\in \iwn{\dw{WN}} \,\} \\
  &\qquad \cup\; \{\, (\tau,0) \mid \tau\notin \ipn{\dpn{PN}} \,\}, \\
\iwn{\dw{w\wtimes WN}}
  &= \{\, (\tau,w\cdot w') \mid (\tau,w')\in \iwn{\dw{WN}} \,\}, \\
\iwn{\dw{WN\wtimes w}}
  &= \{\, (\tau,w'\cdot w) \mid (\tau,w')\in \iwn{\dw{WN}} \,\}, \\
\iwn{\dw{WN_1\wor WN_2}}
  &= \{\, (\tau,w_1+w_2) \mid
        (\tau,w_1)\in \iwn{\dw{WN_1}},\;
        (\tau,w_2)\in \iwn{\dw{WN_2}} \,\}, \\
\iwn{\dw{WN_1\wand WN_2}}
  &= \left\{\, \left(\tau,
      \sum_{\substack{
      (\tau_1,w_1)\in \iwn{\dw{WN_1}} \\
      (\tau_2,w_2)\in \iwn{\dw{WN_2}} \\
      \tau=\tau_1\kand\tau_2
      }} w_1 w_2\right)
      \,\right\}, \\
\iwn{\dw{WN^*}}
  &= \left\{\, \left(\tau,
      \sum_{(\tau,w)\in \iwn{\dw{WN^n}}} w\right)
      \,\middle|\, n\ge 0 \right\}, \\
  &\qquad \text{where } WN^0=\lift{1},
     \quad WN^{n+1}=WN^n\wand WN.
\end{align*}\end{theorem}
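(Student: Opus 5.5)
The plan is to prove each equation by unfolding definitions and checking both inclusions. The central tool is a single \emph{uniformity} observation: every element of $\ipn{f}$, $\irn{f}$, $\iwn{f}$ and $\ipkr{f}$ carries one valuation $v$ attached to all of its packets. It follows that $\ipn{\cdot}$ acts as a disjoint union over $v$:
\[
\ipn{f}=\bigcup_{v\in V}\{(pk_1,v)\cdots(pk_n,v)\mid pk_1\cdots pk_n\in f(v)\}.
\]
This map is injective on traces and commutes with $\cup$, $\cap$ and $\setminus$, because traces indexed by different valuations are distinct. That immediately gives the clauses for $+$, $\kcap$ and $\kdiff$ in Parametric NetKAT, the clause for $+$ in Relational NetKAT, and the clause for $\dup$. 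The atomic clauses for $PkR$, $\rfilter{PkR}$, $\rmap{PkR}{PN}$, $\rdelete{PN}$ and $\rinsert{PN}$ then follow by pointwise rewriting of the semantics in Figures~\ref{fig:pred-pkr-semantics} and~\ref{fig:selected-sem}. In each case the valuation is threaded uniformly. The packet $pk$ freshly introduced by $\rdelete{\cdot}$ or $\rinsert{\cdot}$ is tagged with the same $v$, and this is exactly how the right-hand sides are written.

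The clauses for the binary composition operators carry the real content: $\kand$, $PN\triangleright RN$, $RN_1\rconcat RN_2$, and the weighted $\wand$. I will prove a concatenation lemma. If $\tau_1=(pk_1,v_1)\cdots(pk_n,v_1)$ and $\tau_2=(pk'_1,v_2)\cdots$ are uniform traces and $\tau_1\kand\tau_2$ is defined, then $v_1=v_2$, because the last packet of $\tau_1$ must equal the first packet of $\tau_2$. The result is the uniform lift of the underlying concatenation. The same reasoning applies to the shared boundary packet in $\rconcat$ on both tapes. For $\triangleright$ it applies to the requirement that $\tau_1$ is the first component of a pair in $\irn{\cdot}$, since that pair carries a single $v$. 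With this lemma, the right-hand side of each clause has no cross-valuation terms. It therefore equals the union over $v$ of the lifted parametric semantics, which is the left-hand side.

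For the star clauses ($\ks{PN}$, $RN^*$, $WN^*$), I will argue by induction on $n$ that the lift of $PN^n$ is the $n$-fold composite of lifts, using the concatenation lemma at each step. I will then commute the lift with the countable union, which is legitimate because the lift is a union over $v$ of injective images. The base cases $\idpkr$ and $\rfilter{\havoc}$ are atomic, so the earlier step covers them.

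The weighted part is where I expect the main obstacle. There, $\dw{\cdot}$ assigns weight $0$ to all traces outside the support, and that set includes non-uniform traces over $Pk\times V$. So one must check that the sums in $\wand$ and star agree. By the concatenation lemma, a pair $(\tau_1,\tau_2)$ with $\tau=\tau_1\kand\tau_2$ for uniform $\tau$ has both components uniform with the same $v$. The index set of the sum is therefore in bijection with the index set of the parametric sum. Terms arising from non-uniform or zero-weight traces contribute $0$, and these vanish by $\omega$-continuity and the fact that $0$ is absorbing. I will also read the $(\tau,0)$ components on the right-hand sides as ranging over all of $Tr(Pk\times V)$, and state this convention explicitly, since otherwise the two sides differ only on non-uniform traces. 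The clauses for restriction and scalar multiplication then follow pointwise in $v$.
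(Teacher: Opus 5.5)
Your handling of the Parametric and Relational NetKAT clauses is sound. It is the paper's own argument (direct unfolding of definitions) with the implicit facts spelled out. Those facts are that lifting is injective across valuations, and that a single valuation is forced by each of the following: a defined concatenation of uniform traces, the shared boundary packet in $\rconcat$, and the first component in $\triangleright$. The problem is in the weighted part, where the convention you adopt is the wrong way round.

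By definition, $\iwn{f}$ contains only lifts $((pk_1,v)\cdots(pk_n,v),w)$ with $(pk_1\cdots pk_n,w)\in f(v)$. The zero weights in $\dw{PN}(v)$ are attached to traces in $\mathit{Tr}(\mathit{Pk})$, so no non-uniform trace over $\mathit{Pk}\times V$ ever appears on a left-hand side. Your premise that the complement of the support on the left ``includes non-uniform traces'' is therefore false.

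\textbf{What fails.} If you let the $\tau$ in $\{(\tau,0)\mid \tau\notin\ipn{\dpn{PN}}\}$ range over all of $\mathit{Tr}(\mathit{Pk}\times V)$, the right-hand side gains pairs $(\tau,0)$ with $\tau$ non-uniform that the left-hand side lacks. The clauses for $\dw{PN}$ and for restriction then become false. They hold exactly when $\tau$ ranges over uniform traces, which is how the paper's unfolding silently reads them. The free $\tau$ in the $\wand$ and star clauses must be read the same way, since for non-uniform $\tau$ those sums are empty and would produce $(\tau,0)$.

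Your claim that every decomposition $\tau=\tau_1\kand\tau_2$ of a uniform $\tau$ has uniform components is also false, because the boundary packet is dropped: $(pk_1,v)(pk_3,v)=(pk_1,v)(pk_2,v')\kand(pk_2,v')(pk_3,v)$ for $v'\neq v$.

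\textbf{The repair.} Fix one convention and apply it to both sides.
\begin{itemize}
\item Keep $\iwn{\cdot}$ as defined and take all free $\tau$ to be uniform. Then every summand in the $\wand$ and star sums comes from $\iwn{\cdot}$. Your concatenation lemma gives a genuine bijection with the parametric index set, and no zero-term argument is needed.
\item Alternatively, zero-extend $\iwn{f}$ itself to a total function on $\mathit{Tr}(\mathit{Pk}\times V)$. This is the reading actually needed for $L(WM)=\iwn{\dw{WN}}$ in Theorem~\ref{thm:correct-compile}, since $L(WM)$ weighs every trace. The index sets are then strictly larger than the parametric ones, as the example shows. Your closing observation is what makes the sums agree: each extra term has a non-uniform factor of weight $0$, and $0$ is absorbing.
\end{itemize}
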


\begin{proof}
All equalities follow by unfolding the definitions.

\begin{align*}
\ipn{\dpn{PkR}}
&= \{\, (pk_1,v)(pk_2,v) \mid pk_1pk_2 \in \dpn{PkR}(v) \,\} \\
&= \{\, (pk_1,v)(pk_2,v) \mid (pk_1,pk_2)\in \dpkr{PkR}(v) \,\} \\
&= \{\, (pk_1,v)(pk_2,v) \mid ((pk_1,v),(pk_2,v)) \in \ipkr{\dpkr{PkR}} \,\}.
\end{align*}

\begin{align*}
\ipn{\dpn{PN \triangleright RN}}
&= \{\, \tau_2 \mid \tau_2 = (pk_1',v)\cdots(pk_m',v),\ pk_1'\cdots pk_m' \in \dpn{PN\triangleright RN}(v) \,\} \\
&= \{\, \tau_2 \mid \tau_2 = (pk_1',v)\cdots(pk_m',v),\ \tau_1' \in \dpn{PN}(v), \\
&\hspace{7em} (\tau_1',pk_1'\cdots pk_m')\in \dpr{RN}(v),\ m\ge 2 \,\} \\
&= \{\, \tau_2 \mid \tau_1 \in \ipn{\dpn{PN}},\ (\tau_1,\tau_2)\in \irn{\dpr{RN}},\ |\tau_2|\ge 2 \,\}.
\end{align*}

\begin{align*}
\ipn{\dpn{PN_1 + PN_2}}
&= \{\, \tau \mid \tau=(pk_1,v)\cdots(pk_n,v),\ pk_1\cdots pk_n \in \dpn{PN_1+PN_2}(v) \,\} \\
&= \{\, \tau \mid \tau=(pk_1,v)\cdots(pk_n,v),\ pk_1\cdots pk_n \in \dpn{PN_1}(v)\cup \dpn{PN_2}(v) \,\} \\
&= \ipn{\dpn{PN_1}} \cup \ipn{\dpn{PN_2}}.
\end{align*}

\begin{align*}
\ipn{\dpn{PN_1 \kcap PN_2}}
&= \{\, \tau \mid \tau=(pk_1,v)\cdots(pk_n,v),\ pk_1\cdots pk_n \in \dpn{PN_1\kcap PN_2}(v) \,\} \\
&= \{\, \tau \mid \tau=(pk_1,v)\cdots(pk_n,v),\ pk_1\cdots pk_n \in \dpn{PN_1}(v)\cap \dpn{PN_2}(v) \,\} \\
&= \ipn{\dpn{PN_1}} \cap \ipn{\dpn{PN_2}}.
\end{align*}

\begin{align*}
\ipn{\dpn{PN_1 \kdiff PN_2}}
&= \{\, \tau \mid \tau=(pk_1,v)\cdots(pk_n,v),\ pk_1\cdots pk_n \in \dpn{PN_1\kdiff PN_2}(v) \,\} \\
&= \{\, \tau \mid \tau=(pk_1,v)\cdots(pk_n,v),\ pk_1\cdots pk_n \in \dpn{PN_1}(v)\setminus \dpn{PN_2}(v) \,\} \\
&= \ipn{\dpn{PN_1}} \setminus \ipn{\dpn{PN_2}}.
\end{align*}

\begin{align*}
\ipn{\dpn{PN_1 \kand PN_2}}
&= \{\, \tau \mid \tau=(pk_1,v)\cdots(pk_n,v),\ pk_1\cdots pk_n \in \dpn{PN_1\kand PN_2}(v) \,\} \\
&= \{\, \tau \mid \exists \tau_1' \in \dpn{PN_1}(v),\ \exists \tau_2' \in \dpn{PN_2}(v),\ pk_1\cdots pk_n = \tau_1' \kand \tau_2' \,\} \\
&= \{\, \tau \mid \exists \tau_1 \in \ipn{\dpn{PN_1}},\ \exists \tau_2 \in \ipn{\dpn{PN_2}},\ \tau = \tau_1 \kand \tau_2 \,\}.
\end{align*}

\begin{align*}
\ipn{\dpn{\ks{PN}}}
&= \{\, \tau \mid \tau=(pk_1,v)\cdots(pk_n,v),\ pk_1\cdots pk_n \in \dpn{\ks{PN}}(v) \,\} \\
&= \{\, \tau \mid \tau=(pk_1,v)\cdots(pk_n,v),\ pk_1\cdots pk_n \in \bigcup_{n\ge 0}\dpn{PN^n}(v) \,\} \\
&= \bigcup_{n\ge 0} \ipn{\dpn{PN^n}},
\qquad \text{where } PN^0=\idpkr,\ PN^{n+1}=PN^n\kand PN.
\end{align*}

\begin{align*}
\ipn{\dpn{dup}}
&= \{\, (pk_1,v)(pk_2,v)\cdots(pk_n,v) \mid pk_1pk_2\cdots pk_n \in \dpn{dup}(v) \,\} \\
&= \{\, (pk,v)(pk,v)(pk,v) \mid pk\in \mathit{Pk},\ v\in V \,\}.
\end{align*}

\begin{align*}
\irn{\dpr{\rfilter{PkR}}}
&=
\left\{
\bigl((pk_1,v),\;(pk_2,v)\bigr)
\;\middle|\;
(pk_1,pk_2)\in \dpr{\rfilter{PkR}}(v)
\right\} \\
&=
\left\{
\bigl((pk_1,v),\;(pk_2,v)\bigr)
\;\middle|\;
(pk_1,pk_2)\in \dpkr{PkR}(v)
\right\} \\
&=
\{\, (\tau_1,\tau_2) \mid (\tau_1,\tau_2)\in \ipkr{\dpkr{PkR}} \,\}.
\end{align*}

\begin{align*}
\irn{\dpr{\rmap{PkR}{PN}}}
&=
\left\{
\bigl((pk_1,v)\trcat\cdots\trcat(pk_n,v),\;
      (pk_1',v)\trcat\cdots\trcat(pk_n',v)\bigr)
\;\middle|\;
\right.\\
&\qquad\left.
(pk_1\trcat\cdots\trcat pk_n,\;
 pk_1'\trcat\cdots\trcat pk_n')
\in \dpr{\rmap{PkR}{PN}}(v)
\right\} \\
&=
\left\{
\bigl((pk_1,v)\trcat\cdots\trcat(pk_n,v),\;
      (pk_1',v)\trcat\cdots\trcat(pk_n',v)\bigr)
\;\middle|\;
\right.\\
&\qquad\left.
pk_1\trcat\cdots\trcat pk_n \in \dpn{PN}(v),\;
\forall i\in[1,n].\ (pk_i,pk_i')\in \dpkr{PkR}(v)
\right\} \\
&=
\left\{
\bigl((pk_1,v)\trcat\cdots\trcat(pk_n,v),\;
      (pk_1',v)\trcat\cdots\trcat(pk_n',v)\bigr)
\;\middle|\;
\right.\\
&\qquad\left.
(pk_1,v)\trcat\cdots\trcat(pk_n,v)\in \ipn{\dpn{PN}},
\right.\\
&\qquad\left.
\forall i\in[1,n].\ ((pk_i,v),(pk_i',v))\in \ipkr{\dpkr{PkR}}
\right\}.
\end{align*}

\begin{align*}
\irn{\dpr{\rdelete{PN}}}
&=
\left\{
\bigl((pk_1,v)\trcat\cdots\trcat(pk_n,v),\;(pk,v)\bigr)
\;\middle|\;
(pk_1\trcat\cdots\trcat pk_n,\;pk)\in \dpr{\rdelete{PN}}(v)
\right\} \\
&=
\left\{
\bigl((pk_1,v)\trcat\cdots\trcat(pk_n,v),\;(pk,v)\bigr)
\;\middle|\;
pk\in Pk,\;
pk_1\trcat\cdots\trcat pk_n \in \dpn{PN}(v)
\right\} \\
&=
\left\{
\bigl((pk_1,v)\trcat\cdots\trcat(pk_n,v),\;(pk,v)\bigr)
\;\middle|\;
pk\in Pk,\;
(pk_1,v)\trcat\cdots\trcat(pk_n,v)\in \ipn{\dpn{PN}}
\right\}.
\end{align*}

\begin{align*}
\irn{\dpr{\rinsert{PN}}}
&=
\left\{
\bigl((pk,v),\;(pk_1,v)\trcat\cdots\trcat(pk_n,v)\bigr)
\;\middle|\;
(pk,\;pk_1\trcat\cdots\trcat pk_n)\in \dpr{\rinsert{PN}}(v)
\right\} \\
&=
\left\{
\bigl((pk,v),\;(pk_1,v)\trcat\cdots\trcat(pk_n,v)\bigr)
\;\middle|\;
pk\in Pk,\;
pk_1\trcat\cdots\trcat pk_n \in \dpn{PN}(v)
\right\} \\
&=
\left\{
\bigl((pk,v),\;(pk_1,v)\trcat\cdots\trcat(pk_n,v)\bigr)
\;\middle|\;
pk\in Pk,\;
(pk_1,v)\trcat\cdots\trcat(pk_n,v)\in \ipn{\dpn{PN}}
\right\}.
\end{align*}

\begin{align*}
\irn{\dpr{RN_1+RN_2}}
&=
\left\{
(\tau_1,\tau_2)
\;\middle|\;
\exists v,\exists \sigma_1,\exists \sigma_2.\right.\\
&\qquad
(\sigma_1,\sigma_2)\in \dpr{RN_1+RN_2}(v),\\
&\qquad
\tau_1=(pk_{11},v)\cdots(pk_{1n_1},v),\;
\tau_2=(pk_{21},v)\cdots(pk_{2n_2},v),\\
&\qquad\left.
(\sigma_1,\sigma_2)=
(pk_{11}\cdots pk_{1n_1},\;
 pk_{21}\cdots pk_{2n_2})
\right\} \\
&=
\left\{
(\tau_1,\tau_2)
\;\middle|\;
\exists v,\exists \sigma_1,\exists \sigma_2.\right.\\
&\qquad
(\sigma_1,\sigma_2)\in \dpr{RN_1}(v)\cup\dpr{RN_2}(v),\\
&\qquad
\tau_1=(pk_{11},v)\cdots(pk_{1n_1},v),\;
\tau_2=(pk_{21},v)\cdots(pk_{2n_2},v),\\
&\qquad\left.
(\sigma_1,\sigma_2)=
(pk_{11}\cdots pk_{1n_1},\;
 pk_{21}\cdots pk_{2n_2})
\right\} \\
&=
\irn{\dpr{RN_1}} \cup \irn{\dpr{RN_2}}.
\end{align*}

\begin{align*}
\irn{\dpr{RN_1\rconcat RN_2}}
&=
\left\{
(\tau,\tau')
\;\middle|\;
\exists v,\exists \sigma,\exists \sigma'.\right.\\
&\qquad
(\sigma,\sigma')\in \dpr{RN_1\rconcat RN_2}(v),\\
&\qquad
\tau=(pk_1,v)\trcat\cdots\trcat(pk_{n_2},v),\\
&\qquad
\tau'=(pk_1',v)\trcat\cdots\trcat(pk_{m_2}',v),\\
&\qquad\left.
(\sigma,\sigma')=
(pk_1\trcat\cdots\trcat pk_{n_2},\;
 pk_1'\trcat\cdots\trcat pk_{m_2}')
\right\} \\
&=
\left\{
(\tau_1\trcat\tau_2,\;\tau_1'\trcat\tau_2')
\;\middle|\;
(\tau_1,\tau_1')\in \irn{\dpr{RN_1}},\right.\\
&\qquad\left.
(\tau_2,\tau_2')\in \irn{\dpr{RN_2}}
\right\}.
\end{align*}

\begin{align*}
\irn{\dpr{RN^*}}
&=
\left\{
(\tau,\tau')
\;\middle|\;
\exists v,\exists \sigma,\exists \sigma'.\right.\\
&\qquad
(\sigma,\sigma')\in \dpr{RN^*}(v),\\
&\qquad
\tau=(pk_{11},v)\cdots(pk_{1n_1},v),\\
&\qquad
\tau'=(pk_{21},v)\cdots(pk_{2n_2},v),\\
&\qquad\left.
(\sigma,\sigma')=
(pk_{11}\cdots pk_{1n_1},\;
 pk_{21}\cdots pk_{2n_2})
\right\} \\
&=
\left\{
(\tau,\tau')
\;\middle|\;
\exists v,\exists \sigma,\exists \sigma'.\right.\\
&\qquad
(\sigma,\sigma')\in \bigcup_{n\ge 0}\dpr{RN^n}(v),\\
&\qquad
\tau=(pk_{11},v)\cdots(pk_{1n_1},v),\\
&\qquad
\tau'=(pk_{21},v)\cdots(pk_{2n_2},v),\\
&\qquad\left.
(\sigma,\sigma')=
(pk_{11}\cdots pk_{1n_1},\;
 pk_{21}\cdots pk_{2n_2})
\right\} \\
&=
\bigcup_{n\ge 0}\irn{\dpr{RN^n}},
\qquad
\text{where } RN^0=\rfilter{\havoc},
\quad RN^{n+1}=RN^n\rconcat RN.
\end{align*}

\begin{align*}
\iwn{\dw{PN}}
&=
\{\, ((pk_1,v)\cdots(pk_n,v),w)
\mid (pk_1\cdots pk_n,w)\in \dw{PN}(v)\,\} \\
&=
\{\, ((pk_1,v)\cdots(pk_n,v),1)
\mid pk_1\cdots pk_n\in \dpn{PN}(v)\,\} \\
&\qquad \cup\;
\{\, (\tau,0)
\mid \tau\notin \ipn{\dpn{PN}} \,\} \\
&=
\{\, ((pk_1,v)\cdots(pk_n,v),1) \mid
      (pk_1,v)\cdots(pk_n,v)\in \ipn{\dpn{PN}} \,\} \\
&\qquad \cup\;
\{\, (\tau,0) \mid \tau\notin \ipn{\dpn{PN}} \,\}.
\end{align*}

\begin{align*}
\iwn{\dw{PN\triangleright WN}}
&=
\iwn{\dw{\wres{WN}{PN}}} \\
&=
\{\, ((pk_1,v)\cdots(pk_n,v),w)
\mid (pk_1\cdots pk_n,w)\in \dw{\wres{WN}{PN}}(v)\,\} \\
&=
\{\, ((pk_1,v)\cdots(pk_n,v),w)
\mid pk_1\cdots pk_n\in \dpn{PN}(v), \\
&\qquad\qquad\qquad\qquad
(pk_1\cdots pk_n,w)\in \dw{WN}(v)\,\} \\
&\qquad \cup\;
\{\, (\tau,0)\mid \tau\notin \ipn{\dpn{PN}} \,\} \\
&=
\{\, (\tau,w) \mid
      \tau\in \ipn{\dpn{PN}} \land (\tau,w)\in \iwn{\dw{WN}} \,\} \\
&\qquad \cup\;
\{\, (\tau,0) \mid \tau\notin \ipn{\dpn{PN}} \,\}.
\end{align*}

\begin{align*}
\iwn{\dw{w\wtimes WN}}
&=
\{\, ((pk_1,v)\cdots(pk_n,v),w'')
\mid (pk_1\cdots pk_n,w'')\in \dw{w\wtimes WN}(v)\,\} \\
&=
\{\, ((pk_1,v)\cdots(pk_n,v),w\cdot w')
\mid (pk_1\cdots pk_n,w')\in \dw{WN}(v)\,\} \\
&=
\{\, (\tau,w\cdot w') \mid (\tau,w')\in \iwn{\dw{WN}} \,\}.
\end{align*}

\begin{align*}
\iwn{\dw{WN\wtimes w}}
&=
\{\, ((pk_1,v)\cdots(pk_n,v),w'')
\mid (pk_1\cdots pk_n,w'')\in \dw{WN\wtimes w}(v)\,\} \\
&=
\{\, ((pk_1,v)\cdots(pk_n,v),w'\cdot w)
\mid (pk_1\cdots pk_n,w')\in \dw{WN}(v)\,\} \\
&=
\{\, (\tau,w'\cdot w) \mid (\tau,w')\in \iwn{\dw{WN}} \,\}.
\end{align*}

\begin{align*}
\iwn{\dw{WN_1\wor WN_2}}
&=
\{\, ((pk_1,v)\cdots(pk_n,v),w)
\mid (pk_1\cdots pk_n,w)\in \dw{WN_1\wor WN_2}(v)\,\} \\
&=
\{\, ((pk_1,v)\cdots(pk_n,v),w_1+w_2)
\mid (pk_1\cdots pk_n,w_1)\in \dw{WN_1}(v), \\
&\qquad\qquad\qquad\qquad
(pk_1\cdots pk_n,w_2)\in \dw{WN_2}(v)\,\} \\
&=
\{\, (\tau,w_1+w_2) \mid
      (\tau,w_1)\in \iwn{\dw{WN_1}},\;
      (\tau,w_2)\in \iwn{\dw{WN_2}} \,\}.
\end{align*}

\begin{align*}
\iwn{\dw{WN_1\wand WN_2}}
&=
\left\{\,
\left((pk_1,v)\cdots(pk_n,v),
\sum_{\substack{
(\tau_1,w_1)\in \dw{WN_1}(v) \\
(\tau_2,w_2)\in \dw{WN_2}(v) \\
pk_1\cdots pk_n=\tau_1\kand\tau_2
}} w_1w_2\right)
\,\right\} \\
&=
\left\{\,
\left(\tau,
\sum_{\substack{
(\tau_1,w_1)\in \iwn{\dw{WN_1}} \\
(\tau_2,w_2)\in \iwn{\dw{WN_2}} \\
\tau=\tau_1\kand\tau_2
}} w_1w_2\right)
\,\right\}.
\end{align*}

\begin{align*}
\iwn{\dw{WN^*}}
&=
\left\{\,
((pk_1,v)\cdots(pk_n,v),w)
\;\middle|\;
(pk_1\cdots pk_n,w)\in \dw{WN^*}(v)
\,\right\} \\
&=
\left\{\,
\left((pk_1,v)\cdots(pk_n,v),
\sum_{(pk_1\cdots pk_n,w)\in \dw{WN^n}(v)} w\right)
\;\middle|\; n\ge 0
\,\right\} \\
&=
\left\{\,
\left(\tau,
\sum_{(\tau,w)\in \iwn{\dw{WN^n}}} w\right)
\;\middle|\; n\ge 0
\,\right\}, \\
&\qquad
\text{where } WN^0=\lift{1},
\quad WN^{n+1}=WN^n\wand WN.
\end{align*}

This completes the proof.
\end{proof}
\begin{theorem}[\(PkR\) Construction]
For every parametric NetKAT expression \(PkR\), we can construct a NetKAT automaton \(M\) such that
\[
L(M)=\ipn{\dpn{PkR}}.
\]
\end{theorem}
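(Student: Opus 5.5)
We construct a two-state automaton directly from the set-based semantics of the packet relation, not by structural recursion on the syntax of \(PkR\). Define \(M=(S,S_0,S_f,\Delta)\) over the extended packet space \(Pk\times V\) with \(S=\{s_0,s_1\}\), \(S_0=\{s_0\}\), \(S_f=\{s_1\}\). Set
\[
\Delta(s_0,s_1)=\ipkr{\dpkr{PkR}}=\{\,((pk_1,v),(pk_2,v)) \mid (pk_1,pk_2)\in\dpkr{PkR}(v)\,\},
\]
and \(\Delta(s,s')=\emptyset\) for every other pair of states. This is a legitimate NetKAT automaton because transitions may be arbitrary relations in \(2^{(Pk\times V)\times(Pk\times V)}\). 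This freedom is the reason the parameterized atoms \(f=x\) and \(f\leftarrow x\) need no syntactic enumeration over \(V\).

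Next we compute \(L(M)\) from the labelled transition system. The only state reachable from \(s_0\) besides \(s_0\) itself is \(s_1\), and \(s_1\) has no outgoing transitions. Hence the derivations \((s_0,x_0)\stackrel{\tau}{\longrightarrow}(s_1,x_n)\) are exactly those built from one inductive step followed by the base case. Each such derivation requires \(\tau=x_1\) with \((x_0,x_1)\in\Delta(s_0,s_1)\). The empty-word base case ends in \(s_0\notin S_f\), so it contributes nothing. Therefore
\[
L(M)=\{\,x_0x_1 \mid (x_0,x_1)\in\ipkr{\dpkr{PkR}}\,\}.
\]
By the first clause of Theorem~\ref{thm:homo} (equivalently, by unfolding \(\dpn{PkR}(v)=\{pk_1pk_2\mid(pk_1,pk_2)\in\dpkr{PkR}(v)\}\) and the definitions of \(\ipn{\cdot}\) and \(\ipkr{\cdot}\)), this set equals \(\ipn{\dpn{PkR}}\).

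The main obstacle is not correctness, which is immediate, but the sense in which the automaton is constructed. One must check that \(\ipkr{\dpkr{PkR}}\) is an effectively computable relation on \(Pk\times V\). We would argue by structural induction on \(Pred\) and \(PkR\) that each constructor maps to a computable (symbolically, BDD-representable) relation:
\begin{itemize}
\item \(f=x\) becomes \(\{(pk,v)\mid pk.f=v.x\}\);
\item \(x=c\) becomes \(\{(pk,v)\mid v.x=c\}\);
\item \(f\leftarrow x\) becomes \(\{((pk,v),(pk[f\leftarrow v.x],v))\}\);
\item composition is relational composition in which the \(V\)-component is shared;
\item union and intersection are set operations;
\item complement is taken inside \(\{((pk_1,v),(pk_2,v))\}\) rather than inside all pairs over \(Pk\times V\).
\end{itemize}
The complement case needs the most care. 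It ensures the valuation stays constant across the transition, which is the invariant the later algorithms rely on. Because \(\mathsf{Val}\), \(\mathsf{Fld}\) and \(\mathsf{Par}\) are finite, all these sets are finite, and the construction is effective.
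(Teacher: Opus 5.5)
Your proposal is correct and is essentially the paper's proof: the same two-state automaton with $S_0=\{s_0\}$, $S_f=\{s_1\}$, $\Delta(s_0,s_1)=\ipkr{\dpkr{PkR}}$ and every other transition empty. You go further than the paper, which simply asserts the result, by explicitly deriving $L(M)$ from the transition rules and by sketching why $\ipkr{\dpkr{PkR}}$ is effectively representable. In that sketch, the $Pred_1\times Pred_2$ case needs the same care as complement: it should pair only packets carrying the same valuation, not form the full product over $Pk\times V$.
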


\begin{proof}
Let
\[
M=(\{s_0,s_1\},S_0,S_f,\Delta),
\]
where \(s_0\) is the initial state and \(s_1\) is the accepting state.
More precisely, let
\[
S_0=\{s_0\},
\qquad
S_f=\{s_1\},
\]
and define the transition relation by
\[
\Delta(s_0,s_1)=\ipkr{\dpkr{PkR}},
\]
while
\[
\Delta(s_0,s_0)=
\Delta(s_1,s_0)=
\Delta(s_1,s_1)=\emptyset.
\]
Then \(M\) accepts exactly the traces corresponding to \(\ipn{\dpn{PkR}}\), as required.
\end{proof}

\subsection{Emptiness Check}

\begin{theorem}
The emptiness-checking algorithm in Algorithm~\ref{alg:emptiness} is correct with respect to its output specification.
\end{theorem}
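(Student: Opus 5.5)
Our approach is to prove termination, soundness ($V_{\mathit{sol}}\subseteq\{v\mid \dpn{PN}(v)\neq\emptyset\}$) and completeness (the reverse inclusion) separately. Throughout we rely on one structural fact supplied by Theorem~\ref{thm:correct-compile}: $L(M)=\ipn{\dpn{PN}}$, and every transition relation $\Delta(s,s')$ produced by the construction only relates pairs of the form $((pk_1,v),(pk_2,v))$ with the same $v$. This holds because the base automata use $\ipkr{\cdot}$ and the existing constructions only compose, union, or intersect such relations. It follows that a run of $M$ never changes the valuation component. Combined with the definition of $L(M)$, this gives the key characterization: $v\in\{v\mid\dpn{PN}(v)\neq\emptyset\}$ iff there exist $s_0\in S_0$, $s_f\in S_f$ and a run $(s_0,(pk_0,v))\stackrel{\tau}{\longrightarrow}(s_f,(pk_n,v))$ with $|\tau|\ge1$. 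We also note the statement's implicit assumption that traces have length $\ge 2$. If initial states happen to be final, the algorithm would add all of $V$ at the first step, so we either assume $S_0\cap S_f=\emptyset$ or accept this convention as inherited from the compilation.

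\emph{Termination.} Each $Reach(s)$ grows monotonically inside the finite set $Pk\times V$, and $V_{\mathit{sol}}$ grows monotonically inside the finite $V$. The loop therefore stabilizes after finitely many iterations.

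\emph{Soundness.} We prove the invariant that $(pk,v)\in Reach(s)$ implies either $s\in S_0$, or there is a run from some $(s_0,(pk_0,v))$ with $s_0\in S_0$ to $(s,(pk,v))$ along a nonempty trace. The proof is by induction on the iterations of step 2(b): a newly added $(pk_2,v)$ arises from some $(pk_1,v)\in Reach(s)$ and a transition in $\Delta(s,s')$, so we extend the run by the inductive case of the transition semantics. The filter $v\notin V_{\mathit{sol}}$ only removes additions, so it cannot break the invariant. When $v$ is added at a final state $s$ in step 2(a), the invariant gives an accepting run, and hence $v$ is in the target set by the characterization above.

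\emph{Completeness.} This is the main obstacle, because the early-exit filter stops propagation and we must show it never loses a valuation. Fix $v$ with $\dpn{PN}(v)\neq\emptyset$, and suppose for contradiction that $v\notin V_{\mathit{sol}}$ at the fixed point. Then the filter never blocked $v$, since $V_{\mathit{sol}}$ only grows. Take an accepting run $(s_0,(pk_0,v))\to(s_1,(pk_1,v))\to\cdots\to(s_n,(pk_n,v))$ with $s_n\in S_f$. By induction on $i$, $(pk_i,v)\in Reach(s_i)$: the base case holds by initialization, and each step holds because once $(pk_i,v)$ enters $Reach(s_i)$, that state counts as changed, and step 2(b) then adds $(pk_{i+1},v)$ unless $v$ is already in $V_{\mathit{sol}}$. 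Hence $(pk_n,v)\in Reach(s_n)$ with $s_n\in S_f$. When that entry was added, $s_n$ changed, so step 2(a) put $v$ into $V_{\mathit{sol}}$, which is a contradiction.

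The delicate point is the worklist bookkeeping of ``changed'' states. We plan to formalize it as an invariant holding at every fixed point: for every $s$, every $(pk,v)\in Reach(s)$ with $v\notin V_{\mathit{sol}}$, and every $((pk,v),(pk',v))\in\Delta(s,s')$, we have $(pk',v)\in Reach(s')$; and every final state's entries have their valuations in $V_{\mathit{sol}}$. The argument above then becomes a simple closure argument using this invariant.
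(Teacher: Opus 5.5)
Your proposal is correct and follows essentially the same route as the paper. A reachability invariant on $Reach$ gives soundness, and an induction along an accepting run $(s_0,(pk_0,v))\to\cdots\to(s_n,(pk_n,v))$ gives completeness, with the early-exit filter harmless because $v$ would otherwise already be in $V_{\mathit{sol}}$. Your termination argument, the explicit fixed-point closure invariant, and the $S_0\cap S_f$ remark tighten points the paper glosses over; that remark actually follows from $L(M)=\ipn{\dpn{PN}}$, since traces have length at least $2$. You also do not need the claim that every $\Delta(s,s')$ preserves $v$: step 2(b) only follows same-valuation pairs, and $L(M)=\ipn{\dpn{PN}}$ already forces accepting runs to be uniform in $v$.
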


\begin{proof}
Let \(M=(S,S_0,S_f,\Delta)\) be the input automaton, and recall that
\[
L(M)=\ipn{\dpn{PN}}.
\]
We must show that the algorithm returns exactly the set
\[
\{\, v \mid \dpn{PN}(v)\neq \emptyset \,\}.
\]

We first establish the following invariant: after every iteration of the algorithm, for every state \(s\in S\),
\[
Reach(s)\subseteq \{\, (pk,v)\mid (s_0,(pk_0,v)) \to^* (s,(pk,v)) \text{ for some } s_0\in S_0,\ pk_0\in Pk \,\},
\]
that is, every pair in \(Reach(s)\) is genuinely reachable from some initial state.
This holds initially by definition of \(Reach\): if \(s\in S_0\), then \(Reach(s)=Pk\times V\), which corresponds exactly to the possible initial configurations; otherwise \(Reach(s)=\emptyset\).
Moreover, every update in Step~(2)(b) adds a pair \((pk_2,v)\) to \(Reach(s')\) only when there exists \((pk_1,v)\in Reach(s)\) and
\[
((pk_1,v),(pk_2,v))\in \Delta(s,s').
\]
Hence every newly added element is reachable by one further transition.
Thus the invariant is preserved.

At termination, since the algorithm iterates until no \(Reach(s)\) changes, \(Reach\) is the least fixed point generated by the transition relation \(\Delta\), restricted to valuations not already in \(V_{\mathit{sol}}\).
Because the translated semantics preserves valuations along transitions, every run of \(M\) has the form
\[
(s_0,(pk_0,v)) \to (s_1,(pk_1,v)) \to \cdots \to (s_n,(pk_n,v)),
\]
with the same valuation \(v\) throughout.
This is exactly the content of the translated semantics \(L(M)=\ipn{\dpn{PN}}\).

We now prove soundness and completeness.

\emph{Soundness.}
Suppose \(v\in V_{\mathit{sol}}\) when the algorithm terminates.
By construction, \(v\) is added to \(V_{\mathit{sol}}\) only in Step~(2)(a), so there must exist some final state \(s\in S_f\) and some packet \(pk\in Pk\) such that
\[
(pk,v)\in Reach(s).
\]
By the reachability invariant above, there is a run of \(M\) from some initial state to \(s\) ending in \((pk,v)\).
Since \(s\in S_f\), this run is accepting.
Therefore there exists a trace in \(L(M)\) whose valuation component is \(v\).
Using \(L(M)=\ipn{\dpn{PN}}\), this means exactly that
\[
\dpn{PN}(v)\neq \emptyset.
\]
Hence every valuation returned by the algorithm satisfies the output specification.

\emph{Completeness.}
Suppose now that
\[
\dpn{PN}(v)\neq \emptyset.
\]
Then, since \(L(M)=\ipn{\dpn{PN}}\), there exists an accepting run of \(M\) labeled by some trace
\[
(pk_0,v)(pk_1,v)\cdots(pk_n,v)
\]
ending in a final state \(s_n\in S_f\).
We show by induction on the length of this run that each configuration along the run is eventually added to the corresponding \(Reach\) set.

For the initial configuration, since the run starts in some \(s_0\in S_0\), we have
\[
(pk_0,v)\in Reach(s_0)
\]
from initialization.
Now assume \((pk_i,v)\in Reach(s_i)\) has been added.
Because the run follows a transition
\[
((pk_i,v),(pk_{i+1},v))\in \Delta(s_i,s_{i+1}),
\]
Step~(2)(b) will eventually add \((pk_{i+1},v)\) to \(Reach(s_{i+1})\), unless \(v\) has already been placed in \(V_{\mathit{sol}}\).
But in that case we are already done.
Thus, by induction, either \(v\) is already in \(V_{\mathit{sol}}\), or eventually \((pk_n,v)\in Reach(s_n)\) for the accepting state \(s_n\in S_f\).
Then Step~(2)(a) adds \(v\) to \(V_{\mathit{sol}}\).

Therefore every valuation \(v\) such that \(\dpn{PN}(v)\neq\emptyset\) is eventually returned by the algorithm.

Combining soundness and completeness, we conclude that the algorithm returns exactly
\[
\{\, v \mid \dpn{PN}(v)\neq \emptyset \,\},
\]
as required.
\end{proof}

\subsection{Aggregation}

\begin{theorem}
The aggregate-sum algorithm in Algorithm~\ref{alg:aggregate} is correct with respect to its output specification.
\end{theorem}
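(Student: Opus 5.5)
The plan is to reduce correctness to two facts. First, the state-elimination step computes, for every pair of extended packets, the total weight of all runs between them. Second, the valuation component is invariant along runs of \(WM\), so grouping by \(v\) recovers exactly the per-valuation aggregate.

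\emph{Step 1: state elimination.} I would first state the invariant of state elimination \cite{Brzozowski1963SignalFG}, interpreted over weighted transitions with the \(+\), \(\cdot\), \((-)^*\) operations of \cite{acevedo26weight}. After adding a fresh source \(s_I\) (edges weighted by \(I\)) and a fresh sink \(s_F\) (edges weighted by \(F\)), consider any non-eliminated states \(p,q\) and extended packets \(a,b\). The current label on \(p\to q\) at \((a,b)\) equals the sum, over all runs from \((p,a)\) to \((q,b)\) whose intermediate states are all already eliminated, of the accumulated run weights. This is proved by induction on the number of eliminated states. Eliminating \(s\) replaces \(\Delta(p,q)\) by \(\Delta(p,q)+\Delta(p,s)\cdot\Delta(s,s)^*\cdot\Delta(s,q)\). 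Each run through \(s\) decomposes uniquely by its visits to \(s\), and the closure \(e^*=\sum_i e^i\) together with \(\omega\)-continuity of \(+\) and \(\cdot\) justifies distributing the countable sums.

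Consequently, \(\mathit{state\_elimination}((pk_1,v_1),(pk_2,v_2))\) equals \(\sum I(s_0)\,w'\,F(s_n)\), taken over all runs \((s_0,(pk_1,v_1))\eqby{\tau}{w'}{\longrightarrow}(s_n,(pk_2,v_2))\). Summing over the possible traces \(\tau\) and using the definition of \(L(WM)\), this is the sum of \(w\) over all \((\tau,w)\in L(WM)\) whose first extended packet is \((pk_1,v_1)\) and last is \((pk_2,v_2)\).

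\emph{Step 2: valuation preservation.} By Theorem~\ref{thm:correct-compile}, \(L(WM)=\iwn{\dw{WN}}\). Every trace carrying nonzero weight has the form \((pk_1,v)\cdots(pk_n,v)\), and its weight is the weight of \(pk_1\cdots pk_n\) in \(\dw{WN}(v)\). Hence the entries with \(v_1\neq v_2\) contribute nothing (weight \(0\)). For fixed \(v\), summing \(\mathit{state\_elimination}((pk_1,v),(pk_2,v))\) over all \(pk_1,pk_2\) partitions the traces of \(\iwn{\dw{WN}}\) carrying valuation \(v\) by their endpoints. Reassociating the countable sum, which is legitimate since the semiring admits countable sums and \(+\) is commutative and continuous, gives \(\sum_{(\tau,w')\in\dw{WN}(v)}w'\), which is exactly the output specification.

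\emph{Main obstacle.} I expect the main obstacle to be Step 1, specifically the handling of infinite sums. One must check that the weighted-matrix star computes the countable sum over unboundedly many loop iterations, and that regrouping runs first by state sequence and then by endpoints does not change the value. The plan is to rely on the \(\omega\)-continuity assumptions, under which arbitrary countable sums are invariant under partitioning. Two further points need care: that the base-case weight \(1\) and the initial/final weights are handled by the fresh source/sink, and that the zero-weight entries \((\tau,0)\) in the semantics are harmless because they contribute \(0\) to every sum.
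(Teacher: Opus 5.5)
Your proposal is correct and follows essentially the same route as the paper's proof. Step 1 is an induction on the number of eliminated states that justifies the update rule \(E_T(x,y)+E_T(x,s)\,E_T(s,s)^*\,E_T(s,y)\). Step 2 uses Theorem~\ref{thm:correct-compile} to show that the valuation is invariant along traces, so entries with \(v_1\neq v_2\) vanish and summing over \(pk_1,pk_2\) for fixed \(v\) gives the per-valuation aggregate. You handle the initial and final weights with a fresh source and sink, and justify countable-sum regrouping through \(\omega\)-continuity; both are details the paper leaves implicit, but they do not change the argument.
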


\begin{proof}
Let
\[
WM=(S,I,F,\Delta)
\]
be the input Weighted NetKAT automaton, and recall that
\[
L(WM)=\iwn{\dw{WN}}.
\]
We must show that Algorithm~\ref{alg:aggregate} returns exactly
\[
\left\{\, \left(v,\sum_{(\tau,w')\in \dw{WN}(v)} w'\right) \;\middle|\; v\in V \right\}.
\]

The proof has two parts.
First, we show that the state-elimination procedure computes, for each input-output packet pair, the sum of the weights of all traces connecting them.
Second, we show that summing the resulting weights over all input and output packets with the same valuation \(v\) yields the desired aggregate for \(v\).

\medskip
\noindent
\textbf{Step 1: correctness of state elimination.}
For any subset \(T\subseteq S\) of states and any pair of packets \(x,y\in Pk\times V\), let
\[
E_T(x,y)
\]
denote the total weight of all paths from \(x\) to \(y\) whose intermediate automaton states lie in \(T\).
We prove, by induction on the number of eliminated states, that the state-elimination algorithm maintains exactly these quantities.

Initially, before any elimination, the transition relation \(\Delta\) already gives the weight of all one-step transitions.
Equivalently, if no intermediate states are allowed, then the corresponding matrix entry records exactly the total weight of all paths from \(x\) to \(y\) with no intermediate states.

Now suppose we eliminate a state \(s\).
Any path from \(x\) to \(y\) whose intermediate states are drawn from \(T\cup\{s\}\) is of one of the following two forms:
\begin{enumerate}
    \item it never visits \(s\), in which case its total contribution is already accounted for by \(E_T(x,y)\); or
    \item it visits \(s\) at least once, in which case it uniquely decomposes into:
    \begin{itemize}
        \item a path from \(x\) to \(s\) with intermediate states in \(T\),
        \item followed by zero or more loops from \(s\) back to \(s\) with intermediate states in \(T\),
        \item followed by a path from \(s\) to \(y\) with intermediate states in \(T\).
    \end{itemize}
\end{enumerate}
Therefore the total weight of all such paths is
\[
E_T(x,y) \;+\; E_T(x,s)\cdot E_T(s,s)^* \cdot E_T(s,y).
\]
This is exactly the usual state-elimination update rule.
Hence, after eliminating \(s\), the new transition weight from \(x\) to \(y\) is precisely the total weight of all paths from \(x\) to \(y\) whose intermediate states avoid \(s\) but may use any previously retained states.

By induction over the elimination order, once all intermediate states have been eliminated, the final transition
\[
\mathit{state\_elimination}(x,y)
\]
is exactly the total weight of all accepting traces from input packet \(x\) to output packet \(y\).
Instantiating \(x=(pk_1,v_1)\) and \(y=(pk_2,v_2)\), we obtain that
\[
\mathit{state\_elimination}((pk_1,v_1),(pk_2,v_2))
\]
is the sum of the weights of all traces in \(L(WM)\) that start from \((pk_1,v_1)\) and end at \((pk_2,v_2)\).

\medskip
\noindent
\textbf{Step 2: aggregation by valuation.}
By Theorem~\ref{thm:correct-compile}, we have
\[
L(WM)=\iwn{\dw{WN}}.
\]
In particular, every accepted weighted trace has the form
\[
(pk_1,v)(pk_2,v)\cdots(pk_n,v),
\]
that is, the valuation component \(v\) is unchanged along the entire trace.
Therefore, every trace contributing to
\[
\mathit{state\_elimination}((pk_1,v_1),(pk_2,v_2))
\]
must satisfy \(v_1=v_2\); if \(v_1\neq v_2\), there is no such trace and the contribution is \(0\).

It follows that, for a fixed valuation \(v\), the quantity
\[
\sum_{pk_1,pk_2\in Pk} \mathit{state\_elimination}((pk_1,v),(pk_2,v))
\]
is exactly the sum of the weights of all traces in \(L(WM)\) whose valuation component is \(v\).
Using again the equality \(L(WM)=\iwn{\dw{WN}}\), this is exactly
\[
\sum_{(\tau,w')\in \dw{WN}(v)} w'.
\]

Hence the algorithm returns precisely
\[
\left\{\,
\left(v,\sum_{pk_1,pk_2\in Pk} \mathit{state\_elimination}((pk_1,v),(pk_2,v))\right)
\;\middle|\;
v\in V
\right\}
=
\left\{\, \left(v,\sum_{(\tau,w')\in \dw{WN}(v)} w'\right) \;\middle|\; v\in V \right\},
\]
which is the required output specification.
\end{proof}

\subsection{Implementation Details}

\begin{algorithm}[t]
\caption{\(\textsc{ClosureAdd}(R)\): Matrix Closure Algorithm on ADDs}
\label{alg:mat-clo}
    \textbf{Input:} An ADD \(R\), where the variables
    \(
    v_0,v_1,\dots,v_n
    \)
    encode the row index of a matrix, and
    \(
    v_0',v_1',\dots,v_n'
    \)
    encode the column index.
    The leaf value of \(R\) is the semiring weight of the corresponding matrix entry.

     \textbf{Output:} An ADD encoding \(R^*\), the Kleene closure of the matrix represented by \(R\).
\begin{enumerate}
    \item If \(R = \textsc{Leaf}(w)\), then return
    \[
    \textsc{Leaf}(w^*).
    \]

    \item Otherwise, \(R\) is an internal node \(\textsc{Node}(v_i,n_1,n_2)\) or \(\textsc{Node}(v_i',n_1,n_2)\). In this case:
    \begin{itemize}
        \item Compute the four block cofactors
        \[
        A \gets \exists v_i,v_i'.\, R[v_i=0,v_i'=0], \qquad
        B \gets \exists v_i,v_i'.\, R[v_i=0,v_i'=1],
        \]
        \[
        C \gets \exists v_i,v_i'.\, R[v_i=1,v_i'=0], \qquad
        D \gets \exists v_i,v_i'.\, R[v_i=1,v_i'=1].
        \]

        \item Recursively compute
        \[
        D^* \gets \textsc{ClosureAdd}(D),
        \]
        \[
        E \gets \textsc{ClosureAdd}(A + B\cdot D^*\cdot C).
        \]

        \item Define
        \[
        F \gets E\cdot B\cdot D^*, \qquad
        G \gets D^*\cdot C\cdot E,
        \]
        \[
        H \gets D^* + D^*\cdot C\cdot E\cdot B\cdot D^*.
        \]

        \item Reconstruct the ADD using Conway's block-matrix star formula:
        \[
        \textsc{Return }\;
        E[v_i=0,v_i'=0]
        + F[v_i=0,v_i'=1]
        + G[v_i=1,v_i'=0]
        + H[v_i=1,v_i'=1].
        \]
    \end{itemize}
\end{enumerate}
\end{algorithm}

\subsection{Batfish Topology}
\begin{figure}
    \centering
\includegraphics[width=1.0\textwidth]{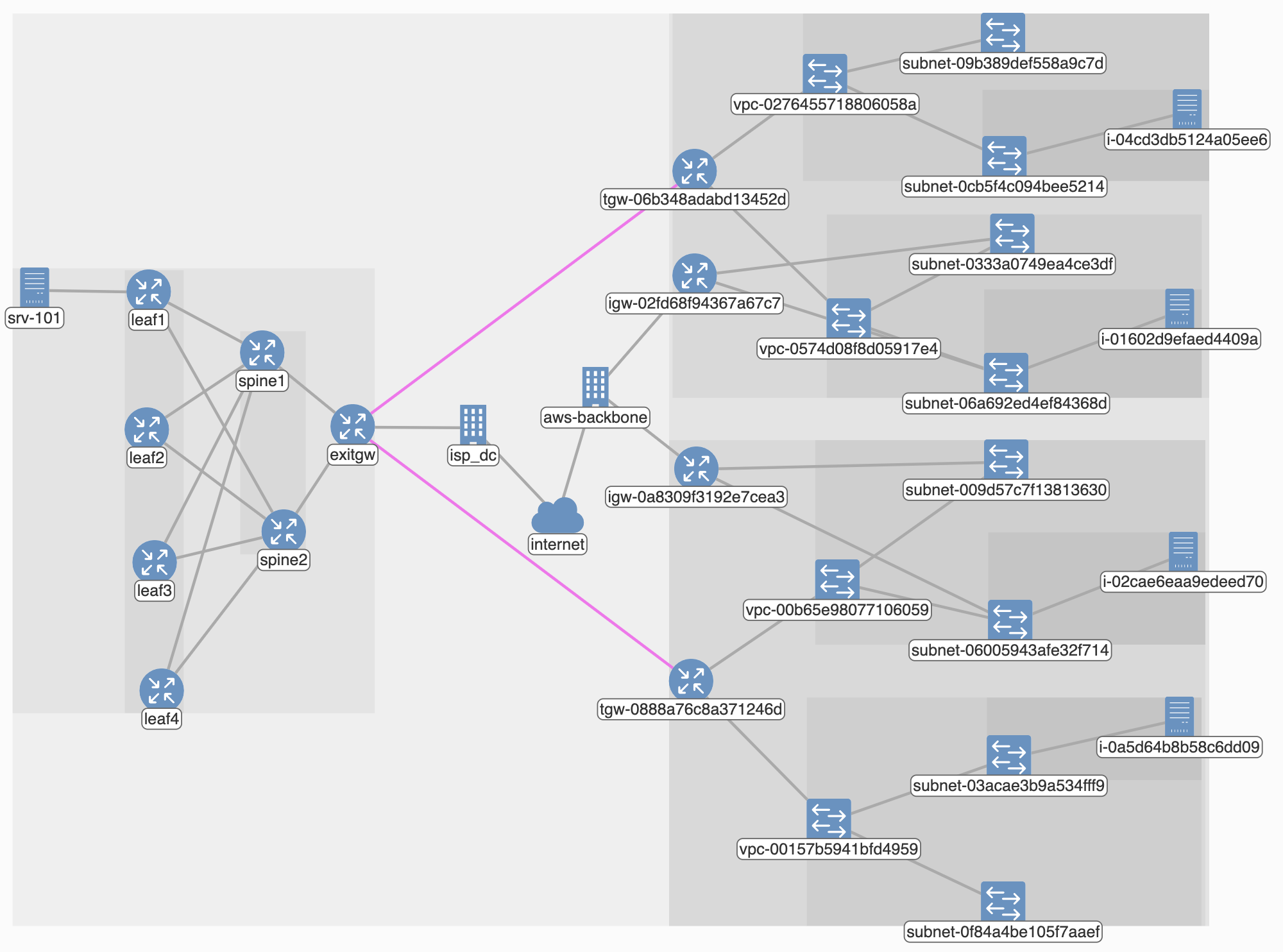}
    \caption{Topology of Hybrid Cloud Network of Batfish}
    \label{fig:batfish}
\end{figure}
\begin{figure}
    \centering
\includegraphics[width=1.0\textwidth]{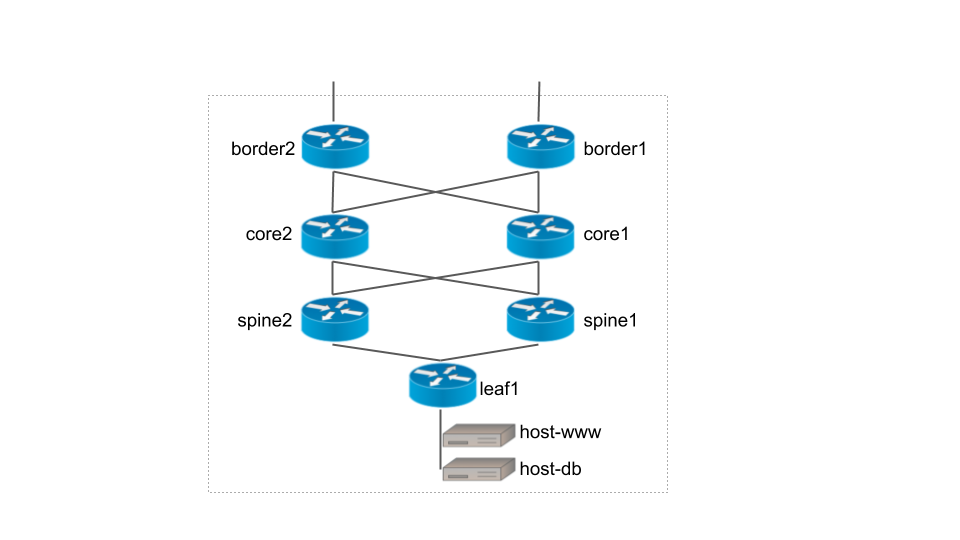}
    \caption{Topology of  Forwarding Change Validation of Batfish}
    \label{fig:batfish2}
\end{figure}

\end{document}